\numberwithin{equation}{section}
\numberwithin{figure}{section}
\theoremstyle{plain}
\newtheorem{theorem}{Theorem}[section]
\newtheorem{remark}{Remark}
\newtheorem{corollary}{Corollary}
\newtheorem{lemma}{Lemma}
\newtheorem{example}{Example}
\newtheorem{thmx}{Theorem}
\newtheorem*{assumption*}{Assumptions}
\newcommand{\R}{\mathds{R}}
\renewcommand{\P}{\mathds{P}}
\DeclareMathOperator{\rE}{\mathds{E}}
\DeclareMathOperator{\rP}{\mathds{P}}
\newcommand{\re}{\mathrm{e}}
\newcommand{\rd}{\,\mathrm{d}}
\newcommand{\rate}[1]{\langle \nabla U(x),#1\rangle}
\newcommand{\partiald}[1]{\frac{\partial}{\partial x_#1   } }
\newcommand{\ZZ}{\mathcal{Z}}
\newcommand{\BB}{\mathcal{B}}
\newcommand{\lref}{\lambda_{\mathrm{ref}}}
\newcommand{\llll}{\bar{\lambda}}
\newcommand{\grad}{\nabla}
\setlist[enumerate,1]{label=(\textsc{\alph*}),ref=(\textsc{\alph*})}
\setlist[enumerate,2]{label=(\roman*),ref=(\textsc{\alph{enumi}})-(\roman*)}
\title[Exponential Ergodicity of the Bouncy Particle Sampler]{Exponential Ergodicity of the Bouncy Particle Sampler}
\author{ George Deligiannidis}
\address{Department of Mathematics, King's College London, UK}
\email{george.deligiannidis@kcl.ac.uk}
\author{Alexandre Bouchard-C\^ot\'e}
\address{Department of Statistics, University of British Columbia, Canada}
\email{bouchard@stat.ubc.ca}
\author{Arnaud Doucet}
\address{Department of Statistics, University of Oxford, UK}
\email{doucet@stats.ox.ac.uk}
\keywords{Change of variable, geometric ergodicity, Markov chain Monte Carlo, piecewise deterministic Markov process, central limit theorem}
\begin{document}
\maketitle

\begin{abstract}
Non-reversible Markov chain Monte Carlo schemes based on piecewise deterministic Markov processes have been recently introduced in applied probability, automatic control, physics and statistics. Although these algorithms demonstrate experimentally good performance and are accordingly increasingly used in a wide range of applications, geometric ergodicity results for such schemes have only been established so far under very restrictive assumptions. We give here verifiable conditions on the target distribution under which the Bouncy Particle Sampler algorithm introduced in \cite{P_dW_12} is geometrically ergodic. This holds whenever the target satisfies a curvature condition and has tails decaying at least as fast as an exponential and at most as fast as a Gaussian distribution. This allows us to provide a central limit theorem for the associated ergodic averages. When the target has tails thinner than a Gaussian distribution, we propose an original modification of this scheme that is geometrically ergodic. For thick-tailed target distributions, such as $t$-distributions, we extend the idea pioneered in \cite{J_G_12} in a random walk Metropolis context. We apply a change of variable to obtain a transformed target satisfying the tail conditions for geometric ergodicity. By sampling the transformed target using the Bouncy Particle Sampler and mapping back the Markov process to the original parameterization, we obtain a geometrically ergodic algorithm.
\end{abstract}

\section{Introduction}
Let $\bar{\pi}(\rd x)$ be a Borel probability measure on $\mathds{R}^d$ admitting a density $\bar{\pi}(x) = \exp\{-U(x)\}/\zeta $ with respect to the Lebesgue measure $\rd x$ where  $U:\mathds{R}^d\mapsto [0,\infty)$ is a potential function with locally Lipschitz second derivatives. We assume that this potential function can be evaluated pointwise while $\zeta$ is intractable. In this context, one can sample from $\bar{\pi}(\rd x)$ and compute expectations with respect to this measure using Markov chain Monte Carlo (MCMC) algorithms. A wide range of MCMC schemes have been proposed over the past 60 years since the introduction of the Metropolis algorithm.

In particular, non-reversible MCMC algorithms based on piecewise deterministic Markov processes \cite{D_84,D_93} have recently emerged in applied probability \cite{B_R_16,DM_P_16,F_G_M_16,M_16}, automatic control \cite{M_H_10,M_H_12}, physics \cite{K_K_16,M_K_K_14,P_dW_12} and statistics \cite{B_F_R_16,BC_D_V_15,S_T_17,V_BC_D_D_17,W_R_17}. These algorithms perform well empirically so they have already found many applications; see, e.g., \cite{DM_P_16,K_K_16,I_K_15,N_H_16}. However, to the best of our knowledge, quantitative convergence rates for this class of MCMC algorithms have only been established under stringent assumptions: \cite{M_H_10} establishes geometric ergodicity of such a scheme but only for targets with exponentially decaying tails, \cite{M_16} obtains sharp results but requires the state-space to be compact, while \cite{B_D_16,B_R_16,F_G_M_16} consider targets on the real line. 
Similar restrictions apply to limit theorems for ergodic averages, where for example in \cite{B_D_16}, a Central Limit Theorem (CLT) has been obtained but this result is restricted to targets on the real line. Establishing exponential ergodicity and a CLT under weaker conditions is of interest theoretically but also practically as it lays the theoretical foundations justifying calibrated confidence intervals around Monte Carlo estimates (for a review, see, e.g.~\cite{J_H_01}).

We focus here on the Bouncy Particle Sampler algorithm (\textsf{BPS}), a piecewise deterministic MCMC scheme proposed in \cite{P_dW_12} and previously studied in \cite{BC_D_V_15,M_16}, as it has been observed to perform empirically very well when compared to other state-of-the-art MCMC algorithms \cite{BC_D_V_15,P_dW_12}. In addition it has recently been shown in \cite{V_BC_D_D_17} that \textsf{BPS} is the scaling limit of the (discrete-time) reflective slice sampling algorithm introduced in \cite{N_03}. In this paper we give conditions on the target distribution $\bar{\pi}$ under which \textsf{BPS} is geometrically ergodic. These conditions hold whenever the target satisfies a curvature condition and has {\textit{``regular tails"}, that is tails decaying at least as fast as an exponential and at most as fast as a Gaussian.

When the target has tails thinner than a Gaussian, we show how a simple modification of the original \textsf{BPS} provides a geometrically ergodic scheme. 
This modified \textsf{BPS} algorithm uses a position-dependent rate of refreshment. This modification is easy to implement.
 
In the presence of thick-tailed targets which do not satisfy these geometric ergodicity assumptions, we follow the approach adopted in \cite{J_G_12} for the random walk Metropolis algorithm. We perform a change-of-variable to obtain a transformed target verifying our conditions. \textsf{BPS} is then used to sample this transformed target. By mapping back this process to the original parameterization, we obtain a geometrically ergodic algorithm.

We henceforth restrict our attention to dimensions $d \geq 2$; for $d=1$ \textsf{BPS} coincides with the \textsf{Zig-Zag} process and the one-dimensional \textsf{Zig-Zag} process has been shown to be geometrically ergodic under reasonable assumptions in \cite{B_R_16}.

The rest of the paper is structured as follows.  Section~\ref{sec:prelim} contains background information on continuous-time Markov processes, exponential ergodicity and \textsf{BPS}. The main results are stated in Section~\ref{sec:main}. Section~\ref{sec:properties} establishes several useful ergodic properties of \textsf{BPS} and of its novel variants proposed here. The proofs of the main results can be found in Section~\ref{sec:proof}. 
\section{Background and notation\label{sec:prelim}}
Let $\{Z_t:t\geq 0\}$ denote a time-homogeneous, continuous-time Markov process on a topological space $(\mathcal{Z}, \mathcal{B}(\mathcal{Z}))$, where $\mathcal{B}(\mathcal{Z})$ is the Borel $\sigma$-field of $\mathcal{Z}$, and denote its transition semigroup with $\{P^t:t\geq 0\}$. For every initial condition $Z_0:=z \in \mathcal{Z}$, the process $\{Z_t:t\geq 0\}$ is defined on a filtered probability space $\left(\Omega, \mathcal{F}, \{\mathcal{F}_t\}, \mathds{P}^{z}\right)$, with $\{\mathcal{F}_t\}$ the natural filtration, such that for any $n>0$, times $0<t_1<t_2<\dots < t_n$  and any $B_1, \dots, B_n\in \mathcal{B}(\mathcal{Z})$ we have
$$\mathds{P}^z\left\{ Z_{t_1} \in B_1, \dots, Z_{t_n} \in B_n \right\}
= \int_{B_1} P^{t_1}(z, \rd z_1)\prod_{j=2}^n \int_{B_j} P^{t_j-t_{j-1}}(z_{j-1}, \rd z_j).$$
We write $\mathds{E}^z$ to denote expectation with respect to $\mathds{P}^z$.

Let $\mathfrak{B}(\mathcal{Z})$ denote the space of bounded measurable functions on $\mathcal{Z}$, which
is a Banach space with respect to the norm $\|f \|_\infty:=\sup_{z\in \mathcal{Z}}|f(z)|$. We also write $\mathcal{M}(\mathcal{Z})$ for the space of $\sigma$-finite, signed measures on $(\mathcal{Z}, \mathcal{B}(\mathcal{Z}))$. 
Given 
a measurable function $V:\mathcal{Z} \to [1,\infty)$, we define a metric on $\mathcal{M}(\mathcal{Z})$ through 
$$\|\mu\|_V := \sup_{|f|\leq V} | \mu(f)|.$$

For $t\geq 0$, we define an operator $P^t: \mathfrak{B}(\mathcal{Z})\to  \mathfrak{B}(\mathcal{Z})$ through $P^t f(z) = \int P^t(z, \rd w) f(w)$. 
We will slightly abuse notation by letting $P^t$ also denote the dual operator acting on $\mathcal{M}(\mathcal{Z})$ through $\mu P^t (A) = \int_{z\in \mathcal{Z}} \mu(\rd z) P^t(z, A)$, for all $A\in \mathcal{B}(\mathcal{Z})$.
 A $\sigma$-finite measure $\pi$ on $\BB(\ZZ)$ is called \textit{invariant} for $\{P^t:t\geq 0\}$, if $\pi P^t = \pi$ for all $t\geq 0$. 

\subsection{Exponential ergodicity of continuous-time processes}

Suppose  that a Borel probability measure $\pi$ is invariant for $\{P^t:t\geq 0\}$.  We are interested in the exponential convergence of the process in the sense of $V$-\textit{uniform ergodicity}: that is there exists a measurable function $V:\mathcal{Z} \to [1,\infty)$ and constants $D<\infty$, $\rho<1$, such that 
\begin{equation}\label{eq:vuniform}
\|P^t(z, \cdot) - \pi(\cdot)\|_V \leq V(z) D \rho^t, \qquad t\geq 0,
\end{equation}
The proof of $V$-uniform ergodicity usually proceeds through the verification of an appropriate \textit{drift condition} which is often expressed  in terms of the \textit{(strong) generator} of the process (see for example \cite[pg. 28]{D_93}). However, in this paper, it will prove useful to focus on the \textit{extended generator} of the Markov process $\{Z_t:t\geq 0\}$ which is defined as follows.
Let $\mathcal{D}(\widetilde{\mathcal{L}})$ denote the set of measurable functions $f:\mathcal{Z}\to \mathds{R}$ for which there exists a measurable function $h:\mathcal{Z}\to \mathds{R}$ such that $t\mapsto h(Z_t)$ is integrable $\mathds{P}^z$-almost surely for each $z\in \mathcal{Z}$ and the process
$$f(Z_t)-f(z) - \int_0^t h(Z_s)\rd s,\qquad t\geq 0,$$
is a local $\mathcal{F}_t$-martingale.
Then we write $h=\widetilde{\mathcal{L}}f$ and we say that $(\widetilde{\mathcal{L}}, \mathcal{D}(\widetilde{\mathcal{L}}))$ is the \textit{extended generator} of the process $\{Z_t:t\geq 0\}$. This is an extension of the usual \textit{strong generator} associated with a Markov process; for more details see \cite{D_93} and references therein.
We will also need the concepts of \textit{irreducibility}, \textit{aperiodicity}, \textit{small sets} and \textit{petite sets} for which we refer the reader to 
\cite{D_M_T_95}.

\subsection{The Bouncy Particle Sampler}

We begin with some additional notation. We will consider $x\in \mathds{R}^d$ as a column vector and 
we will write $|\cdot|$ and $\langle \cdot, \cdot\rangle$ to denote the Euclidean norm and  scalar product in $\mathds{R}^d$ respectively, whereas { $\|A\| = \sup\{|Ax|:|x|=1\}$ will denote the operator norm of the matrix $A\in \mathds{R}^{d\times d}$}.
Let $B(x,\delta):=\{y\in \mathds{R}^d: |x-y|<\delta\}$. For a function $U:\mathds{R}^d \to \mathds{R}$ we write $\nabla U(x)$ and $\Delta U(x)$ for the gradient and the Hessian of $U(\cdot)$ respectively, evaluated at $x$ and we adopt the convention of treating $\nabla U(x)$ as a column vector. For a differentiable map $h:\mathbb{R}^d\to\mathbb{R}^d$ we will write $\nabla h$ for the Jacobian of $h$; that is, letting $h=(h_1, \dots, h_d)^T$, we have $(\nabla h)_{i,j}= \partial_{x_i} h_j$.}
Let us write $\psi$ for the uniform measure on $\mathds{S}^{d-1}:=\{v\in \mathbb{R}^d: |v|=1\}$, and let $\mathcal{Z}:=\mathds{R}^d\times \mathds{S}^{d-1}$, $\mathcal{B}(\mathcal{Z})$ be the Borel $\sigma$-field on $\mathcal{Z}$ and ${\pi}(\rd x, \rd v) := \bar{\pi}(\rd x) \psi(\rd v)$. Since $\pi$  admits $\bar{\pi}$ as a marginal of its invariant measure, we can use this scheme to approximate expectations with respect to $\bar{\pi}$. For $(x,v)\in \mathcal{Z}$, we also define
\begin{equation}\label{eq:bounceoperator}
R(x)v := v - 2 \frac{\langle \nabla U(x), v \rangle}{|\nabla U(x)|^2}\nabla U(x).
\end{equation}
The vector $R(x)v$ can be interpreted as a Newtonian collision on the hyperplane tangent to the gradient of the potential $U$, hence the interpretation of $x$ as a position, and $v$, as a velocity.

\textsf{BPS} defines a ${\pi}$-invariant, non-reversible, piecewise deterministic Markov process
$\{Z_t:t\geq 0\}=\{(X_t, V_t): t\geq 0\}$ taking values in $\mathcal{Z}$.
We introduce here a slightly more general version of \textsf{BPS} than the one discussed in \cite{B_BC_D_D_F_R_J_16,BC_D_V_15,M_16,P_dW_12}. Let
\begin{equation}\label{eq:lambda}
\bar{\lambda}(x,v) := \lref(x) + \lambda(x,v),    \qquad
\lambda(x,v):= \max\{0, \langle\nabla U(x), v \rangle\}=:\langle \nabla U(x), v \rangle_+,
\end{equation}
where the \textit{refreshment rate} $\lref(\cdot):\mathds{R}^d\mapsto (0,\infty)$ is allowed to depend on the location $x$.  Previous versions of \textsf{BPS} restrict attention to the case $\lref(x)=\lref$; the generalisation considered here will prove useful in establishing the geometric ergodicity of this scheme for thin-tailed targets.

Given any initial condition $z\in\mathcal{Z}$, a construction of a path of \textsf{BPS} is given in Algorithm \ref{algo:bps}. Various methods to simulate exactly $\{\tau_k: k\geq 1\}$ are discussed in \cite{B_BC_D_D_F_R_J_16,BC_D_V_15,P_dW_12}.
\begin{algorithm}
	\caption{ : Bouncy Particle Sampler algorithm}\label{algo:bps}
	\begin{algorithmic}[1]
		\State $(X_0, V_0) \gets (x,v)$
		\State $t_0 \gets 0$
		\For{$k = 1, 2, 3, \dots$}
		\State sample inter-event time $\tau_k$, where $\tau_k$ is a positive random variable such that
		$$\rP[\tau_k \geq t]=\exp\left\{-\int_{r=0}^t \bar{\lambda}(X_{t_{k-1}}+rV_{t_{k-1}}, V_{t_{k-1}})\rd r\right\}$$
		\State for $r \in (0, \tau_k)$, $(X_{t_{k-1} + r}, V_{t_{k-1} + r}) \gets (X_{t_{k-1}}+rV_{t_{k-1}}, V_{t_{k-1}} )$
		\State $t_k \gets t_{k-1} + \tau_k$ \Comment{Time of $k$-th event}
		\State $X_{t_k} \gets X_{t_{k-1}}+\tau_k V_{t_{k-1}}$
		\If {$U_k < \lambda(X_{t_k}, V_{t_{k-1}})/\bar{\lambda}(X_{t_k}, V_{t_{k-1}})$, where $U_k \sim \text{Uniform}(0, 1)$}
		\State $V_{t_k} \gets R(X_{t_{k}}) V_{t_{k-1}}$ \Comment{Newtonian collision on the gradient (``bounce'')}
		\Else
		\State $V_{t_k}\sim \psi$  \Comment{Refreshment of the velocity}
		\EndIf
		\EndFor
	\end{algorithmic}
\end{algorithm}
Equivalently, \textsf{BPS} can be defined as the Markov process on $\mathcal{Z}$ with infinitesimal generator defined by
\begin{equation}\label{eq:generator}
\mathcal{L} f(x,v) = \langle \nabla_x f(x,v), v\rangle +
\llll(x,v) \left[Kf\left(x, v\right) -f(x,v) \right],
\end{equation}
for $f \in \mathcal{D}(\mathcal{L})$, the domain of $\mathcal{L}$,
where the transition kernel $K:\mathcal{Z}\times \mathcal{B}(\mathcal{Z})\mapsto [0,1]$ is defined through
\begin{equation}\label{eq:kernelK}
K\left((x,v), (\rd y, \rd w)\right) =
\frac{\lref(x)}{\llll(x,v)} \delta_x(\rd y) \psi(\rd w)
+\frac{\lambda(x,v)}{\llll(x,v)} \delta_x(\rd y) \delta_{R(x)v}(\rd w),
\end{equation}
where as usual for a  measurable function $f:\mathcal{Z}\to \mathds{R}$ we write
$$Kf(z) := \int_{\mathcal{Z}} f(z') K(z,\rd z').$$

For any $\lref(x)=\lref>0$, it has been shown that the \textsf{BPS} is ergodic provided $U$ is continuously differentiable \cite{BC_D_V_15} when the velocities are distributed according to a normal distribution rather than uniformly on the sphere $\mathds{S}^{d-1}$ as we assume here. Restricting velocities to $\mathds{S}^{d-1}$ makes our calculations more tractable without altering the properties of the process too much. In this context, \cite{M_16} considers only compact state spaces but the arguments therein can be adapted to prove ergodicity in the general case.

\section{Main results\label{sec:main}}
\newcommand{\Yes}{\textbf{Yes}}
\newcommand{\No}{\textbf{No}}
\newcommand{\Unk}{}

\afterpage{%
	\clearpage
	\begin{sidewaystable}
		\centering 
		\vskip 450pt
		\begin{tabular}{lcccccc}
			\toprule 
			\\
			& \multicolumn{6}{c}{Target distributions} \\
			\cline{2-7} \\
			&&\multicolumn{5}{c}{Generalized Gaussian distribution, $\pi(x) \propto \exp(-\|x\|^\beta)$} \\
			\cline{3-7} \\
			&&$\beta \in (0, 1)$&$\beta = 1$&$\beta \in (1,2)$&$\beta = 2$&$\beta > 2$ \\
			Sampling methods &$t$-distributions&Thick tails&Exponential&&Gaussian&Thin tails \\
			\midrule
			BPS and extensions (this work)&\Yes&\Yes&\Yes&\Yes&\Yes&\Yes \\
			&Thm. \ref{thm:subgaussian_tdist_i}&Thm. \ref{thm:subgaussian_tdist_ii}&Thm. \ref{thm:subgaussian_B}&Thm. \ref{thm:subgaussian_A}&Thm. \ref{thm:subgaussian_A}&Thm. \ref{thm:supergaussian} \\
			\midrule 
			Metropolis Adjusted Langevin Algorithm (1D)&\No&\No&\Yes&\Yes&\Yes&\No \\
			&\cite{R_T_96}&\cite{R_T_96}&\cite{M_T_93}&\cite{R_T_96}&\cite{R_T_96}&\cite{R_T_96}\\
			&Thm. 4.3&Thm. 4.3&Sec. 16.1.3&Thm. 4.1&Thm. 4.1 &Thm. 4.2 \\
			\midrule
			Random walk Metropolis--Hastings&\No&\No&\Unk&\Unk&\Yes&\Yes \\
			&\cite{J_T_03}&\cite{J_T_03}&&&\cite{R_T_96b} &\cite{R_T_96b} \\
			\midrule
			Hamiltonian Monte Carlo&\Unk&\No&\Yes&\Yes&\Yes&\No \\
			&&\cite{L_B_B_G_16}&\cite{L_B_B_G_16}&\cite{L_B_B_G_16}&\cite{L_B_B_G_16}&\cite{L_B_B_G_16} \\
			&&Cor. 2.3(ii)&Cor. 2.3(i)&Cor. 2.3(i) &Cor. 2.3(i) &Cor. 2.3(ii)  \\
			\midrule
			Johnson and Geyer (Ann. Statist., 2012)&\Yes&\Unk&\Yes&\Yes&\multicolumn{2}{c}{Transformation} \\
			&\cite{J_G_12}&&\cite{J_G_12}&\cite{J_G_12}& \multicolumn{2}{c}{not needed} \\
			&Sec. 3.3&&Thm. 2 and 4&Cor. 1 and 2 && \\
			\bottomrule
		\end{tabular}
		\captionof{table}{\label{table:examples} Summary of geometric ergodicity (or proven lack of) for various sampling methods on generalized Gaussian distributions and t-distributions commonly used in the MCMC convergence literature. These models cover two important challenging situations: roughly, cases where the gradient of the potential becomes negligible in the tails (two leftmost columns) and cases where both the gradient and the Hessian are unbounded (rightmost column).  See references for precise conditions.}
	\end{sidewaystable}
	\clearpage
}

In this paper, we provide sufficient conditions on the target measure $\bar{\pi}$ and the refreshment rate $\lref$ for \textsf{BPS} to be $V$-uniformly ergodic for the following Lyapunov function\footnote{In \cite{M_H_10}, the Lyapunov function $\re^{U(x)/2} \llll(x,v)^{1/2}$ is used to establish the geometric ergodicity of a different piecewise deterministic MCMC scheme for targets with exponential tails but we found this function did not apply to \textsf{BPS}.}
\begin{equation}\label{eq:lyapunovfn}
V(x,v):= \frac{\re^{U(x)/2}}{\llll(x,-v)^{1/2}}.
\end{equation}

Throughout this section, refer to Table~\ref{table:examples} for examples of target distribution with various tail behaviours where each of our Theorems are used to establish exponential ergodicity.

\begin{assumption*}
Let
$U:\mathds{R}^d\to [0,\infty)$ be such that
\begin{align}
&\frac{\partial^2 U(x)}{\partial x_i \partial x_j}  \mbox{ is locally Lipschitz continuous for all $i,j$},\label{eq:c2abs}\tag{A0}\\
&\int_{\mathds{R}^d} \bar{\pi}(\rd x)|\nabla U(x)| <\infty,
\label{eq:integrability}\tag{A1}\\
&\varliminf_{|x|\to\infty} \frac{\re^{U(x)/2}}{\sqrt{|\nabla U(x)|}} > 0\label{eq:growth_condition}\tag{A2},\\
&V\geq c, \qquad \text{for some $c>0$} \label{eq:lower_bound}\tag{A3}.
\end{align}
\end{assumption*}

\begin{remark}
Assumption \eqref{eq:lower_bound} is not restrictive as in view of Assumption~\eqref{eq:growth_condition}, $V\geq c$ may only fail locally  near the origin. Therefore if $V\geq c$ fails inside a compact set $K$, we can always replace $V$ with
$\widetilde{V}=V+\mathds{1}_K \geq 1$.
\end{remark}

\begin{remark}\label{rem:c2abs2}
From the proofs, it will be clear that Theorems~\ref{thm:subgaussian} and \ref{thm:supergaussian} detailed further remain true if we replace Assumption~(\ref{eq:c2abs})
by the following slightly weaker assumption
\begin{equation}\label{eq:c2abs2}\tag{A0'}
\begin{split}
&t\mapsto \langle \nabla U(x+tv),v \rangle \mbox{ is locally Lipshitz for all $(x,v)\in \mathcal{Z}$, and}\\
&\mbox{ (\ref{eq:c2abs}) holds for all $|x|>R$, for some $R>0$.}
\end{split} 
\end{equation}
Although cumbersome, this alternative formulation will become useful in the proof of Theorem~\ref{thm:subgaussian_tdist}.
\end{remark}

Under Assumption (\ref{eq:integrability}), the embedded discrete-time Markov chain $\{\Theta_k:k\geq 0\}:=\{(X_{\tau_{k}},V_{\tau_{k}}):k\geq 0\}$ admits an invariant probability measure; see \cite{C_90} and Lemma \ref{lemma:invariantjumpchain}. The Lyapunov function (\ref{eq:lyapunovfn}) is proportional to the inverse of the square root of the invariant distribution of this embedded discrete-time Markov chain.

\subsection{``Regular" tails}
We now state our first main result. Let
\[
F\left(u,d\right):=\mathbb{E}\left[\frac{\mathbb{I}\left(\vartheta\in\left[0,\pi/2\right]\right)}{\sqrt{1+u\cos\vartheta}}\right],\quad\mathrm{\vartheta\sim}p_{\vartheta}\left(\cdot\right),
\]
where
\[
p_{\vartheta}\left(\theta\right):=\kappa_{d}\left(\sin\theta\right)^{d-2},\quad\kappa_{d}=\left(\int_{0}^{\pi}\left(\sin\theta\right)^{d-2}\mathrm{d}\theta\right)^{-1},\label{eq:densityangle}
\]
is the density of the angle between a fixed unit length vector and a uniformly distributed vector on $\mathds{S}^{d-1}$.
The following Theorem holds.
\begin{theorem}\label{thm:subgaussian}
Suppose that Assumptions~\eqref{eq:c2abs}-\eqref{eq:lower_bound} hold. Let $\lref(x)=\lref$ and suppose that one of the following sets of conditions holds:
\begin{enumerate}
[itemindent=-.2cm,label=\textsc{(\alph*)},ref={\thetheorem\textsc{(\alph*)}}]
\item \label{thm:subgaussian_A}
$\varliminf_{|x| \to \infty} |\nabla U(x)| =\infty$, $\varlimsup_{|x|\to\infty}\|\Delta U(x)\|\leq \alpha_1<\infty$ and $\lref>(2\alpha_1+1)^2$,
\item\label{thm:subgaussian_B}
$\varliminf_{|x| \to \infty} |\nabla U(x)| =2\alpha_2>0$, $\varlimsup_{|x| \to \infty}\|\Delta U(x)\|\leq C<\infty$ and $\lref\leq \alpha_2/c_d$, where $F(c_d, d)\leq 1/4$.
\end{enumerate}
Then {\textsf{BPS}} is $V$-uniformly ergodic.
\end{theorem}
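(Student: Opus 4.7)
The strategy is to establish a Foster--Lyapunov drift condition of the form $\mathcal{L}V\le -cV+b\mathds{1}_C$, with $V$ the Lyapunov function in \eqref{eq:lyapunovfn} and $C\subset\mathcal{Z}$ compact, and to combine it with the irreducibility, aperiodicity and petiteness of compact sets established in Section~\ref{sec:properties}; the standard continuous-time version of the Meyn--Tweedie theorem (see \cite{D_M_T_95}) then delivers $V$-uniform ergodicity in the sense of \eqref{eq:vuniform}.

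The main work is a careful pointwise computation of $\mathcal{L}V(x,v)/V(x,v)$. Setting $a:=\langle\nabla U(x),v\rangle$ and noting $\bar{\lambda}(x,-v)=\lref+a_-$, two regimes appear. For $a\ge 0$ the denominator of $V$ is independent of $x$ and the transport term $\langle\nabla_x V,v\rangle/V$ is just $a/2$; for $a<0$, an additional Hessian contribution $\langle \Delta U(x)v,v\rangle/[2(\lref-a)]$ appears. Using the key identity $\langle\nabla U(x),R(x)v\rangle=-a$, we obtain $V(x,R(x)v)=e^{U(x)/2}/\sqrt{\lref+a_+}$, so the bounce part equals $\lambda(x,v)\bigl[\sqrt{(\lref+a_-)/(\lref+a_+)}-1\bigr]$ and vanishes when $a<0$. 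Finally, parameterising $w\in\mathds{S}^{d-1}$ by its angle $\vartheta$ to $\nabla U(x)$, exploiting the symmetry $\theta\mapsto\pi-\theta$ to fold $[\pi/2,\pi]$ onto $[0,\pi/2]$, and recognising the density $p_\vartheta$, the refreshment integral $\int V(x,w)\psi(\rd w)/V(x,v)$ collapses to a closed-form expression in terms of $F(|\nabla U(x)|/\lref,d)$.

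The next task is to show that the resulting expression for $\mathcal{L}V/V$ is $\le -c<0$ uniformly on $\{|x|>R\}$ for $R$ large enough, by splitting on the sign of $a$. In case (a), $|\nabla U(x)|\to\infty$ forces $F(|\nabla U(x)|/\lref,d)\to 0$. When $a\ge 0$, the sum of transport and bounce equals $-a/2+a\sqrt{\lref/(\lref+a)}$, a bounded function of $a$ whose supremum is strictly below $\lref/2$, so the refreshment contribution $\lref[F-1/2]\to -\lref/2$ eventually dominates. When $a<0$, the strongly negative drift $-|a|/2$ must absorb a Hessian term bounded by $\alpha_1/[2(\lref+|a|)]$ and a refreshment contribution $\lref[(1/2+F)\sqrt{1+|a|/\lref}-1]$ that may become positive; the explicit threshold $\lref>(2\alpha_1+1)^2$ is tailored precisely so that this balance is uniform in $|a|\ge 0$. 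In case (b), $|\nabla U(x)|$ stays bounded, so $a$ and the transport/bounce/Hessian contributions remain bounded; monotonicity of $u\mapsto F(u,d)$ combined with $\lref\le\alpha_2/c_d$ and $|\nabla U(x)|\ge\alpha_2$ (eventually) gives $F(|\nabla U(x)|/\lref,d)\le F(c_d,d)\le 1/4$, yielding a refreshment contribution $\le -\lref/4$ that dominates the remaining bounded terms.

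The final step is to take $C:=\{V\le K\}$, which by \eqref{eq:growth_condition} and \eqref{eq:lower_bound} is compact for $K$ large, and to use the local boundedness of $\mathcal{L}V$ implied by \eqref{eq:c2abs} to upgrade the asymptotic pointwise inequality to the global form $\mathcal{L}V\le -cV+b\mathds{1}_C$. The main obstacle is the downhill analysis in case (a): the strong drift, the Hessian perturbation, and the positive, square-root-growing refreshment contribution must be balanced simultaneously over all $a<0$, which is exactly what fixes the quantitative threshold on $\lref$; notably only the second derivatives of $U$ along $v$ enter this balance, which is why the slightly weaker regularity assumption \eqref{eq:c2abs2} of Remark~\ref{rem:c2abs2} suffices.
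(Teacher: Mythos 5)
Your overall route is the paper's: the same Lyapunov function \eqref{eq:lyapunovfn}, the same decomposition of $\widetilde{\mathcal{L}}V/V$ into transport, bounce and refreshment parts according to the sign of $a=\langle\nabla U(x),v\rangle$, the same reduction of the refreshment integral to $F(|\nabla U(x)|/\lref,d)$, and the same appeal to Theorem~\ref{thm:DMT} combined with the irreducibility, aperiodicity and petiteness lemmas of Section~\ref{sec:properties}. Your quantitative accounting in case (a) and in the uphill part of case (b) is consistent with the paper's (the supremum of $-a/2+a\sqrt{\lref/(\lref+a)}$ over $a\ge0$ is about $0.2\lref$, safely below the $\lref/2$, resp.\ $\lref/4$, refreshment gain). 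One technical point you gloss over: $V$ is not differentiable on $\{\langle\nabla U(x),v\rangle=0\}$, so you cannot apply the strong generator to it; you must verify $V\in\mathcal{D}(\widetilde{\mathcal{L}})$ and work with the one-sided derivative $\frac{\rd}{\rd t}V(x+tv,v)\vert_{t=0+}$ (see Section~\ref{subsec:extended}). At $a=0$ that derivative equals $-\frac{1}{2}V(x,v)\langle\Delta U(x)v,-v\rangle_+/\lref$ rather than $a/2=0$; since it is nonpositive the inequality you need still holds, but the domain verification is required before Theorem~\ref{thm:DMT} applies.

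The genuine gap is the downhill part of case (b). You assert that the refreshment contribution $\le-\lref/4$ ``dominates the remaining bounded terms,'' but the Hessian term is $\langle v,\Delta U(x)v\rangle/[2(\lref+|a|)]$, which near $a=0$ is of size up to $C/(2\lref)$, whereas the total negative contribution available at $|a|=0$ is only of order $\lref$. Condition (b) forces $\lref\le\alpha_2/c_d$, i.e.\ $\lref$ \emph{small}, so $C/(2\lref)$ can exceed $\lref/2$ and the claimed domination fails for a merely bounded Hessian. The paper's own treatment of this sub-case in fact invokes $\|\Delta U(x)\|\to0$ as $|x|\to\infty$, replacing the Hessian term by $\epsilon_1/(\lref+|a|)$ with $\epsilon_1<\lref^2/2$, which is what makes $g(0)=\lref/2-\epsilon_1/\lref$ positive. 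You therefore need either that vanishing of the Hessian (which is what the exponential-tail examples satisfy) or some other mechanism to absorb a term of order $C/\lref$; as written, this step of your argument would fail.
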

In summary, \textsf{BPS} with a properly chosen constant refreshment rate $\lref>0$ is exponentially ergodic for targets with tails that decay at least as fast as an exponential, and at most as fast as a Gaussian.
In addition the uniform bound on the Hessian imposes some regularity on the curvature of the target.

Theorem~\ref{thm:subgaussian} does not apply to targets with tails thinner than Gaussian or thicker than exponential distributions. 
As summarised in Table \ref{table:examples}, it is also known that \textit{Metropolis adjusted Langevin  algorithm (MALA)}, see \cite[Theorems~4.2 and 4.3]{R_T_96}, and \textit{Hamiltonian Monte Carlo (HMC)}, see \cite[Theorems~5.13 and 5.17]{L_B_B_G_16}, are not geometrically ergodic for such targets. We now turn our attention to these cases. 
\subsection{Thin-tailed targets}
When the gradient grows faster than linearly in the tails any constant refreshment rate will eventually be negligible. It has been shown in \cite{BC_D_V_15} that \textsf{BPS} without refreshment is not ergodic as the process can get stuck forever outside a ball of any radius. In our case, the refreshment rate does not vanish, but an easy back of the envelope calculation shows that refreshment in the tails  will be extremely rare. This will result in long excursions during which the process will not explore the centre of the space. 
 
The above discussion suggests that, when the target is \textit{thin-tailed}, in the sense that the gradient of its potential grows super-linearly in the tails, we need to scale the refreshment rate accordingly in order for it to remain non-negligible in the tails.
The next result makes this intuition more precise.
\begin{theorem}\label{thm:supergaussian}
Suppose that Assumptions~\eqref{eq:c2abs}-\eqref{eq:lower_bound} hold.
Let $\lref>0$ and define for some $\epsilon>0$
$$\lref(x):=\lref+\frac{|\nabla U(x)|}{\max\{1,|x|^\epsilon\}}.$$
Suppose that
$$\varliminf_{|x| \to \infty} {\frac{|\nabla U(x) |}{{|x|}} =\infty}, \quad
\quad \lim_{|x| \to \infty} \frac{\|\Delta U(x)\|}{|\nabla U(x)|}|x|^\epsilon =0.$$
Then \textsf{BPS} is $V$-uniformly ergodic.
\end{theorem}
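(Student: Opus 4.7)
The plan is to verify a Foster--Lyapunov drift inequality of the form $\mathcal{L}V(x,v) \leq -\gamma V(x,v) + K\mathbb{I}_C(x,v)$ for the same Lyapunov function $V(x,v) = \re^{U(x)/2}/\sqrt{\bar\lambda(x,-v)}$ as in Theorem~\ref{thm:subgaussian}, with $C$ a sufficiently large closed ball in $\mathcal{Z}$, and then combine it with the petite-set/minorization properties developed in Section~\ref{sec:properties} to deduce $V$-uniform ergodicity via \cite{D_M_T_95}. The auxiliary properties (irreducibility, aperiodicity, petiteness of compact sets) hold for any continuous, strictly positive refreshment rate, so the main novelty is the drift calculation: compared with Theorem~\ref{thm:subgaussian}, $\lref(x)$ now grows with $|\nabla U(x)|$, which forces a rebalancing of the refreshment, bounce, and deterministic contributions and introduces an extra spatial derivative of $\lref$.

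To establish the drift I would compute $\mathcal{L}V/V$ pointwise in the two half-spaces determined by the sign of $\langle \nabla U(x), v\rangle$; the boundary $\{\langle\nabla U, v\rangle = 0\}$ is $\pi$-null and is handled via the extended generator exactly as in Theorem~\ref{thm:subgaussian}. Using that $R(x)v$ flips the sign of $\langle\nabla U,v\rangle$, so $\bar\lambda(x,-R(x)v)=\bar\lambda(x,v)$ on $\{\langle\nabla U,v\rangle\ge 0\}$, and that the refreshment integral $\int V(x,w)\psi(\rd w)/V(x,v)$ reduces via the angle change of variable to $F(|\nabla U|/\lref(x),d) + \tfrac12$, a direct computation yields, in the case $a := \langle\nabla U(x),v\rangle \ge 0$ and with $\ell := \lref(x)$,
\begin{equation*}
\frac{\mathcal{L}V(x,v)}{V(x,v)} \;=\; -\tfrac{a}{2} \;+\; a\sqrt{\tfrac{\ell}{\ell+a}} \;+\; \ell\Bigl[F\!\bigl(\tfrac{|\nabla U(x)|}{\ell},d\bigr) - \tfrac{1}{2}\Bigr] \;-\; \tfrac{v\cdot \nabla \lref(x)}{2\ell},
\end{equation*}
with an analogous formula when $a<0$ in which the bounce term vanishes and an additional contribution $v^{T}\Delta U(x)v/\bigl(2(\ell - a)\bigr)$ appears from differentiating $\bar\lambda(x,-v)=\ell-a$ along the flow.

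To turn these formulas into the drift inequality I would exploit the hypotheses as follows. The assumption $|\nabla U|/|x|\to\infty$ combined with the scaling $\lref(x)\sim |\nabla U(x)|/|x|^{\epsilon}$ gives $|\nabla U|/\lref(x)\gtrsim |x|^{\epsilon}\to\infty$, hence $F(|\nabla U|/\lref(x),d)\to 0$, so the refresh-minus-death contribution $\ell[F-\tfrac12]$ behaves like $-\ell/2 \to -\infty$. The bounce gain satisfies $a\sqrt{\ell/(\ell+a)}\le \sqrt{a\ell}\le |\nabla U|/|x|^{\epsilon/2}$, which is strictly smaller order than $-a/2\sim -|\nabla U|/2$ in the worst case $a\sim|\nabla U|$, and is dominated by $-\ell/2$ when $a$ is small. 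Finally, the Hessian hypothesis $\lim \|\Delta U\|/|\nabla U|\cdot |x|^{\epsilon}=0$ together with the bound $|\nabla \lref(x)|\lesssim \|\Delta U\|/|x|^{\epsilon} + \epsilon|\nabla U|/|x|^{\epsilon+1}$ is precisely what makes the derivative terms $v\cdot\nabla\lref/\ell$ and $v^{T}\Delta U v/(\ell+|a|)$ both $o(1)$. Combining, $\mathcal{L}V(x,v)/V(x,v)\to -\infty$ uniformly in $v\in \mathds{S}^{d-1}$ as $|x|\to\infty$, giving the required geometric drift and hence $V$-uniform ergodicity.

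The main obstacle is the delicate tuning of $\lref(x)$ in the tails. Without a large enough refreshment rate the process cannot forget its direction on the long stretches between bounces; yet if $\lref(x)$ is too large, the bounce gain $\sqrt{a\ell}$ would overwhelm the deterministic drift $-a/2$ and the Lyapunov function would fail. The scaling $\lref(x)\sim |\nabla U|/|x|^{\epsilon}$ is exactly the sweet spot where $\sqrt{a\ell}/|\nabla U|=O(|x|^{-\epsilon/2})=o(1)$ while simultaneously $\ell\to\infty$ fast enough that $\ell\,F(|\nabla U|/\ell, d)=o(\ell)$, yielding a strictly negative refresh contribution. Verifying these balances and controlling the Hessian- and $\nabla \lref$-error terms via the assumption $\|\Delta U\|\cdot|x|^{\epsilon}=o(|\nabla U|)$ is the technical heart of the proof.
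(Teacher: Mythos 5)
Your proposal follows essentially the same route as the paper's proof (Lemma~\ref{lem:drift2}): the same Lyapunov function, the same reduction to a drift condition plus the petiteness/irreducibility/aperiodicity lemmas, the same case split on the sign of $\langle\nabla U(x),v\rangle$ with the boundary case handled through the extended generator, and the exact expressions you write for $\widetilde{\mathcal{L}}V/V$ agree with \eqref{eq:expresion_plus} and \eqref{eq:expression_minus}, as do the key estimates $|\nabla\lref(x)|/\lref(x)\leq \|\Delta U(x)\|/|\nabla U(x)|+1/|x|\to 0$ and $F(|\nabla U(x)|/\lref(x),d)\to 0$.

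One step is stated too loosely to stand as written. In the case $a:=\langle\nabla U(x),v\rangle>0$ you argue that the bounce gain is beaten by $-a/2$ when $a\sim|\nabla U|$ and by $-\ell/2$ (with $\ell:=\lref(x)$) when $a$ is small. In the crossover regime $a\asymp\ell$ neither single term suffices: at $a=\ell$ your bound $\sqrt{a\ell}$ equals $\ell$, which exceeds both $a/2$ and $\ell/2$, and indeed $-a/2+\sqrt{a\ell}-\ell/2=-\tfrac12(\sqrt{a}-\sqrt{\ell})^2$ vanishes there, so this chain of bounds only yields $\widetilde{\mathcal{L}}V/V\leq o(\ell)$, not $\to-\infty$. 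The two negative terms must be played jointly against the \emph{exact} bounce term: setting $w=a/\ell$,
\begin{equation*}
-\frac{a}{2}+a\sqrt{\frac{\ell}{\ell+a}}-\frac{\ell}{2}
=\ell\left[w\left(\frac{1}{\sqrt{1+w}}-\frac12\right)-\frac12\right]
\leq \ell\left(\frac14-\frac12\right)=-\frac{\ell}{4},
\end{equation*}
using $\sup_{w>0}w\bigl(1/\sqrt{1+w}-1/2\bigr)\leq 1/4<1/2$, which is precisely the inequality the paper invokes. With that repair the remaining $o(\ell)$ refreshment error and $o(1)$ terms involving $\nabla\lref$ and $\Delta U$ are absorbed, and the rest of your argument (including the easier case $a<0$) goes through exactly as in the paper.
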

It is worth noting that although Langevin diffusions can be geometrically ergodic for thin-tailed targets, they  typically cannot be simulated exactly and when discretised require an additional step, such as a Metropolis filter, to sample from the correct target distribution. This results in non-geometrically ergodic algorithms \cite{R_T_96}.
\subsection{Thick-tailed targets} 
For targets with tails thicker than an exponential, that is when the gradient vanishes in the tails, the lack of exponential ergodicity of gradient-based methods such as MALA and HMC, is natural---the vanishing gradient induces random-walk like behaviour in the tails.
This seems to be the main obstruction preventing extension of Theorem~\ref{thm:subgaussian} to thick-tailed distributions. 

However, similarly to \cite{J_G_12}, we can address this by transforming the target to one satisfying the assumptions of either Theorem~\ref{thm:subgaussian}, or Theorem~\ref{thm:supergaussian}. This guarantees that BPS with respect to the transformed target will be geometrically ergodic.
As in \cite{J_G_12} we define the following functions $f^{(i)} : [0, \infty) \to [0, \infty)$ for $i=1,2$:
\begin{equation}\label{eq:fexp}
f^{(1)}(r)
=
\begin{cases}
\re^{br} - \frac{\re}{3}, & r>\frac{1}{b},\\
r^3 \frac{b^3 \re}{6}+ r\frac{b\re}{2}, &r\leq \frac{1}{b},
\end{cases}
\end{equation}
and
\begin{equation}\label{eq:fpoly}
f^{(2)}(r)
=
\begin{cases}
r, & r\leq R,\\
r+(r-R)^p, &r> R,
\end{cases}
\end{equation}
where $R,b>0$ are arbitrary constants.
We also define the \textit{isotropic transformations} $h^{(i)} : \R^d \to \R^d$, given by
\begin{equation}\label{eq:defn_h}
h^{(i)}(x) := 
\begin{cases} 
\displaystyle\frac{f^{(i)}(|x|)x}{|x|}, &\text{for $x\neq 0$,} \\
0, & \text{for $x=0$.}
\end{cases}
\end{equation}

From \cite[Lemma~1]{J_G_12} it follows that for $i=1,2$,  $h=h^{(i)}: \mathds{R}^d\mapsto\mathds{R}^d$ defines a  $C^1$-\textit{diffeomorphism}, that is $h$ is bijective with $h,h^{-1}\in C^1(\mathds{R^d})$. 

Let $h=h^{(i)}$ for some $i\in \{1,2\}$, $X\sim \bar{\pi}$ and $Y=h^{-1}(X)$. Then $Y\in \mathds{R}^d$ is distributed according to the Borel probability measure  $\bar{\pi}_h$,  with density given by
$\bar{\pi}_h(y)=\exp\{-U_h(y)\}/\zeta_h$, where by \cite[equations (6) and (7)]{J_G_12} we have that
\begin{align}
U_h(y) &= U(h(y)) - \log \det(\nabla h(y)),\label{eq:uhdef}\\
\nabla U_h (y) &=  \nabla h(y)\nabla U(h(y)) - \nabla \log \det (\nabla h(y)).\label{eq:gradUh}
\end{align}
Let $\{(Y_t,V_t);t\geq 0\}$ denote the trajectory produced by the BPS algorithm targeting $\pi_h(y,v):=\bar{\pi}_h(y)\psi(v)$ and let 
$V_h$ be defined through \eqref{eq:lyapunovfn}, similarly with $U_h$ in place of $U$.

\begin{theorem}\label{thm:subgaussian_tdist} Let $U$ satisfy Assumption~(\ref{eq:c2abs}). Then we have the following.
\begin{enumerate}[label={\textsc{(\alph*)}}, ref={\thetheorem\textsc{(\alph*)}}]
\item \label{thm:subgaussian_tdist_i}If
	\begin{enumerate}
		\item \label{assumption:tail} $\varlimsup_{|x|\to \infty} |x||\nabla U(x)| <\infty$,
		\item \label{assumption:curvature} $\varlimsup_{|x|\to \infty}|x|^2 \|\Delta U(x)\|<\infty$, and
		\item \label{assumption:directionality} $\varliminf_{|x|\to \infty}\langle x, \nabla U(x) \rangle  > d$,
	\end{enumerate}
	then $U_{h^{(1)}}$, with $h^{(1)}$ defined via \eqref{eq:fexp}, satisfies the assumptions of Theorem~\ref{thm:subgaussian_B}. In addition, the process
	$\{(X_t,V_t):t\geq 0\}$, where $X_t=h^{(1)}(Y_t)$, is $\pi$-invariant and $\widetilde{V}$-uniformly ergodic,
	where $\widetilde{V}=V_{h^{(1)}}\circ 	H^{(1)}$ with $H^{(1)}(x,v):=(h^{(1)}(x),v)$.
\item \label{thm:subgaussian_tdist_ii}If for some $\beta\in (0,1)$ we have
	\begin{enumerate}
		\item \label{assumption:tail2} $\varlimsup_{|x|\to \infty} |x|^{1-\beta}|\nabla U(x)|<\infty$,
				\item \label{assumption:directionality2} $\varliminf_{|x|\to \infty} |x|^{-\beta}\langle x, \nabla U(x) \rangle >0$, and
		\item \label{assumption:curvature2} $\varlimsup_{|x|\to \infty} |x|^{2-\beta}\|\Delta U(x)\|<\infty$,
	\end{enumerate}
	then $U_{h^{(2)}}$, with $h^{(2)}$ defined via \eqref{eq:fpoly} and $p$ such that $\beta p>2$, satisfies the assumptions of Theorem~\ref{thm:supergaussian}. In addition, the process
	$\{(X_t,V_t):t\geq 0\}$, where $X_t=h^{(2)}(Y_t)$, is $\pi$-invariant and $\widetilde{V}$-uniformly ergodic,
	where $\widetilde{V}=V_{h^{(2)}}\circ 	H^{(2)}$ with $H^{(2)}(x,v):=(h^{(2)}(x),v)$.	
\end{enumerate}
\end{theorem}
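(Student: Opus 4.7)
The plan is to work on the transformed coordinate $y$, verify that the new potential $U_{h^{(i)}}$ satisfies the hypotheses of Theorem~\ref{thm:subgaussian_B} (for $i=1$) or Theorem~\ref{thm:supergaussian} (for $i=2$), and then pull back $V_{h^{(i)}}$-uniform ergodicity of the $Y$-process to the $X$-process through the $C^1$-diffeomorphism $H^{(i)}$. Exploiting the isotropy of $h^{(i)}$, write $r=|y|$, $\hat y = y/r$, $h(y)=f(r)\hat y$, and record
\[
\nabla h(y) = f'(r)\hat y\hat y^{T} + \tfrac{f(r)}{r}(I-\hat y\hat y^{T}), \qquad \det\nabla h(y)=f'(r)\bigl(\tfrac{f(r)}{r}\bigr)^{d-1}.
\]
Decomposing $\nabla U(h(y))$ into its radial part $g_\parallel(y):=\langle\nabla U(h(y)),\hat y\rangle$ and orthogonal part $g_\perp(y)$, \eqref{eq:gradUh} becomes
\[
\nabla U_h(y) = \Bigl[f'(r)\,g_\parallel(y) - \tfrac{f''(r)}{f'(r)} - (d-1)\bigl(\tfrac{f'(r)}{f(r)}-\tfrac{1}{r}\bigr)\Bigr]\hat y + \tfrac{f(r)}{r}\,g_\perp(y),
\]
and differentiating once more gives an analogous explicit expression for $\Delta U_h$.

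For part~\ref{thm:subgaussian_tdist_i}, $f^{(1)}(r)\sim \re^{br}$ gives $f'/f \to b$ and $f''/f'\to b$, so the log-Jacobian correction in the radial coefficient tends to $bd+o(1)$. By \ref{assumption:tail} one has $|g_\perp|\le|\nabla U(h(y))|\lesssim|h(y)|^{-1}=O(\re^{-br})$, so the tangential term $(f/r)g_\perp$ vanishes. Writing the radial coefficient as $(f'(r)/f(r))\langle h(y),\nabla U(h(y))\rangle-bd+o(1)$ and invoking \ref{assumption:directionality} yields $\varliminf|\nabla U_{h^{(1)}}|\ge b(\varliminf_{|x|\to\infty}\langle x,\nabla U(x)\rangle-d)>0$. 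A parallel expansion for $\Delta U_h$, using \ref{assumption:curvature} in the form $|h|^2\|\Delta U(h)\|=O(1)$ together with the bounded log-Jacobian Hessian of $f^{(1)}$, gives $\varlimsup\|\Delta U_{h^{(1)}}\|<\infty$. Assumptions \eqref{eq:c2abs}--\eqref{eq:lower_bound} for $U_{h^{(1)}}$ then follow from smoothness of $f^{(1)}$ (routinely checked at the match point $r=1/b$) and positivity of $V_{h^{(1)}}$; choosing $\lref$ small enough to meet the numerical bound $\lref\le \alpha_2/c_d$ applies Theorem~\ref{thm:subgaussian_B}.

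For part~\ref{thm:subgaussian_tdist_ii}, $f^{(2)}(r)\sim r^p$ gives $f'/f\sim p/r$ and $f''/f'\sim (p-1)/r$, so the log-Jacobian correction is $O(1/r)$. Hypotheses \ref{assumption:tail2} and \ref{assumption:directionality2} yield $|g_\perp|\lesssim r^{p(\beta-1)}$ and $\langle h(y),\nabla U(h(y))\rangle\gtrsim r^{p\beta}$, so the radial coefficient of $\nabla U_{h^{(2)}}(y)$ is of order $r^{p\beta-1}$ and dominates $(f/r)g_\perp = O(r^{p\beta-1})$. Since $p\beta>2$, this gives $|\nabla U_{h^{(2)}}(y)|/|y|\to\infty$. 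Analogous bookkeeping using \ref{assumption:curvature2} produces $\|\Delta U_{h^{(2)}}\|=O(r^{p\beta-2})$, whence $\|\Delta U_{h^{(2)}}(y)\|/|\nabla U_{h^{(2)}}(y)|\cdot|y|^\epsilon=O(|y|^{\epsilon-1})\to 0$ for any $\epsilon\in(0,1)$. The smoothness \eqref{eq:c2abs} may fail at $r=R$ when $p$ is only slightly above $2$, but the weaker \eqref{eq:c2abs2} holds on a neighbourhood of infinity, which suffices by Remark~\ref{rem:c2abs2}; the integrability \eqref{eq:integrability} follows because $U_h(y)\gtrsim r^{p\beta}$ dominates any polynomial.

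With these verifications done, Theorem~\ref{thm:subgaussian_B} (resp.\ Theorem~\ref{thm:supergaussian}) gives $V_{h^{(i)}}$-uniform ergodicity of the $Y$-process towards $\pi_{h^{(i)}}$. Since $H^{(i)}$ is a $C^1$-diffeomorphism and $\pi$ is the pushforward of $\pi_{h^{(i)}}$ under $H^{(i)}$, the image process $X_t=h^{(i)}(Y_t)$ is $\pi$-invariant and Markov, and the change-of-variables $f\leftrightarrow f\circ H^{(i)}$ in the supremum defining the $V_{h^{(i)}}$-norm transports the ergodic bound to the claimed $\widetilde V$-uniform ergodicity on the $X$-side. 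The hardest piece of the proof is the Hessian analysis: one must track second derivatives of $\log\det\nabla h^{(i)}$ together with the cross terms between the radial and tangential decompositions, and in part~\ref{thm:subgaussian_tdist_ii} the bound has to be sharp enough that $\|\Delta U_h\|/|\nabla U_h|\cdot|y|^\epsilon$ vanishes even though the radial coefficient of $\nabla U_h$ has essentially the same scale as the leading piece of $\Delta U_h$.
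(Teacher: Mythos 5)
Your proposal is correct in outline and follows the same two-stage strategy as the paper: verify that the transformed potential satisfies the hypotheses of Theorem~\ref{thm:subgaussian_B} (resp.\ Theorem~\ref{thm:supergaussian}) and then transport $V_h$-uniform ergodicity through the diffeomorphism. The one genuinely different ingredient is in part \textsc{(a)}: for the lower bound $\varliminf|\nabla U_{h^{(1)}}|>0$ the paper does not compute the radial coefficient directly but instead observes that Assumption~\ref{assumption:tail} makes $\bar\pi$ sub-exponentially light and Assumption~\ref{assumption:directionality} is exactly \cite[Eq.~(17)]{J_G_12}, so \cite[Theorem~3]{J_G_12} yields $\varliminf_{|x|\to\infty}\langle x,\nabla U_h(x)\rangle/|x|=-c_0>0$ and Cauchy--Schwarz finishes. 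Your radial/tangential decomposition is a self-contained re-derivation of that fact and gives the same constant $b(\varliminf\langle x,\nabla U(x)\rangle-d)$; it is arguably more transparent, at the cost of redoing work the paper outsources. For the Hessian bounds the paper works entrywise with the Frobenius norm and the explicit estimates $|\partial_j h_k|\le 3f(|x|)$, $|\partial_i\partial_j h_k|\le 8f(|x|)$, rather than your block decomposition, but the orders of magnitude you state agree with the paper's.

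Two points where your sketch is imprecise and would need repair in a full write-up. First, in part \textsc{(b)} you claim the radial coefficient of $\nabla U_{h^{(2)}}$, of order $r^{p\beta-1}$, \emph{dominates} the tangential term $(f/r)g_\perp$, but by your own bounds the latter is also $O(r^{p\beta-1})$ --- the same order. Nothing is lost because the two pieces are orthogonal, so $|\nabla U_{h^{(2)}}(y)|$ is at least the absolute value of the radial coefficient regardless (this is exactly how the paper argues, via $|\nabla h(x)\nabla U(h(x))|\ge f'(|x|)|\langle\nabla U(h(x)),x/|x|\rangle|$); you should invoke orthogonality rather than domination. Second, your verification of the regularity and of the remaining Assumptions \eqref{eq:integrability}--\eqref{eq:lower_bound} is too thin: the Hessian bound in part \textsc{(a)} needs Assumption~\ref{assumption:tail} (not only \ref{assumption:curvature}) to control the terms $\partial_k\{U\}(h(x))\,\partial_i\partial_j\{h_k\}(x)$, which are $O(f^{-1})\cdot O(f)=O(1)$; and Assumption~\eqref{eq:c2abs2} is not a condition ``on a neighbourhood of infinity'' only --- it also requires $t\mapsto\langle\nabla U_h(x+tv),v\rangle$ to be locally Lipschitz everywhere, and the paper devotes a Taylor-expansion argument to the delicate point $x=0$ (and the matching radii $1/b$, $R$), which cannot simply be waved through.
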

\begin{example}
\begin{description}
\item[Multivariate $t$-distribution.]
Suppose that $x\in \mathds{R}^d$, for $d\geq 2$, $k>1$,  and let
$$\bar{\pi}(x) \propto \re^{-U(x)} = \left[1+\frac{|x|^2}{k} \right]^{-\frac{k+d}{2}}.$$
It follows that 
\begin{equation*}
\nabla U(x) = \frac{(k+d)}{\left( k + |x|^2 \right)}x,\quad 
\Delta U(x) = \frac{k+d}{k+|x|^2}\mathds{1}_d -2 \frac{(k+d) x x^T}{\left( k+ |x|^2 \right)^2},
\end{equation*}
where $\mathds{1}_d$ is the $d\times d$ identity matrix. Then $U$  satisfies the conditions of Theorem~\ref{thm:subgaussian_tdist_i}.
\item[Generalised Gaussian distribution.]
Let $U(x) = |x|^\beta$ for some $\beta \in (0,1)$. Then $U$ satisfies the conditions of Theorem~\ref{thm:subgaussian_tdist_ii}.
\end{description}
\end{example}
\begin{remark}In the context of Theorem~\ref{thm:subgaussian_tdist_i},
while geometric ergodicity holds for all positive fixed $b$, tuning this parameter may be useful in practice as pointed out by \cite{J_G_12}.
\end{remark}

\subsection{A Central Limit Theorem}
From the above results we obtain the following CLT for the estimator $T^{-1}  \int_0^T g(Z_s)\rd s$ of $\pi(g)$. This estimator can be computed exactly when $g$ is a multivariate polynomial of the components of $z$; see, e.g., \cite[Section 2.4]{BC_D_V_15}.
 \begin{theorem}\label{thm:clt}
 Suppose that any of the conditions of Theorems~\ref{thm:subgaussian} or \ref{thm:supergaussian}  hold. Let $\varepsilon>0$ such that $W:= V^{1-\varepsilon}$,  satisfies $\pi(W^2)<\infty$.
 Then for any $g: \mathcal{Z}\to \mathds{R}$ such that $g^2\leq W$ and for any initial distribution, we have that
$$\frac{1}{\sqrt{T}} S_T\left[g-\pi(g)\right]  \Rightarrow \mathcal{N}(0,\sigma^2_g),$$
 with
 \begin{equation*}
 	S_T[g]     := \int_0^T g(Z_s)\rd s, \qquad
 	\sigma^2_g := 2 \int \hat{g}(z) \left[ g(z) - \pi(g)\right] \pi(\rd z),
 \end{equation*}
 where $\hat{g}$ is the solution of the Poisson equation $g-\pi(g) = -\mathcal{L} \hat{g}$, and satisfies $|\hat{g}| \leq c_0 (1+ W)$ for some constant $c_0$.
 \end{theorem}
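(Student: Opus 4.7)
The plan follows the standard martingale-decomposition route for CLTs of functionals of geometrically ergodic Markov processes (Bhattacharya; Glynn--Meyn; Meyn--Tweedie), in three stages: \textbf{(i)} solve the Poisson equation with $|\hat g|\leq c_0(1+W)$; \textbf{(ii)} express $T^{-1/2}S_T[g-\pi(g)]$ as $T^{-1/2}M_T$ plus a boundary term that vanishes in probability, via the Dynkin formula; \textbf{(iii)} invoke a martingale CLT and identify $\sigma_g^2$.

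For step \textbf{(i)}, I would first upgrade the $V$-uniform ergodicity of Theorems~\ref{thm:subgaussian}/\ref{thm:supergaussian} to $W$-uniform ergodicity. From the proofs of those theorems (carried out in Section~\ref{sec:properties}) one has a Foster--Lyapunov inequality $\widetilde{\mathcal{L}}V\leq -cV+b\mathds{1}_K$ for some petite set $K$. Writing $W=V^{1-\varepsilon}$, the chain rule yields $\langle\nabla_x W,v\rangle=(1-\varepsilon)V^{-\varepsilon}\langle\nabla_x V,v\rangle$ for the transport part of \eqref{eq:generator}, while concavity of $t\mapsto t^{1-\varepsilon}$ (Jensen's inequality against $K$) gives $\bar\lambda[KW-W]\leq(1-\varepsilon)V^{-\varepsilon}\bar\lambda[KV-V]$. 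Summing, $\widetilde{\mathcal{L}}W\leq(1-\varepsilon)V^{-\varepsilon}\widetilde{\mathcal{L}}V\leq -c'W+b'\mathds{1}_K$, and the $f$-norm ergodic theorem delivers $\|P^t(z,\cdot)-\pi\|_W\leq D'W(z)(\rho')^{t}$. Taking WLOG $W\geq 1$ (using \eqref{eq:lower_bound}), so $|g|\leq W^{1/2}\leq W$, the candidate
\[
\hat g(z):=\int_0^\infty[P^sg(z)-\pi(g)]\rd s
\]
converges absolutely with $|\hat g|\leq c_0(1+W)$, and $\widetilde{\mathcal{L}}\hat g=-(g-\pi(g))$ by standard semigroup manipulations.

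For step \textbf{(ii)}, since $\hat g\in\mathcal{D}(\widetilde{\mathcal{L}})$, the process $M_t:=\hat g(Z_t)-\hat g(Z_0)+\int_0^t[g(Z_s)-\pi(g)]\rd s$ is a local $\mathcal{F}_t$-martingale with predictable quadratic variation $\langle M\rangle_t=\int_0^t\Gamma(\hat g,\hat g)(Z_s)\rd s$, where the carré du champ of \eqref{eq:generator} is
\[
\Gamma(\hat g,\hat g)(x,v)=\bar\lambda(x,v)\int[\hat g(x,v')-\hat g(x,v)]^2\,K((x,v),\rd(x',v')),
\]
the transport term vanishing by Leibniz. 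Then $\pi(\widetilde{\mathcal{L}}\hat g^2)=0$ (by invariance) combined with $2\hat g\,\widetilde{\mathcal{L}}\hat g=\widetilde{\mathcal{L}}\hat g^2-\Gamma(\hat g,\hat g)$ yields $\pi(\Gamma(\hat g,\hat g))=2\pi(\hat g[g-\pi(g)])=\sigma_g^2$; finiteness follows from $|\hat g|\leq c_0(1+W)$, the form of $\bar\lambda$ in \eqref{eq:lambda}, and the hypothesis $\pi(W^2)<\infty$.

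Step \textbf{(iii)} combines the ergodic theorem $T^{-1}\langle M\rangle_T\to\sigma_g^2$ $\P^z$-almost surely with the conditional Lindeberg condition, which holds since the jumps of $M$ coincide with those of $\hat g(Z_t)$ and are bounded by $2c_0(1+W(Z_t))$, hence of order $o(\sqrt T)$ uniformly on compacts of time. The martingale CLT for càdlàg local martingales (e.g.\ Jacod--Shiryaev VIII.3.22) then gives $T^{-1/2}M_T\Rightarrow\mathcal{N}(0,\sigma_g^2)$, and the boundary term $[\hat g(Z_0)-\hat g(Z_T)]/\sqrt T$ is $o_{\P}(1)$ because $\sup_{T\geq 0}\mathds{E}^z W(Z_T)<\infty$ from the drift inequality for $W$; extension from $z\sim\pi$ to arbitrary initial distributions follows by geometric ergodicity (the distribution at time $T-\sqrt T$, say, is already within $V$-distance $\mathcal{O}(\rho^{\sqrt T})$ of $\pi$). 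I expect the $V\to W$ interpolation to be the most delicate step: while the concavity argument is transparent in principle, verifying it carefully for the BPS kernel \eqref{eq:kernelK} requires attention because $V$ is anisotropic in $v$ through $\bar\lambda(x,-v)^{-1/2}$, so the Jensen bound must be applied separately to the refreshment and reflection components of $K$, and any positive correction on the petite set $K$ must be absorbed into $b'\mathds{1}_K$.
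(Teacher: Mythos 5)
Your proposal is correct and follows essentially the same route as the paper: the paper's entire proof consists of transferring the drift condition \eqref{eq:driftCondition} from $V$ to $W=V^{1-\varepsilon}$ and then invoking \cite[Theorem~4.3]{G_M_96}, whose martingale--Poisson-equation argument your steps (i)--(iii) simply unpack. The only cosmetic difference is in the drift transfer: you apply the concavity (tangent-line) inequality for $t\mapsto t^{1-\varepsilon}$ separately to the transport and jump parts of the extended generator, whereas the paper applies Jensen's inequality at the semigroup level, $\rE^z[V^{1-\varepsilon}(Z_t)]\leq \rE^z[V(Z_t)]^{1-\varepsilon}$, and differentiates at $t=0$, which sidesteps the anisotropy/differentiability issue you flag at the end.
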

 \begin{corollary}\label{cor:clt}
Suppose that the conditions of Theorem~\ref{thm:subgaussian_tdist_i} or Theorem~\ref{thm:subgaussian_tdist_ii} hold, let 
$h=h^{(1)},h^{(2)}$ respectively, define $H(x,v)=(h(x),v)$, and let
$\widetilde{V}$ denote the corresponding Lyapunov function. Let $\varepsilon>0$ such that $W:= \widetilde{V}^{1-\varepsilon}$,  satisfies $\pi_h(W^2)<\infty$.
 Then for any $g: \mathcal{Z}\to \mathds{R}$ such that $g^2\leq W$ and for any initial distribution, we have that
\begin{align*}
\frac{1}{\sqrt{T}} \int_{0}^T \left[ g(X_t,V_t) - \pi(g)\right]\rd t
&=\frac{1}{\sqrt{T}} \int_{0}^T \left[ g\circ H(Y_t,V_t) - \pi_h\left(g\circ H\right)\right]\rd t\Rightarrow \mathcal{N}(0,\widetilde{\sigma}_g^2),
 \end{align*}
with 
 $$\widetilde{\sigma}^2_g:=2 \int \widehat{{g\circ H}}(z) \left[ g\circ H(z) - \pi_h(g)\right] \pi_h(\rd z),$$
where $\widehat{g\circ H}$ is the solution of the Poisson equation $g\circ H-\pi\left(g\circ H\right) 
 = -\mathcal{L}_h \widehat{g\circ H}$, and $\mathcal{L}_h$ is given in \eqref{eq:generator} with $\bar{\lambda}$ defined in \eqref{eq:lambda} with $U$ replaced by $U_h$ and $K$ defined in \eqref{eq:kernelK} using $R(x)v$ defined in \eqref{eq:bounceoperator} with $\nabla U_h$ replacing $\nabla U$. 
 \end{corollary}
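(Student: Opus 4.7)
The corollary is essentially Theorem~\ref{thm:clt} applied to the untransformed BPS process $\{(Y_t,V_t):t\geq 0\}$ targeting $\pi_h$, combined with a pathwise change of variable. My plan is in three steps.

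First, I would invoke Theorem~\ref{thm:subgaussian_tdist} to put Theorem~\ref{thm:clt} into play. Under the hypotheses of Theorem~\ref{thm:subgaussian_tdist_i} (respectively Theorem~\ref{thm:subgaussian_tdist_ii}), the transformed potential $U_h$ satisfies the assumptions of Theorem~\ref{thm:subgaussian_B} (respectively Theorem~\ref{thm:supergaussian}). Consequently, the BPS process $(Y_t,V_t)$ driven by $U_h$ is $V_h$-uniformly ergodic, where $V_h$ is given by \eqref{eq:lyapunovfn} with $U_h$ in place of $U$, so Theorem~\ref{thm:clt} is applicable to this process and generator $\mathcal{L}_h$.

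Second, I would match up the two averages pathwise. Since $H(y,v)=(h(y),v)$ is a $C^1$-diffeomorphism of $\mathcal{Z}$ and $X_t=h(Y_t)$, one has the trivial pathwise identity $g(X_t,V_t)=(g\circ H)(Y_t,V_t)$. The change-of-variables identity $\bar{\pi}_h(y)=\bar{\pi}(h(y))|\det\nabla h(y)|$ (equivalent to \eqref{eq:uhdef}), combined with the product structure $\pi(\rd x,\rd v)=\bar{\pi}(\rd x)\psi(\rd v)$ and the fact that $H$ fixes the $v$-coordinate, yields the invariance identity $\pi(g)=\pi_h(g\circ H)$. This establishes the first equality of the displayed formula in the corollary.

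Third, I would apply Theorem~\ref{thm:clt} to the function $g\circ H$ and the process $(Y_t,V_t)$: the hypotheses $g^2\leq W$ and $\pi_h(W^2)<\infty$ translate (via $\widetilde{V}=V_h\circ H$ and precomposition with $H$) into the integrability and growth conditions of Theorem~\ref{thm:clt} for $g\circ H$ with Lyapunov function $V_h$. The theorem then produces the convergence
$$\frac{1}{\sqrt T}\int_0^T \bigl[(g\circ H)(Y_t,V_t)-\pi_h(g\circ H)\bigr]\rd t \Rightarrow \mathcal{N}(0,\widetilde{\sigma}_g^2),$$
with the Poisson equation $g\circ H-\pi_h(g\circ H)=-\mathcal{L}_h\widehat{g\circ H}$ and the asymptotic variance $\widetilde{\sigma}_g^2$ as stated. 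The main (and quite mild) obstacle is bookkeeping across coordinate systems---keeping track of how $\widetilde{V}$, $\pi$, $\mathcal{L}$, and the Poisson equation transform under $H$---but no new probabilistic or analytic input (minorisation, drift inequality, solvability of the Poisson equation) is required beyond what is already packaged in Theorems~\ref{thm:subgaussian_tdist} and \ref{thm:clt}.
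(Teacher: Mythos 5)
Your proposal is correct and matches the argument the paper intends (the corollary is left without an explicit proof precisely because it reduces to Theorem~\ref{thm:clt} applied to the $\pi_h$-invariant process $(Y_t,V_t)$, combined with the pathwise identity $g(X_t,V_t)=(g\circ H)(Y_t,V_t)$ and the change-of-variables identity $\pi(g)=\pi_h(g\circ H)$, both of which are already set up in the proof of Theorem~\ref{thm:subgaussian_tdist}). The only point worth flagging is the bookkeeping between $\widetilde V$ and $V_h$ --- the paper's proof of Theorem~\ref{thm:subgaussian_tdist} derives $V_h\circ H^{-1}$-uniform ergodicity, so the hypotheses $g^2\leq W=\widetilde V^{1-\varepsilon}$ and $\pi_h(W^2)<\infty$ pull back under $H$ exactly to the conditions of Theorem~\ref{thm:clt} for $g\circ H$ with Lyapunov function $V_h$, as you indicate.
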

\section{Auxiliary results\label{sec:properties}}
To prove $V$-uniform ergodicity we will use the following result.
\begin{thmx}{\cite[Theorem~5.2]{D_M_T_95}}\label{thm:DMT}
Let $\{Z_t:t\geq 0\}$ be a Borel right Markov process taking values in a locally compact, separable metric space $\mathcal{Z}$ and assume it is non-explosive, irreducible and aperiodic.
Let
$(\widetilde{\mathcal{L}}, \mathcal{D}(\widetilde{\mathcal{L}}))$ be its extended generator. Suppose that there exists a measurable function $V:\mathcal{Z}\to [1,\infty)$ such that $V\in \mathcal{D}(\widetilde{\mathcal{L}})$, and that for a petite set $C\in \mathcal{B}(\mathcal{Z})$ and constants $b,c>0$ we have
\begin{equation}
\widetilde{\mathcal{L}} V \leq -c V + b \mathds{1}_C.\label{eq:driftCondition}\tag{$\mathfrak{D}$}
\end{equation}
Then $\{Z_t:t\geq 0\}$ is $V$-uniformly ergodic.
\end{thmx}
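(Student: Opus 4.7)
Since the final statement is the verbatim citation of \cite[Theorem~5.2]{D_M_T_95}, my plan is to \emph{invoke} that result in the paper and not reprove it. For orientation, however, I sketch the strategy used in \cite{D_M_T_95}, which reduces the continuous-time problem to a discrete-time skeleton via sampling.

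The first step would be to apply Dynkin's formula for the extended generator---available because $V\in\mathcal{D}(\widetilde{\mathcal{L}})$---to conclude that $V(Z_t) - V(z) - \int_0^t \widetilde{\mathcal{L}} V(Z_s)\rd s$ is a local $\mathcal{F}_t$-martingale under $\mathds{P}^z$. Localising via a reducing sequence of stopping times and combining with the drift condition (\ref{eq:driftCondition}) applied at $\tau_C := \inf\{t\geq 0 : Z_t \in C\}$ yields, after the standard manipulations, a bound of the form $\mathds{E}^z[\re^{\varepsilon \tau_C}] \leq K\, V(z)$ for some $\varepsilon \in (0,c)$ and finite $K$. This is the continuous-time analogue of geometric return times to $C$, and it is $V$-controlled uniformly in $z$.

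Next, I would exploit petiteness of $C$: there exist a sampling distribution $a$ on $[0,\infty)$, a nontrivial measure $\nu$ on $\BB(\ZZ)$ and $\epsilon>0$ such that $\int_0^\infty P^t(z,\cdot) a(\rd t) \geq \epsilon\, \nu(\cdot)$ for every $z\in C$. Thus $C$ is a small set for the discrete kernel $K_a := \int_0^\infty P^t a(\rd t)$. Integrating (\ref{eq:driftCondition}) against $a$ furnishes a discrete-time geometric drift $K_a V \leq \rho V + b'\mathds{1}_C$ for some $\rho<1$ and $b'<\infty$. Applying the classical discrete-time $V$-uniform ergodicity theorem (for instance Theorem~14.0.1 of Meyn--Tweedie) to $K_a$ then gives $\|K_a^n(z,\cdot) - \pi\|_V \leq D V(z)\rho^n$ for all $n\geq 0$.

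The final step lifts this sampled-chain estimate back to the full semigroup. Using the semigroup property, non-explosion (so that $P^t$ is well defined for every $t\geq 0$), and irreducibility and aperiodicity of $\{Z_t\}$ (to identify $\pi$ as the unique invariant probability measure and exclude periodic obstructions), one interpolates between sampling instants to conclude $\|P^t(z,\cdot) - \pi\|_V \leq D' V(z) (\rho')^{t}$ for some $\rho'<1$. I expect the main obstacle in this last step to be the careful handling of the continuous-time interpolation, which uses the Borel right property to secure the measurability required to transfer estimates between the skeleton and the full semigroup; this is precisely why the non-explosion and Borel-right hypotheses appear in the statement.
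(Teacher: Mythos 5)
Your proposal is correct and matches the paper's treatment exactly: the statement is quoted from \cite[Theorem~5.2]{D_M_T_95} and the paper offers no proof of its own, simply invoking the cited result. Your orienting sketch of the reference's argument (Dynkin's formula, exponential hitting-time bounds for the petite set, reduction to a sampled chain, and interpolation back to the semigroup) is an accurate summary of how that theorem is established in the source.
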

The \textsf{BPS} processes considered in this paper can be easily seen to satisfy the \textit{standard conditions} in \cite[Section 24.8]{D_93}, and thus by \cite[Theorem~27.8]{D_93} it follows that they are Borel right Markov processes. In addition since the process moves at unit speed, for any $z=(x,v)\in \mathcal{Z}$ the first exit time from $B(0,|x|+M)\times \mathds{S}^{d-1}$ {is at least $M$,} and thus, \textsf{BPS} is non-explosive. 

We will next show that \textsf{BPS} remains $\pi$-invariant when the refreshment rate is allowed to vary with $x$, and that it is irreducible and aperiodic. Finally we will show that all compact sets are \textit{small}, hence \textit{petite}. To complete the proofs of Theorems~\ref{thm:subgaussian} and \ref{thm:supergaussian} it remains to establish \eqref{eq:driftCondition} which is done in Section~\ref{sec:proof}.

\begin{lemma}\label{lemma:invariantjumpchain}
The \textsf{BPS} process is invariant with respect to $\pi$.
\end{lemma}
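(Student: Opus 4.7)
\medskip

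\noindent\textbf{Proof proposal for Lemma~\ref{lemma:invariantjumpchain}.}
The plan is to verify that $\pi(\mathcal{L}f)=0$ for every $f \in C_c^\infty(\mathcal{Z})$, which suffices because such $f$ form a measure-determining class contained in the domain of the (strong) generator, and the BPS process satisfies the standard conditions of \cite[Section~24.8]{D_93} so that this characterises invariant measures. Writing out the generator \eqref{eq:generator} using \eqref{eq:kernelK} yields the decomposition
\begin{equation*}
\mathcal{L}f(x,v) = \underbrace{\langle \nabla_x f(x,v),v\rangle}_{\text{transport}} + \underbrace{\lref(x)\!\int\! [f(x,w)-f(x,v)]\,\psi(\rd w)}_{\text{refreshment}} + \underbrace{\lambda(x,v)[f(x,R(x)v)-f(x,v)]}_{\text{bounce}},
\end{equation*}
and I will integrate each piece against $\pi(\rd x,\rd v)=\bar\pi(\rd x)\psi(\rd v)$ separately.

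For the transport piece, integration by parts in $x$ (boundary terms vanish because $f$ is compactly supported and $\bar\pi \in C^1$ by \eqref{eq:c2abs}) gives
\begin{equation*}
\int \langle \nabla_x f(x,v),v\rangle\,\bar\pi(x)\psi(\rd v)\rd x \;=\; \int f(x,v)\,\langle \nabla U(x),v\rangle\,\bar\pi(x)\psi(\rd v)\rd x.
\end{equation*}
For the refreshment piece, the $x$-dependence of $\lref$ is irrelevant: since $\int f(x,w)\psi(\rd w)$ does not depend on $v$ and $\psi$ is a probability measure, one immediately sees that
\begin{equation*}
\int \lref(x)\!\int\![f(x,w)-f(x,v)]\psi(\rd w)\,\bar\pi(x)\psi(\rd v)\rd x = 0.
\end{equation*}
This is the only spot where the generalisation to $x$-dependent refreshment rates matters, and it works without any extra assumption.

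For the bounce piece I use that $R(x)$ is an orthogonal involution, so $\psi$ is invariant under $v\mapsto R(x)v$ for each fixed $x$. Applying this change of variables to the $f(x,R(x)v)$ term, together with the identity $\langle \nabla U(x),R(x)v\rangle = -\langle \nabla U(x),v\rangle$ (which gives $\lambda(x,R(x)v)=\langle\nabla U(x),v\rangle_-$), produces
\begin{equation*}
\int \lambda(x,v)[f(x,R(x)v)-f(x,v)]\bar\pi(x)\psi(\rd v)\rd x
= \int f(x,v)\bigl[\langle\nabla U(x),v\rangle_- - \langle\nabla U(x),v\rangle_+\bigr]\bar\pi(x)\psi(\rd v)\rd x,
\end{equation*}
which equals $-\int f(x,v)\langle\nabla U(x),v\rangle\,\bar\pi(x)\psi(\rd v)\rd x$ and therefore exactly cancels the transport contribution. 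Summing the three pieces gives $\pi(\mathcal{L}f)=0$, establishing $\pi$-invariance.

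There is no real obstacle here; the argument is a routine adaptation of the computation in \cite{BC_D_V_15,M_16} to position-dependent $\lref$, with the only things to verify being (i) that the refreshment cancellation does not require $\lref$ to be constant in $x$, which is immediate, and (ii) that $C_c^\infty(\mathcal{Z})$ is a core for purposes of characterising invariance, which follows from the standard-conditions framework of \cite{D_93}.
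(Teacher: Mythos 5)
Your computation is a genuinely different route from the paper's. The paper does not integrate the generator against $\pi$ at all: it invokes Costa's correspondence \cite{C_90} between invariant measures of a piecewise deterministic process and invariant measures of its embedded jump chain $\{\Theta_k\}$, constructs the candidate measure $\mu(\rd x,\rd v)\propto\bar{\lambda}(x,-v)\pi(\rd x,\rd v)$ (finite by Assumption \eqref{eq:integrability}), and verifies by an explicit change of variables $z=x+sv$ that $\mu$ is invariant for the jump-chain kernel $\mathcal{Q}$; invertibility of Costa's operator $\mathcal{T}$ then pulls this back to invariance of $\pi$ for the flow. Your three-way decomposition and the individual cancellations (integration by parts for transport, $R(x)$-invariance of $\psi$ together with $\lambda(x,R(x)v)=\langle\nabla U(x),v\rangle_-$ for the bounce, and the trivial vanishing of the refreshment term for arbitrary $\lref(x)$) are all correct, and your observation that compact support of $f$ removes any need for the integrability assumption \eqref{eq:integrability} is a genuine simplification at that step.

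However, there is a real gap in the reduction you lean on. Knowing that $\pi(\mathcal{L}f)=0$ for all $f$ in a subset $D\subset\mathcal{D}(\mathcal{L})$ does \emph{not} in general imply $\pi P^t=\pi$ unless $D$ is a \emph{core} for the generator (or one supplies an equivalent approximation argument); being a ``measure-determining class contained in the domain'' is not the right condition, and the standard conditions of \cite[Section~24.8]{D_93} give you that \textsf{BPS} is a Borel right process, not that $C_c^\infty(\mathcal{Z})$ is a core for its strong generator. Establishing the core property for a piecewise deterministic generator with an unbounded jump rate $\bar{\lambda}$ is a nontrivial functional-analytic task, and this is precisely the difficulty the paper's route through \cite{C_90} is designed to avoid: Costa's theorems convert the invariance question into a concrete identity for the embedded chain, requiring only the finiteness of $\mu$, which is where \eqref{eq:integrability} enters. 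To close your argument you would need either to prove the core property, or to replace the generator identity by a direct semigroup computation (e.g.\ via the resolvent or a Duhamel/approximation argument as in the PDMP invariance literature).
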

\begin{proof}
We prove invariance using the approach developed in \cite{C_90}, see also \cite{C_D_08}, where a link is provided between the invariant measures of $\{Z_t:t\geq 0\}$ and those of the embedded discrete-time Markov chain $\{\Theta_k:k\geq 0\}:=\{(X_{\tau_{k}},V_{\tau_{k}}):k\geq 0\}$.
 The Markov transition kernel of this chain is given for $A\times B\in \mathcal{B}(\mathcal{Z})$
by
\begin{align*}
\mathcal{Q}\left((x,v), A\times B \right)
&= \int_{0}^\infty \exp\{-\int_0^s \bar{\lambda}(x+uv,v)\rd u\}
\llll(x+sv,v) K\left((x+sv, v), A\times B \right) \rd s,
\end{align*}
where $K$ is defined in (\ref{eq:kernelK}). We also define for $A\times B\in \mathcal{B}(\mathcal{Z})$ the measure
\begin{align*}
\mu(A\times B):= \int\bar{\lambda}(x,v) \pi(\rd x, \rd v) K\left((x,v), A\times B \right)
&=\int_{A\times B} \left[ \lref(x) + \lambda(x, R(x)v) \right] \pi(\rd x, \rd v) \\
&=\int_{A\times B} \bar{\lambda}(x, -v) \pi(\rd x, \rd v) ,
\end{align*}
as $\lambda(x, R(x)v) = \lambda(x,-v)$. This measure is finite by the integrability condition (\ref{eq:integrability}).
{We set $\xi := (\mu(\mathcal{Z}))^{-1}$ and $\bar \mu :=  \xi \mu$.}
The measure $\bar \mu$ satisfies
$\bar \mu= \mathcal{T} \pi$, where $\mathcal{T}$ is  operator defined in \cite[Section~3.3]{C_90} mapping invariant measures of $\{Z_t:t\geq 0\}$ to invariant measures of $\{\Theta_k:k\geq 0\}$.
By \cite[Theorem~3]{C_90}, $\mathcal{T}$ is invertible.
Therefore, from \cite[Theorem~2]{C_90}, it suffices to prove the result to show that
$\mu$ is invariant for $\{\Theta_k\}$ which we now establish.

For continuous, bounded $f:\mathcal{Z}\to \mathds{R}$  we have 
\begin{align*}
\lefteqn{\zeta\iint \mu(\rd x, \rd v)\mathcal{Q}\left((x,v), \rd y, \rd w\right) f(y,w)}\\
&=\iint \re^{-U(x)} \rd x \psi(\rd v) \llll(x,-v)  \int_{0}^\infty\exp\{-\int_0^s \bar{\lambda}(x+uv,v)\rd u\} \llll(x+sv,v) Pf (x+sv, v)\rd s\\
&=\int_{s=0}^\infty \rd s\iint \re^{-U(x)} \rd x\psi(\rd v) \llll(x,-v)
 \exp\{-\int_0^s \bar{\lambda}(x+uv,v)\rd u\} \llll(x+sv,v) Pf (x+sv, v)
\intertext{and letting $z=x+sv$}
&=\int_{s=0}^\infty \rd s\iint\rd z\psi(\rd v) \re^{-U(z-sv)}  \llll(z-sv,-v)\exp\{-\int_0^s \bar{\lambda}(z+(u-s)v,v)\rd u\} \llll(z,v) Pf (z, v)\\
&=\int_{s=0}^\infty \rd s\iint\rd z\psi(\rd v) \llll(z-sv,-v) \exp\{-U(z-sv) -\int_0^s \bar{\lambda}(z-wv,v)\rd w\} \llll(z,v) Pf (z, v).
\end{align*}
Since
{
\begin{align*}
U(z)
&=U(z-sv) + \int_{w=0}^s \langle \nabla U(z-wv), v\rangle \rd w\\
&=U(z-sv)+ \int_{w=0}^s \left[\max\{\nabla \langle U(z-wv), v\rangle,0\}  + \min\{\nabla \langle U(z-wv), v\rangle,0\}\right]\rd w,
\end{align*}
}
it follows that
{
\begin{align*}
U(z) + \int_{w=0}^s \max\{\nabla \langle U(z-wv), -v\rangle,0\}\rd w
&= U(z-sv)+ \int_{w=0}^s \max\{\nabla \langle U(z-wv), v\rangle,0\} \rd w.
\end{align*}
}
Therefore
\begin{align*}
\lefteqn{\zeta \iint \mu(\rd x, \rd v)\mathcal{Q}\left((x,v), \rd y, \rd w\right) f(y,w)}\\
&=\int_{s=0}^\infty \rd s\iint\rd z\psi(\rd v)   \llll(z-sv,-v) \exp\{-U(z)-\int_0^s \bar{\lambda}(z-wv,-v)\rd w\} \llll(z,v) Pf (z, v)\\
&=\iint\re^{-U(z)}\rd z\psi(\rd v) \llll(z,v) Pf (z, v) \int_{s=0}^\infty \rd s \llll(z-sv,-v) \exp\{-\int_0^s \bar{\lambda}(z-wv,-v)\rd w\} \\
&=\iint\re^{-U(z)}\rd z\psi(\rd v) \llll(z,v) Pf (z, v)
= \zeta \iint \pi(\rd z, \rd v) \llll(z,v) Pf (z, v)
\\
&= \zeta \iint \mu(\rd z, \rd v) f(z,v),
\end{align*}
proving that $\mu$ is invariant for $\mathcal{Q}$.
\end{proof}

\begin{remark}
The Markov chain $\{\Theta_k:k\geq 0\}$ admits an invariant probability measure proportional to $\bar{\lambda}(x, -v) \pi(\rd x, \rd v)$. It follows from a simple change of measure argument that under ergodicity and integrability conditions one has
\begin{equation}
\frac{{\sum_{k=1}^{n}g\left(X_{\tau_{k}},V_{\tau_{k}}\right)/\bar{\lambda}\left(X_{\tau_{k}},-V_{\tau_{k}}\right)}}{\sum_{k=1}^{n}1/\bar{\lambda}\left(X_{\tau_{k}},-V_{\tau_{k}}\right)} \to \pi(g)\qquad \text{a.s. as $n\to \infty$}.
 \end{equation}
This is an alternative estimator of $\pi(g)$ compared to $T^{-1}  \int_0^T g(Z_s)\rd s$.
\end{remark}

\begin{lemma}\label{lemma:irreducibility}For all $T>0$, $z:=(x_0, v_0) \in B(0, T/6)\times \mathds{S}^{d-1}$, and Borel set
	$A\subseteq B(0,\tfrac{T}{6})\times \mathds{S}^{d-1}$,
	$$\rP^z(Z_T \in A) \geq {C(T, d, \lref)} \int \int_A \psi(\rd v) \rd x,$$
	for some constant $C(T, d, \lref)>0$ depending only on $T, d, \lref$.
	Hence, all compact sets are small. Moreover, the process $\{Z_t: t\geq 0\}$ is irreducible.
\end{lemma}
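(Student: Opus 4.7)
The plan is to lower-bound the density of $Z_T$ on $B(0,T/6)\times \mathds{S}^{d-1}$ by isolating an event of positive probability and pushing it forward to $(X_T,V_T)$ via a change of variables. For $z=(x_0,v_0)\in B(0,T/6)\times \mathds{S}^{d-1}$, let $E$ denote the event that the process has exactly two events in $(0,T)$, both refreshments, at random times $s_1<s_2$ with post-refresh velocities $V_1,V_2$. On $E$,
$$X_T=x_0+s_1 v_0+(s_2-s_1)V_1+(T-s_2)V_2,\qquad V_T=V_2.$$
Since the process moves at unit speed from $x_0\in B(0,T/6)$, the whole trajectory stays in $B(0,7T/6)$; by continuity of $\nabla U$ and positivity of $\lref$, on this compact set $\bar{\lambda}$ is bounded above by some $M=M(T)<\infty$ and $\lref$ is bounded below by some $c_{\lref}>0$. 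Hence the joint density of $(s_1,s_2,V_1,V_2)$ under $E$, with respect to $\rd s_1\,\rd s_2\,\psi(\rd V_1)\,\psi(\rd V_2)$, is bounded below by $c_1:=c_{\lref}^{2}\, e^{-MT}$.

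Next I would perform the change of variables to $(X_T,V_T)$. Introducing $w:=(s_2-s_1)V_1\in \mathds{R}^d\setminus\{0\}$, polar integration gives $\rd s_2\,\psi(\rd V_1)=|\mathds{S}^{d-1}|^{-1}|w|^{-(d-1)}\rd w$. Rewriting $X_T=x_0+s_1 v_0+w+(T-s_1-|w|)V_2$ and holding $V_2=v$ fixed, the Jacobian of $w\mapsto X_T$ equals $\det(I-v\hat w^T)=1-\langle\hat w,v\rangle$ with $\hat w:=w/|w|$. Setting $a(s_1):=x-x_0-s_1 v_0-(T-s_1)v$, the equation $X_T=x$ rewrites as $w=a(s_1)+|w|v$, which uniquely determines $|w|=-|a(s_1)|^{2}/(2\langle a(s_1),v\rangle)$ whenever $\langle a(s_1),v\rangle<0$, and yields $1-\langle\hat w,v\rangle=|\langle a(s_1),v\rangle|/|w|$.

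The crux is verifying admissibility and uniform control of these quantities over a range of $s_1$. Restricting to $s_1\in(0,T/6)$, direct estimates using $|x|,|x_0|\leq T/6$ give $\langle a(s_1),v\rangle\in[-4T/3,-T/3]$ and $|a(s_1)|\in[T/3,4T/3]$. The key algebraic identity
$$|a(s_1)|^{2}+2(T-s_1)\langle a(s_1),v\rangle=|x-x_0-s_1 v_0|^{2}-(T-s_1)^{2}\leq (T/2)^{2}-(5T/6)^{2}=-\frac{4T^{2}}{9}$$
rewrites as $|w|=(T-s_1)-[(T-s_1)^{2}-|x-x_0-s_1 v_0|^{2}]/(2|\langle a(s_1),v\rangle|)$, which shows $|w|<T-s_1$ (so $s_2<T$ and the event is admissible for every target) and provides uniform upper and lower bounds on both $|w|$ and $|1-\langle\hat w,v\rangle|$. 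Integrating the pushforward lower bound over $s_1\in(0,T/6)$ then yields a constant $C=C(T,d,\lref)>0$ such that $\rP^{z}(Z_T\in A)\geq C\iint_A \rd x\,\psi(\rd v)$ for every Borel $A\subseteq B(0,T/6)\times \mathds{S}^{d-1}$.

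This minorisation shows that $B(0,T/6)\times\mathds{S}^{d-1}$ is small with minorising measure proportional to the restriction of $\mathrm{Leb}\otimes\psi$; hence any compact set in $\mathcal{Z}$ is small (by choosing $T$ large enough to contain it), and irreducibility of $\{Z_t:t\geq 0\}$ with respect to $\mathrm{Leb}\otimes\psi$ restricted to any such ball follows immediately. The main obstacle is precisely the admissibility constraint $s_2<T$: a priori, with only two refreshments there is no reason why every target in $B(0,T/6)\times \mathds{S}^{d-1}$ should be reachable at time $T$, and the algebraic identity above is what forces this for the chosen range of $s_1$.
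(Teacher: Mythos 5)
Your proof is correct, and while it starts from the same place as the paper's --- condition on the event of exactly two refreshments in $(0,T)$, bound $\bar\lambda$ above by its supremum $M$ on the reachable compact set $B(0,7T/6)$ and the refreshment intensity below by $\lref$, so that the law of $(s_1,s_2,V_1,V_2)$ dominates $\lref^2 e^{-MT}\,\rd s_1\rd s_2\,\psi(\rd V_1)\psi(\rd V_2)$ --- it finishes the argument by a genuinely different route. The paper avoids any Jacobian computation: it fixes the first jump time and the last velocity, observes that for every target $x''\in B(0,T/6)$ there exist $(r_\ast,v_\ast)$ hitting $x''$ exactly, and then lower-bounds the probability that $(R,V)$ lands in a $\delta$-neighbourhood of $(r_\ast,v_\ast)$ by $C\delta^{d}$, from which domination of Lebesgue measure is read off. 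Your argument instead pushes the reference measure forward explicitly: the substitution $w=(s_2-s_1)V_1$ with $\rd s_2\,\psi(\rd V_1)=|\mathds{S}^{d-1}|^{-1}|w|^{-(d-1)}\rd w$, the matrix-determinant-lemma evaluation $\det(I-v\hat w^{T})=1-\langle\hat w,v\rangle=|\langle a(s_1),v\rangle|/|w|$, and the unique solution $|w|=-|a(s_1)|^{2}/(2\langle a(s_1),v\rangle)$ are all correct, and your bounds $\langle a(s_1),v\rangle\in[-4T/3,-T/3]$, $|a(s_1)|\in[T/3,4T/3]$ for $s_1\in(0,T/6)$ do keep $|w|$ and the Jacobian uniformly bounded above and below. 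You are also right that the admissibility constraint $s_2<T$ is the real issue, and your identity $|a(s_1)|^{2}+2(T-s_1)\langle a(s_1),v\rangle=|x-x_0-s_1v_0|^{2}-(T-s_1)^{2}\leq-4T^{2}/9$ settles it cleanly; this plays exactly the role of the paper's restriction to $t>5T/6$ and the existence of $(r_\ast,v_\ast)$. What your version buys is an explicit, in-principle computable constant $C(T,d,\lref)$ and a direct density bound, at the cost of the Jacobian bookkeeping; the paper's version is softer but leaves implicit the (standard) passage from uniform small-ball lower bounds to domination of Lebesgue measure for general Borel sets.
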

\begin{proof}
%
The proof is inspired by \cite{M_16}.
Let $f: B(0,T/6)\times \mathds{S}^{d-1} \to [0,\infty)$ be a bounded positive function.
Let $E$ be the event that exactly two events have occurred up to time $T$, and both of them are refreshments.
Then
\begin{align*}
\lefteqn{\rE^z[ f(Z_T)]}\\
&\geq \rE^z[ f(Z_T) \mathbf{1}_E]\\
&= \int_{\mathds{S}^{d-1}}\int_{\mathds{S}^{d-1}} \psi(\rd v_1) \psi(\rd v_2)
\int_{t=0}^T \int_{s=0}^{T-t} \rd s \rd t
\llll(x_0+t v_0,v_0)\\
&\qquad \exp\left\{-\int_{u=0}^t \llll(x_0+uv_0,v_0)\rd u \right\}
\times \frac{\lref(x+tv_0)}{\llll(x+tv_0,v_0)}\\
&\qquad \times \llll(x_0+t v_0 +s v_1,v_1) \exp\left\{-\int_{w=0}^s \llll(x_0+v_0 t + w v_1,v_1)\rd w \right\}
\frac{\lref(x+tv_0+sv_1)}{\llll(x+tv_0+sv_1,v_1)}
\\
&\qquad \times\exp\left\{-\int_{r=0}^{T-s-t} \llll(x_0+v_0 t + s v_1+ r v_2,v_2)\rd r \right\}f\left(x_0+tv_0+sv_1 +(T-s-t)v_2, v_2\right)\\
&= \int_{\mathds{S}^{d-1}} \int_{\mathds{S}^{d-1}}\psi(\rd v_1) \psi(\rd v_2)
\int_{t=0}^T \int_{s=0}^{T-t} \rd s \rd t
\lref(x+tv_0) \exp\left\{-\int_{u=0}^t \llll(x_0+uv_0,v_0)\rd u \right\}\\
&\qquad \times \lref(x+tv_0+sv_1) \exp\left\{-\int_{w=0}^s \llll(x_0+v_0 t + w v_1,v_1)\rd w \right\}
\\
&\qquad \times\exp\left\{-\int_{r=0}^{T-s-t} \llll(x_0+v_0 t + s v_1+ r v_2,v_2)\rd r \right\}
f\left(x_0+tv_0+sv_1 +(T-s-t)v_2, v_2\right)
.
\end{align*}
As the process moves at unit speed and $|x_0|\leq T/6$, it follows that
$\sup_{t\leq T}|X_t|\leq 7T/6$. Let
$$K:= \sup_{\substack{|x|\leq 7T/6 \\ v\in \mathds{S}^{d-1}}} \llll(x,v) < \infty,$$
and recall that $\llll(x,v) \geq \lref>0$.
Therefore
\begin{align*}
\lefteqn{\rE^z[ f(Z_T)]}\\
&\geq \int_{\mathds{S}^{d-1}} \int_{\mathds{S}^{d-1}}\psi(\rd v_1) \psi(\rd v_2)
\int_{s=0}^T \int_{t=0}^{T-s} \rd s \rd t
\lref\lref  \exp\left\{-\int_{u=0}^t K\rd u -\int_{w=0}^s K\rd w -\int_{r=0}^{T-s-t} K\rd r \right\}
 \\
 &{ \qquad \times f\left(x_0+sv_0+tv_1 +(T-s-t)v_2, v_2\right) }\\
&=\int_{\mathds{S}^{d-1}}\int_{\mathds{S}^{d-1}} \psi(\rd v_1) \psi(\rd v_2)
\int_{s=0}^T \int_{t=0}^{T-s} \rd s \rd t
\lref^2\\
&\qquad \times\exp\left\{-Kt - Ks -K(T-s-t)\right\}f\left(x_0+sv_0+tv_1 +(T-s-t)v_2, v_2\right)\\
&=\int_{\mathds{S}^{d-1}}\int_{\mathds{S}^{d-1}} \psi(\rd v_1) \psi(\rd v_2)
 \int_{s=0}^{T} \rd s \int_{t=s}^{T} \rd t
\lref^2 \exp\left\{-KT\right\}f\left(x_0+sv_0+(t-s)v_1 +(T-t)v_2, v_2\right)\\
&=\int_{\mathds{S}^{d-1}}\int_{\mathds{S}^{d-1}} \psi(\rd v_1) \psi(\rd v_2)
 \int_{t=0}^{T} \rd t \int_{s=0}^{t} \rd s
\lref^2 \exp\left\{-KT\right\}f\left(x_0+sv_0+(t-s)v_1 +(T-t)v_2, v_2\right),\\
&\geq\int_{\mathds{S}^{d-1}}\int_{\mathds{S}^{d-1}} \psi(\rd v_1) \psi(\rd v_2)
 \int_{t=5T/6}^{T} \rd t \int_{s=0}^{t} \rd s
\lref^2 \exp\left\{-KT\right\}f\left(x_0+sv_0+(t-s)v_1 +(T-t)v_2, v_2\right)\\
&=\int_{\mathds{S}^{d-1}}\int_{\mathds{S}^{d-1}} \psi(\rd v_1) \psi(\rd v_2)
 \int_{t=5T/6}^{T} \rd t \,\, t \int_{r=0}^{1} \rd r
\lref^2 \exp\left\{-KT\right\}f\left(x_0+rtv_0+(t-rt)v_1 +(T-t)v_2, v_2\right)\\
&\geq \int_{\mathds{S}^{d-1}}\int_{\mathds{S}^{d-1}} \psi(\rd v_1) \psi(\rd v_2)
\frac{5T}{6} \int_{t=5T/6}^{T} \rd t   \int_{r=0}^{1} \rd r
\lref^2 \exp\left\{-KT\right\}f\left(x_0+trv_0+t(1-r)v_1 +(T-t)v_2, v_2\right).
\end{align*}

Fix $t>5T/6$ and $v_2 \in \mathds{S}^{d-1}$ so that
$x':=x_0+(T-t)v_2$ is now fixed. Since $t>5T/6$ it follows that $T-t<T/6$. Since also $|x_0|\leq T/6$ we must have that $|x'|\leq T/3$. Let $x''\in B(0,T/6)$ be arbitrary. Then it follows that $|x'-x''| \leq T/2$, and therefore there exists
$v_\ast\in \mathds{S}^{d-1}$ and $r_\ast \in [0,1]$ such that
$$x'+tr_\ast v_0 +t(1-r_\ast) v_\ast = x''.$$
Then letting $R\sim U[0,1]$ and $V\sim \psi $ be independent, for $\delta$ small enough  we have
\begin{align*}
\lefteqn{\int_{\mathds{S}^{d-1}} \psi(\rd v_1)  \int_{r=0}^{1} \rd r
\mathds{1}_{B(x'',\delta)} \left(x'+rtv_0+t(1-r)v_1\right)}\\
&= \rP\left\{|x'+tR v_0 +t(1-R) V -x'' | \leq \delta\right\}\\
&= \rP\left\{|tR v_0 +t(1-R) V -tr_\ast v_0 -t(1-r_\ast) v_\ast | \leq \delta\right\} \\
&= \rP\left\{|t(R-r_\ast) v_0 +t(V-v_\ast) -t(RV-r_\ast v_\ast)| \leq \delta\right\} \\
&= \rP\left\{|t(R-r_\ast) v_0 +t(V-v_\ast) -t(RV - Rv_\ast + Rv_\ast-r_\ast v_\ast)| \leq \delta\right\}\\
&= \rP\left\{|t(R-r_\ast) v_0 +t(V-v_\ast) -tR(V -v_\ast)-t(R-r_\ast) v_\ast)| \leq \delta\right\}\\
&\geq \rP\left\{{T}|R-r_\ast| + T|V-v_\ast| +T|V -v_\ast|+T|R-r_\ast| \leq \delta\right\} \\
&= \rP\left\{|R-r_\ast| + |V -v_\ast| \leq \frac{\delta}{2T}\right\} \geq \rP\left\{\max\{|R-r_\ast|, |V -v_\ast|\} \leq \frac{\delta}{4T}\right\}\\
&= \rP\left\{|R-r_\ast|\leq \frac{\delta}{4T}\right\} \rP\left\{ |V -v_\ast| \leq \frac{\delta}{4T}\right\}\geq {\frac{\delta}{4T} \times \Big( C_1  \Big( \frac{\delta}{4T} \Big) \Big)^{d-1}}   = {C_2(T) \delta^d,}
\end{align*}
{where $C_1 > 0$ is a constant, and where $C_i(\cdot)$ denotes quantities depending only on the variables in the bracket.}

Therefore for all $t>5T/6$ and $v_2\in \mathds{S}^{d-1}$ there is a {$C_3(T, d) > 0$} such that
\begin{align*}
\int_{\mathds{S}^{d-1}} \psi(\rd v_1)  \int_{r=0}^{1} \rd r
f \left(x_0+rtv_0+t(1-r)v_1 +(T-t)v_2, v_2\right)
\geq {C_3(T, d)} \int_{B(0, T/6)}f \left(x'', v_2\right) \rd x''
\end{align*}
and thus
\begin{align*}
\rE^z[ f(Z_T)]
&\geq {C_4(T, d, \lref)}\int_{\mathds{S}^{d-1}} \psi(\rd v_2) \int_{t=5T/6}^{T} \rd t \int_{x'' \in B(0,\frac{T}{6})}  f\left(x'', v_2\right)\rd x''\\
&\geq {C_5(T, d, \lref)}\int_{\mathds{S}^{d-1}} \int_{x''\in B(0,\frac{T}{6})}  f\left(x'', v\right)\rd x'' \psi(\rd v),
\end{align*}
and since $f$ is generic, we conclude that for all $z=(x'',v) \in B(0,\tfrac{T}{6})\times \mathds{S}^{d-1}$, and any Borel set
$A\subseteq B(0,\tfrac{T}{6})\times \mathds{S}^{d-1}$
$$\rP^z(Z_T \in A) \geq {C_5(T, d, \lref)} \int \int_A \psi(\rd v) \rd x,$$
whence it follows that for any $R>0$ the set $B(0,R)\times \mathds{S}^{d-1}$ is petite.

Given any compact set $U\subset \mathds{R}^d \times \mathds{S}^{d-1}$, we can find $R>0$ such that
$U\subset B(0,R)\times \mathds{S}^{d-1}$, and we can easily conclude using the above that $U$ must also be petite.

Irreducibility follows easily.
\end{proof}

\begin{lemma}
The process $\{Z_t: t\geq 0\}$ is aperiodic.
\end{lemma}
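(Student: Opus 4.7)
The plan is to derive aperiodicity essentially as a free corollary of Lemma \ref{lemma:irreducibility}. In the Down--Meyn--Tweedie framework used throughout, a $\psi$-irreducible continuous-time Markov process is aperiodic as soon as there exists a petite set $C$ with $\pi(C)>0$ and a time $t_0>0$ such that $P^t(z,C)>0$ for every $z\in C$ and every $t\geq t_0$; equivalently, every sufficiently fine skeleton is irreducible. So the goal reduces to exhibiting such a set, for some $t_0$, and then verifying that positivity actually holds over a continuum of times rather than just a discrete set.

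First, I would fix an arbitrary $R>0$ and set $C:=B(0,R)\times\mathds{S}^{d-1}$. By Lemma \ref{lemma:irreducibility}, $C$ is small, and in particular petite, and clearly $\pi(C)>0$. The key point is that the minorization established in the proof of Lemma \ref{lemma:irreducibility},
$$\rP^z(Z_T\in A)\geq C(T,d,\lref)\int\int_A\psi(\rd v)\,\rd x,\qquad z\in B(0,T/6)\times\mathds{S}^{d-1},\ A\subseteq B(0,T/6)\times\mathds{S}^{d-1},$$
is valid for \emph{every} $T>0$: the parameter $T$ enters only as an arbitrary time horizon for the three-leg path with two refreshments used in that proof, and the constant $C(T,d,\lref)$ (essentially $\lref^2 e^{-KT}$ times a power of $T$) is strictly positive for each finite $T$.

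Next, for any $t\geq 6R$ one has $C=B(0,R)\times\mathds{S}^{d-1}\subseteq B(0,t/6)\times\mathds{S}^{d-1}$, so Lemma \ref{lemma:irreducibility} applied with horizon $T=t$ and $A=C$ gives, for every $z\in C$,
$$P^t(z,C)\geq C(t,d,\lref)\,\psi(\mathds{S}^{d-1})\,\mathrm{vol}(B(0,R))>0.$$
Setting $t_0:=6R$, we conclude that $P^t(z,C)>0$ for all $z\in C$ and all $t\in[t_0,\infty)$, which is exactly the strong aperiodicity condition above, and aperiodicity follows.

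In terms of obstacles there is essentially nothing substantial: the only thing to be mindful of is the reading of Lemma \ref{lemma:irreducibility} as producing a uniform lower bound for \emph{every} large $T$ rather than one distinguished time. This is manifest from its proof, which can accommodate any $T$ large enough that the starting ball fits inside $B(0,T/6)$, so no new computation is required beyond this observation.
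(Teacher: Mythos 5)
Your proposal is correct and is essentially identical to the paper's argument: the paper takes $A'=B(0,1)\times\mathds{S}^{d-1}$ and $T=6$, then applies Lemma~\ref{lemma:irreducibility} with horizon $t>T$ (using that $B(0,1)\subset B(0,t/6)$) to get $P^t(z,A')>0$ for all $z\in A'$ and all $t\geq T$, exactly as you do with general $R$ and $t_0=6R$. Your reading of Lemma~\ref{lemma:irreducibility} as valid for every $T>0$ is the intended one.
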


\begin{proof}
%
We show that for some small set $A'$, there exists a $T$ such that $P^t(z,A')>0$ for all $t\geq T$ and $z\in A'$.

Let $A':=B(0, 1)\times \mathds{S}^{d-1}$, $T = 6$, and suppose that $t > T$. By Lemma~\ref{lemma:irreducibility}, for all $z \in B(0, t/6) \times \mathds{S}^{d-1}$ and Borel set $A \subset B(0, t/6) \times \mathds{S}^{d-1}$, we have $$\P^z(Z_t \in A) \ge C(t, d, \lref) \int \int_{A} \psi(\rd v) \rd x,$$
for some $C(t, d, \lref) > 0$. Hence, by picking $A = A'$, we have, since $B(0, 1) \subset B(0, t/6)$, that for all $z \in A'$,
\begin{equation*}\P^z(Z_t \in A') \ge C(t, d, \lref) \int \int_{A'} \psi(\rd v) \rd x > 0.\qedhere
\end{equation*}
\end{proof}
\section{Proofs of main results\label{sec:proof}}
To complete the proofs of Theorems~\ref{thm:subgaussian} and \ref{thm:supergaussian} 
it remains to show that $V:\mathcal{Z}\to [0,\infty)$ defined in \eqref{eq:lyapunovfn}
satisfies \eqref{eq:driftCondition}.

\subsection{Extended Generator of \textsf{BPS}\label{subsec:extended}}
The expression for the generator provided in \eqref{eq:generator} is not well-defined for $V$, which may not be continuously differentiable at the points $(x,v)$ such that $\langle \nabla U(x),v \rangle=0$.
However, $V$ belongs to $\mathcal{D}(\widetilde{\mathcal{L}})$, the domain of $\widetilde{\mathcal{L}}$, the \textit{extended generator} (see \cite[Section~26]{D_93}) of \textsf{BPS} and this suffices for Theorem~\ref{thm:DMT} to apply.

By Assumption~(\ref{eq:c2abs2}), or the stronger Assumption (\ref{eq:c2abs}),
it easily follows that for all $(x,v)$ the function
$t\mapsto V(x+tv, v)$ is locally Lipschitz so it is absolutely continuous \cite[Proposition~11.8]{D_93}.
Therefore by \cite[Theorem~26.14]{D_93}, since there is no \textsl{boundary} (see \cite[Section~24]{D_93}), $V$ is bounded as a function of $v$ and the jump rate $\bar{\lambda}$ is locally bounded, it follows that $V\in \mathcal{D}(\widetilde{\mathcal{L}})$.

However, at least at points $(x,v)$ such that $\rate{v}=0$, $\nabla_x V(x,v)$ does not exist and therefore the expression given in \eqref{eq:generator} will not make sense. At these points we can express the extended generator in an alternative form given by
\begin{equation}\label{eq:ext_generator}
\widetilde{\mathcal{L}} f(x,v) 
= \mathfrak{V}f(x,v) +
\llll(x,v) \left[Kf\left(x, v\right) -f(x,v) \right],
\end{equation}
where
\begin{equation}\label{eq:vector_field}
\mathfrak{V}f(x,v):= \frac{\rd}{\rd t} f(x+tv,v)\Big\vert_{t=0+},
\end{equation}
which coincides with \eqref{eq:generator} for continuously differentiable functions. The fact that this indeed coincides with the extended generator follows from the local Lipschitz property of $t\mapsto V(x+tv,v)$ and the proof of \cite[Theorem~26.14, bottom of page 71]{D_93}. Indeed, for any fixed $z=(x,v)\in \mathcal{Z}$, let $\{T_i\}_{i\geq 1}$ denote the event times of \textsf{BPS} started from $(x,v)$, the paths of which we denote with $\{Z_t:t\geq 0\}$, where $Z_t=(X_t,V_t)$. Then 
\begin{align*}
V\left( Z_{T_i^-}\right)- V\left( Z_{T_{i-1}}\right) 
&=\int_{0}^{T_i-T_{i-1}} \frac{\rd}{\rd s}V\left( X_{T_{i-1}}+sV_{T_{i-1}}, V_{T_{i-1}}\right)\rd s,
\end{align*}
since the local Lipschitz property of $t\mapsto V(x+tv,v)$ also implies it is almost everywhere differentiable and equal to the integral of its derivative (see e.g. \cite[Proposition~11.8]{D_93}). 
Thus for almost every $t$, the left and right derivatives of $V(x+tv,v)$ coincide and thus
\begin{align*}
V\left( Z_{T_i^-}\right)- V\left( Z_{T_{i-1}}\right) 
&=\int_{0}^{T_i-T_{i-1}} \mathfrak{V}V\left( X_{T_{i-1}}+sV_{T_{i-1}}, V_{T_{i-1}}\right)\rd s.
\end{align*}
From this and the proof of the first part of \cite[Theorem~26.14]{D_93} it follows that 
$$V(Z_t)-V(z)-\int_0^t \mathfrak{V}V(Z_s)\rd s,$$ is a local martingale and thus that 
$\widetilde{\mathcal{L}}$ coincides with the extended generator given in \cite[Eq.(26.15)]{D_93}.

From the discussion in \cite[p.~32]{D_93}, it is clear that for $f\in \mathcal{D}(\widetilde{\mathcal{L}})$, the function $\widetilde{\mathcal{L}}f:\mathcal{Z}\to \mathds{R}$ is uniquely defined everywhere except possibly on a set $A$ of \textit{zero potential}, that is
$$\int_0^\infty \mathds{1}_A(Z_s) \rd s = 0, \quad \text{$\mathds{P}^z$ a.s.},\quad \text{for all $z\in \mathcal{Z}$}.$$
%
For the proof of Theorem~\ref{thm:supergaussian}, $\grad_x V(x,v)$ will not be well defined for the set $A:=\{(x,v)\in\mathcal{Z}: |x|=1\}$ which has zero potential, since the linear trajectories of \textsf{BPS} and the countable number of jumps, imply it can intersect this set at most a countable number of times.

\subsection{Lyapunov functions}
\begin{lemma}[Lyapunov function-Constant refreshment]\label{lem:drift1} Let the refreshment rate be constant, i.e., 
$\lref(x):=\lref$. The function $V$ defined in (\ref{eq:lyapunovfn}) belongs to $\mathcal{D}(\widetilde{\mathcal{L}})$. If either of the conditions of Theorem~\ref{thm:subgaussian} holds, $V$ is a Lyapunov function as it satisfies (\ref{eq:driftCondition}).
\end{lemma}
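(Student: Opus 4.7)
Membership $V\in\mathcal{D}(\widetilde{\mathcal{L}})$ is already in hand from Section~\ref{subsec:extended}: under Assumption~\eqref{eq:c2abs} the map $t\mapsto V(x+tv,v)$ is locally Lipschitz, \textsf{BPS} has no boundary, and $\bar\lambda$ is locally bounded, so \cite[Theorem~26.14]{D_93} applies and $\widetilde{\mathcal{L}}V$ is given by \eqref{eq:ext_generator}. My plan is therefore to compute $\widetilde{\mathcal{L}}V/V$ in closed form and show that, under either hypothesis, it is bounded above by a strictly negative constant for $|x|$ large. Since $V$ and $\widetilde{\mathcal{L}}V$ are locally bounded and compact sets are petite by Lemma~\ref{lemma:irreducibility}, this will imply \eqref{eq:driftCondition} with $C$ chosen as a sufficiently large closed ball.

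Two elementary identities make the computation tractable. First, $\langle\nabla U(x),R(x)v\rangle = -\langle\nabla U(x),v\rangle$ forces $\bar\lambda(x,-R(x)v) = \bar\lambda(x,v)$, and hence $V(x,R(x)v) = e^{U(x)/2}/\bar\lambda(x,v)^{1/2}$. Second, letting $\phi(x) := \int\bar\lambda(x,-w)^{-1/2}\psi(\rd w)$, the substitution $\theta\mapsto\pi-\theta$ applied to the angular representation, combined with the symmetry of $p_\vartheta$ about $\pi/2$, yields the clean formula $\phi(x) = \lref^{-1/2}\bigl[1/2 + F(|\nabla U(x)|/\lref,d)\bigr]$.

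I then split on the sign of $\langle\nabla U(x),v\rangle$. When $\langle\nabla U,v\rangle=\alpha\geq 0$, $\bar\lambda(x+tv,-v)=\lref$ is locally constant in $t>0$, so $\mathfrak{V}V/V = \alpha/2$, and after assembling the jump term
\begin{equation*}
\widetilde{\mathcal{L}}V/V \;=\; -\alpha/2 \,+\, \alpha\sqrt{\lref/(\lref+\alpha)} \,+\, \lref^{3/2}\phi(x) \,-\, \lref.
\end{equation*}
When $\langle\nabla U,v\rangle=-\beta<0$, curvature enters through $\frac{\rd}{\rd t}\bar\lambda(x+tv,-v)\bigr|_{t=0^+} = -v^T\Delta U(x)v$, and
\begin{equation*}
\widetilde{\mathcal{L}}V/V \;=\; -\beta/2 \,+\, \frac{v^T\Delta U(x)v}{2(\lref+\beta)} \,+\, \lref\phi(x)\sqrt{\lref+\beta} \,-\, \lref.
\end{equation*}

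Under condition (a), $|\nabla U(x)|\to\infty$ forces $F(|\nabla U|/\lref,d)\to 0$ by dominated convergence, so $\lref^{3/2}\phi(x)\to\lref/2$. In the first case, substituting $\alpha=\lref t$ reveals $\sup_{\alpha\geq 0}\bigl[-\alpha/2+\alpha\sqrt{\lref/(\lref+\alpha)}\bigr]=c_*\lref$ for an absolute constant $c_*<1/2$, so the drift is eventually bounded by $(c_*-1/2)\lref+o(1)<0$. In the second case, setting $s:=\sqrt{\lref+\beta}\geq\sqrt{\lref}$ and completing the square gives $-\tfrac12(s-\sqrt{\lref}/2)^2-3\lref/8$ for the gradient-driven part, while the Hessian correction is at most $\alpha_1/(2\lref)$; the choice $\lref>(2\alpha_1+1)^2$ then leaves a strictly negative margin. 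Under condition (b), $|\nabla U|/\lref\geq 2\alpha_2/\lref\geq 2c_d$ eventually, and the monotonicity of $u\mapsto F(u,d)$ together with $F(c_d,d)\leq 1/4$ forces $\phi(x)\leq (3/4)\lref^{-1/2}$, which tames $\lref\phi(x)\sqrt{\lref+\beta}$ and delivers a strictly negative asymptotic drift in both sign regimes. The main obstacle will be precisely this quantitative balancing in Case 2: showing that the positive curvature term is outweighed by the leading negative contribution uniformly in $\beta\geq 0$, and it is this balance that pins down the numerical thresholds on $\lref$ in the statement.
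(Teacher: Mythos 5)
Your proposal is correct and follows essentially the same route as the paper: you derive the same case-by-case closed form for $\widetilde{\mathcal{L}}V/V$ (your identity $\phi(x)=\lref^{-1/2}\bigl[1/2+F(|\nabla U(x)|/\lref,d)\bigr]$ is exactly the paper's reduction of the refreshment integral to the angular density $p_\vartheta$, and your two displayed expressions agree term by term with the paper's \eqref{eq:expression}), and the subsequent optimizations differ only cosmetically (exact maximization of $\alpha\mapsto-\alpha/2+\alpha\sqrt{\lref/(\lref+\alpha)}$ versus the paper's AM--GM relaxation giving $37\lref/64$; completing the square in $s=\sqrt{\lref+\beta}$ versus the paper's monotonicity argument for $g(w)$).

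Two small points to tighten. First, when $\langle\nabla U(x),v\rangle=0$ the claim that $\bar\lambda(x+tv,-v)\equiv\lref$ for small $t>0$ fails if $v^{T}\Delta U(x)v<0$: the one-sided derivative is then $\mathfrak{V}V/V=-\tfrac12\langle\Delta U(x)v,-v\rangle_+/\lref\le 0$, not $0$; the paper treats this as a separate case. Since you only need an upper bound and the omitted term is nonpositive, your estimate survives, but the stated equality should be an inequality. Second, in condition (\textsc{b}) with $\rate{v}<0$, the balance you flag as ``the main obstacle'' is genuinely the crux: your gradient-plus-refreshment bound leaves a margin of only $-\lref/4$ uniformly in $s\ge\sqrt{\lref}$, while the curvature term contributes up to $\|\Delta U(x)\|/(2\lref)$ with $\lref\le\alpha_2/c_d$ forced to be \emph{small}; this can only be absorbed if $\|\Delta U(x)\|$ in fact vanishes at infinity, which is precisely what the paper's own proof invokes at that step (it takes $\langle v,\Delta U(x)v\rangle\le\epsilon_1<\lref^2/2$ for $|x|$ large). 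So your plan closes under the same implicit strengthening of the Hessian hypothesis that the paper uses, and you have correctly located where the numerical thresholds on $\lref$ are pinned down.
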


\begin{proof}
That $V\in \mathcal{D}(\widetilde{\mathcal{L}})$ follows from the discussion in Section~\ref{subsec:extended}.
We now establish that $V$ is a Lyapunov function. First we compute $\widetilde{\mathcal{L}}V(x,v)$. Notice that if $\langle \nabla U(x),v\rangle \neq 0$, then by continuity  there will be a neighborhood of $(x,v)$ on which $V(x,v)$ will be differentiable. Therefore at those points $\widetilde{\mathcal{L}}V(x,v)\equiv \mathcal{L}V(x,v)$. 
\subsubsection*{Case $\rate{v}> 0$}
We have
\begin{align*}
\langle \nabla V(x,v), v\rangle
&= \frac{1}{2}V(x,v)\langle \nabla U(x), v\rangle,
\end{align*}
and adding the reflection part we obtain
\begin{align*}
\lefteqn{\langle \nabla V(x,v), v\rangle + \langle \nabla U(x),v\rangle \left[ V(x, R_x v)-V(x,v)\right]}\\
&= \frac{1}{2}V(x,v)\langle \nabla U(x), v\rangle+ \langle \nabla U(x), v\rangle
\left[ \frac{\re^{U(x)/2}}{\sqrt{\lref+\langle \nabla U(x), -v\rangle_+}}
\frac{\sqrt{\lref+\langle \nabla U(x), -v\rangle_+}}{\sqrt{\lref + \langle \nabla U(x), v\rangle}} - V(x,v)\right]\\
&= -\frac{1}{2}V(x,v)\langle \nabla U(x), v\rangle
+ \langle \nabla U(x), v\rangle V(x,v)
\frac{\sqrt{\lref}}{\sqrt{\lref + \langle \nabla U(x), v\rangle}}.
\end{align*}
The refreshment term is given by
\begin{align*}
\lefteqn{\re^{U(x)/2}\lref \int\psi(\rd w) \left[ \frac{1}{\sqrt{\lref+ \langle \nabla U(x), w\rangle_+}} - \frac{1}{\sqrt{\lref}}\right]}\\
&= \re^{U(x)/2}\lref \int\psi(\rd w)  \frac{1}{\sqrt{\lref+ \langle \nabla U(x), w\rangle_+}}
- \lref V(x,v)\\
&= \re^{U(x)/2}\lref \int_{\rate{w}>0}\psi(\rd w)  \frac{1}{\sqrt{\lref+ \langle \nabla U(x), w\rangle_+}}+\re^{U(x)/2}\lref \int_{\rate{w}\leq 0}\psi(\rd w)  \frac{1}{\sqrt{\lref}}
- \lref V(x,v)\\
&= \re^{U(x)/2}\lref \int_{\rate{w}>0}\psi(\rd w)  \frac{1}{\sqrt{\lref+ \langle \nabla U(x), w\rangle_+}}
-\frac{1}{2} \lref V(x,v),
\end{align*}
since  $\psi\{w: \rate{w}>0\}=1/2$.
Thus overall when $\langle \nabla U(x), v\rangle> 0$ we have
\begin{align*}
\widetilde{\mathcal{L}} V(x,v)
&=
\frac{1}{2}V(x,v)\langle \nabla U, v\rangle
+ \langle \nabla U(x),v\rangle \left[ V(x, R_x v)-V(x,v)\right]\\
&\qquad + \re^{U(x)/2}\lref \int_{\mathds{S}^{d-1}}\psi(\rd w) \left[ \frac{1}{\sqrt{\lref+ \langle \nabla U(x), w\rangle_+}}-
\frac{1}{\sqrt{\lref}}\right]
\\
&=-\frac{1}{2}V(x,v)
\Bigg[ \langle \nabla U(x), v\rangle-
2\frac{\langle \nabla U(x), v\rangle\sqrt{\lref}}{\sqrt{\lref + \langle \nabla U(x), v\rangle}} +\lref\\
&\qquad\qquad -2\lref^{3/2} \int_{\rate{w}\geq 0}\psi(\rd w)  \frac{1}{\sqrt{\lref+ \langle \nabla U(x), w\rangle}} \Bigg]\\
&=-\frac{1}{2}V(x,v)
\Bigg[ \langle \nabla U(x), v\rangle-
2\frac{\langle \nabla U(x), v\rangle\sqrt{\lref}}{\sqrt{\lref + \langle \nabla U(x), v\rangle}} +\lref-2\lref^{} \int_{\theta=0}^{\pi/2}\frac{p_{\vartheta}\left(\theta\right) \rd\theta}{\sqrt{1+ \tfrac{|\nabla U(x)|}{\lref}\cos(\theta)}} \Bigg],
\end{align*}
where $p_{\vartheta}\left(\theta\right)$ is given in (\ref{eq:densityangle}).
\subsubsection*{Case $\rate{v}<0$}
In this case
\begin{align*}
\langle \nabla V(x,v), v\rangle = -\frac{1}{2}V(x,v)\left[ \langle \nabla U, -v\rangle-\frac{1}{\lref+ \rate{-v}} \langle v, \Delta U(x) v\rangle\right].
\end{align*}
Since $\rate{v}_+=0$ there is no reflection and thus overall
\begin{align*}
\widetilde{\mathcal{L}}V(x,v)
&= -\frac{1}{2}V(x,v)\left[ \rate{-v}-\frac{1}{\lref+ \rate{-v}} \langle v, \Delta U(x) v\rangle\right]
-\lref V(x,v)
\\
&\qquad + \frac{1}{2}\lref \frac{\re^{U(x)/2}}{\sqrt{\lref}}+ \lref V(x,v)  \int_{\rate{w}\geq 0}\psi(\rd w)  \frac{\sqrt{\lref+\rate{-v}}}{\sqrt{\lref+ \langle \nabla U(x), w\rangle}}\\
&=-\frac{1}{2}V(x,v)\left[ \rate{-v}-\frac{1}{\lref+ \rate{-v}} \langle v, \Delta U(x) v\rangle +2\lref\right]
\\
&\qquad + \frac{1}{2}\lref V(x,v)\frac{\sqrt{\lref+\rate{-v}}}{\sqrt{\lref}}+ \lref V(x,v)  \int_{\rate{w}\geq 0}\psi(\rd w)  \frac{\sqrt{\lref+\rate{-v}}}{\sqrt{\lref+ \langle \nabla U(x), w\rangle}}\\
&=-\frac{1}{2}V(x,v)\Bigg[\rate{-v} +2\lref-\frac{1}{\lref+ \rate{-v}} \langle v, \Delta U(x) v\rangle
-\lref \frac{\sqrt{\lref+\rate{-v}}}{\sqrt{\lref}}\\
&\qquad\qquad\qquad\qquad\qquad\qquad - 2\lref\int_{\rate{w}\geq 0}\psi(\rd w)  \frac{\sqrt{\lref+\rate{-v}}}{\sqrt{\lref+ \langle \nabla U(x), w\rangle}}
\Bigg].
%
\end{align*}
\subsubsection*{Case $\rate{v}=0$}
In this case we compute $\widetilde{\mathcal{L}}V(x,v)$  as
\begin{align}\label{eq:gen_with_dd}
\widetilde{\mathcal{L}} V(x,v)
&= \frac{\rd}{\rd t} V(x+tv, v) \Big\vert_{t=0+} + \lref \left[ \int \psi(\rd w) V(x,w)- V(x,v)\right],
\end{align}
 since the reflection term vanishes. 

We first compute the directional derivative for which we can distinguish two cases. Suppose first that $\langle \Delta U(x)v, -v\rangle >0$. Then we have that for all $t>0$ small enough
$$\langle \nabla U(x+tv),-v\rangle = 0 + t\langle \Delta U(x)v, -v\rangle +o(t) \geq 0.$$
Therefore, since $\rate{v}=0$, in this case we can compute the first term of \eqref{eq:gen_with_dd} as follows 
\begin{align*}
&\frac{\rd}{\rd t} V(x+tv, v) \Big\vert_{t=0+} \\
&= \lim_{t\to 0+} \frac{1}{t}
	\left[ \frac{\exp(U(x+tv)/2)}{\sqrt{\lref+\langle \nabla U(x+tv), -v\rangle_+}}
	- \frac{\exp\left(U(x)/2\right)}{\sqrt{\lref}}\right]\\
&= \lim_{t\to 0+} \frac{1}{t}
	\left[ \frac{\exp(U(x+tv)/2)}{\sqrt{\lref+\langle \nabla U(x+tv), -v\rangle}}
	- \frac{\exp\left(U(x)/2\right)}{\sqrt{\lref}}\right]\\	
&=\lim_{t\to 0+} \frac{1}{t}
	\left[ \frac{\exp(U(x+tv)/2)-\exp\left(U(x)/2\right)}{\sqrt{\lref+\langle \nabla U(x+tv), -v\rangle}}
	+ \exp\left(U(x)/2\right)\left(\frac{1}{\sqrt{\lref+\langle \nabla U(x+tv), -v\rangle}}-\frac{1}{\sqrt{\lref}}\right)\right]\\	
&=0 -\frac{1}{2}	\frac{\exp\left(U(x)/2\right) }{\sqrt{\lref}^3} \langle \Delta U(x)v, -v\rangle
= - \frac{1}{2}V(x,v) \frac{\langle \Delta U(x)v, -v\rangle}{\lref}
\end{align*}  
Now consider the case where $\langle \Delta U(x)v, -v\rangle \leq 0$, then for all $t>0$ small enough
$$\langle \nabla U(x+tv),-v\rangle = 0 + t\langle \Delta U(x)v, -v\rangle +o(t) \leq 0,$$
and therefore 
\begin{align*}
&\frac{\rd}{\rd t} V(x+tv, v) \Big\vert_{t=0+} \\
&= \lim_{t\to 0+} \frac{1}{t}
	\left[ \frac{\exp(U(x+tv)/2)}{\sqrt{\lref+\langle \nabla U(x+tv), -v\rangle_+}}
	- \frac{\exp\left(U(x)/2\right)}{\sqrt{\lref}}\right]\\
&= \lim_{t\to 0+} \frac{1}{t}
	\left[ \frac{\exp(U(x+tv)/2)}{\sqrt{\lref+0}}
	- \frac{\exp\left(U(x)/2\right)}{\sqrt{\lref}}\right]
=0.\end{align*}  
 Overall we have that 
 \begin{align*}
&\frac{\rd}{\rd t} V(x+tv, v) \Big\vert_{t=0+} 
= - \frac{1}{2}V(x,v) \langle \Delta U(x)v, -v\rangle_+.
\end{align*}  
Adding the refreshment term we find that in this case
\begin{align*}
\widetilde{\mathcal{L}}V(x,v)
&=- \frac{1}{2}V(x,v)
\left[ \frac{\langle \Delta U(x)v, -v\rangle_+}{\lref}
+\lref - 2\lref^{} \int_{\theta=0}^{\pi/2}\frac{p_{\vartheta}\left(\theta\right) \rd\theta}{\sqrt{1+ \tfrac{|\nabla U(x)|}{\lref}\cos(\theta)}} 
\right].
\end{align*}
 
Combining the three cases we obtain
\begin{equation}\label{eq:expression}
2\frac{\widetilde{\mathcal{L}}V(x,v)}{V(x,v)}
=
\begin{cases}
 - 
\Bigg[ \frac{\langle \Delta U(x)v, -v\rangle_+}{\lref}
+\lref \\
\qquad - 2\lref^{} \int_{\rate{w}\geq 0}  \frac{\sqrt{\lref}\psi(\rd w)}{\sqrt{\lref+ \langle \nabla U(x), w\rangle}}  
\Bigg] & \rate{v}=0,
\\
-\Big[ \langle \nabla U(x), v\rangle-
2\frac{\langle \nabla U(x), v\rangle\sqrt{\lref}}{\sqrt{\lref + \langle \nabla U(x), v\rangle}} +\lref  \\
\qquad -2\lref^{} \int_{\rate{w}\geq 0}  \frac{\sqrt{\lref}\psi(\rd w)}{\sqrt{\lref+ \langle \nabla U(x), w\rangle}} \Big]
,&\rate{v}> 0\\
-\Bigg[\rate{-v} +2\lref-\frac{1}{\lref+ \rate{-v}} \langle v, \Delta U(x) v\rangle
\\
\quad -\lref \frac{\sqrt{\lref+\rate{-v}}}{\sqrt{\lref}}\\
\qquad - 2\lref\int_{\rate{w}\geq 0}\psi(\rd w)  \frac{\sqrt{\lref+\rate{-v}}}{\sqrt{\lref+ \langle \nabla U(x), w\rangle}}
\Bigg]
,& \rate{v}<0.
\end{cases}
\end{equation}
\subsection*{Condition (\textsc{a})}
We have that $\varlimsup_{|x|\to\infty}\|\Delta U(x)\|\leq \alpha_1$ and $\varliminf_{|x| \to \infty} |\nabla U(x)| =\infty$.
Thus, since $\lref>0$ and
$1/\sqrt{\cos(\theta)}\in L^1([0,\pi/2], \rd \theta)$,
for any $\epsilon>0$ we can find $K>0$ such that
for all $|x|>K$
\begin{align}
\int_{\rate{w}\geq 0}  \frac{\psi(\rd w)}{\sqrt{\lref+ \langle \nabla U(x), w\rangle}}
&\leq \int_{\theta=0}^{\pi/2}  \frac{p_{\vartheta}\left(\theta\right) \rd \theta}{\sqrt{| \nabla U(x)|\cos(\theta)}}
\leq \frac{\epsilon}{\sqrt{\lref}}\label{eq:ref_term_bound}.
\end{align}
\subsubsection*{Case $\rate{v}=0$} Suppose  that $|x|>K$. Then from \eqref{eq:expression}, by dropping the first term which is negative, 
\begin{align*}
2\frac{\widetilde{\mathcal{L}}V(x,v)}{V(x,v)}
&\leq -
\left[ \lref - 2\lref^{} \int_{\theta=0}^{\pi/2}\frac{\sqrt{\lref}p_{\vartheta}\left(\theta\right) \rd\theta}{\sqrt{\lref+ |\nabla U(x)|\cos(\theta)}} 
\right]\\
&\leq - \lref(1-2\epsilon).
\end{align*}
\medskip
\subsubsection*{Case $\rate{v}>0$}Again let $|x|>K$. From \eqref{eq:expression}
\begin{align*}
2\frac{\widetilde{\mathcal{L}}V(x,v)}{V(x,v)}
&\leq
-
\left[ \langle \nabla U(x), v\rangle-
2\frac{\langle \nabla U(x), v\rangle\sqrt{\lref}}{\sqrt{\lref + \langle \nabla U(x), v\rangle}} +\lref(1 -2\epsilon) \right].
\end{align*}
For $w>0$ consider the function
$$w-2\frac{w\sqrt{\lref}}{\sqrt{\lref+ w}}+\lref(1-2\epsilon).$$
Since for $w>0$, $\lref+w\geq 2\sqrt{w}\sqrt{\lref}$ we have that
\begin{align}
w-2\frac{w\sqrt{\lref}}{\sqrt{\lref+ w}}+\lref(1-2\epsilon)
&\geq w-2\frac{w\sqrt{\lref}}{\sqrt{2}\lref^{1/4}w^{1/4}}+\lref(1-2\epsilon)\notag\\
&=w-\sqrt{2}w^{3/4}\lref^{1/4}+\lref(1-2\epsilon)=:f_\epsilon(w)=:f(w).
\label{eq:fminim}
\end{align}
Then
$$f'(w) = 1- \frac{3\lref^{1/4}}{2\sqrt{2}w^{1/4}},$$
and thus $f$ is minimised at $w_\ast=81\lref/64$ and
$$f(w_\ast)= \left( \frac{37}{64}-2\epsilon\right)\lref.$$
{For any $\lref>0$ we can choose $\epsilon>0$ small enough so that $f(w_\ast)>0$.
From \eqref{eq:ref_term_bound}
we can choose $K$ large enough, so that for all
$|x|>K$ and all $v$ such that $\rate{v}> 0$
$$2\frac{\widetilde{\mathcal{L}}V(x,v)}{V(x,v)}< -\delta,$$
for some $\delta =f(w_\ast)>0$.}
\smallskip
\subsubsection*{Case $\rate{v}<0$}
Then from \eqref{eq:expression}
\begin{align*}
2\frac{\widetilde{\mathcal{L}}V(x,v)}{V(x,v)}
&=
-\Bigg[\rate{-v} +2\lref-\frac{1}{\lref+ \rate{-v}} \langle v, \Delta U(x) v\rangle
-\lref \frac{\sqrt{\lref+\rate{-v}}}{\sqrt{\lref}}\\
&\qquad \qquad - 2\lref\int_{\rate{w}\geq 0}\psi(\rd w)  \frac{\sqrt{\lref+\rate{-v}}}{\sqrt{\lref+ \langle \nabla U(x), w\rangle}}
\Bigg],
\end{align*}
and arguing in the same way as in the previous case, given $\epsilon>0$ we can choose $K>0$ such that for all $|x|>K$ we have similarly to \eqref{eq:ref_term_bound}
\begin{align*}
\int_{\rate{w}\geq 0}\psi(\rd w)  \frac{1}{\sqrt{\lref+ \langle \nabla U(x), w\rangle}}
&\leq \int_{\rate{w}\geq 0}\psi(\rd w)  \frac{1}{\sqrt{\langle \nabla U(x), w\rangle}}\\
&=  \int_{\theta=0}^{\pi/2}  \frac{p_{\vartheta}\left(\theta\right) \rd \theta}{\sqrt{| \nabla U(x)|\cos(\theta)}} \leq
 \frac{\epsilon}{\lref}.
\end{align*}
Since $\varlimsup_{|x|\to\infty} \|\Delta U(x)\| \leq \alpha_1$,
for $K$ large enough and $|x|>K$  we have $\|\Delta U(x)\| \leq 2\alpha_1$. Thus overall when  $\rate{v}<0$
\begin{align*}
\frac{\widetilde{\mathcal{L}}V(x,v)}{V(x,v)}
&\leq
-\frac{1}{2} \Bigg[\rate{-v} +2\lref-\frac{2}{\lref+ \rate{-v}}\alpha_1 \\
&\qquad -\lref \frac{\sqrt{\lref+\rate{-v}}}{\sqrt{\lref}} - 2\epsilon\sqrt{\lref+\rate{-v}}\Bigg].
\end{align*}
For $w=\rate{-v}>0$ define
\begin{align*}
g(w)
&:=w+2\lref-\frac{2\alpha_1}{\lref+ w}
-\lref \frac{\sqrt{\lref+w}}{\sqrt{\lref}} - 2\epsilon\sqrt{\lref+w},\\
g'(w)
&= 1+\frac{2 \alpha_1 }{(\lref +w)^2}-\frac{\sqrt{\lref }}{2 \sqrt{\lref +w}}-\frac{\epsilon  }{\sqrt{\lref +w}}\\
&\geq 1+ \frac{2\alpha_1}{(w+\lref)^2}-\frac{1}{2}-\frac{\epsilon}{ \sqrt{\lref}}
=\frac{1}{2}+\frac{2\alpha_1}{(w+\lref)^2}-\frac{\epsilon}{ \sqrt{\lref}},
\end{align*}
and thus for all $\lref$ we can choose $\epsilon$ small enough so that
$g'(w)\geq 0$ for all $w\geq 0$.
Therefore
$$g(w)\geq g(0)=\lref-2\epsilon \sqrt{\lref}-\frac{2\alpha_1}{\lref}.$$
If $\lref \geq (2\alpha_1+1)^2$ then for $\epsilon$ small enough we have that
$g(w)\geq \delta>0$, for some $\delta$.

Thus, there exists $K>0$ large enough so that for all $|x|>K$ and $v$ such that $\rate{v}<0$ we have $\widetilde{\mathcal{L}}V(x,v)/V\leq -\delta$.
Therefore (\ref{eq:driftCondition}) holds with $C= B(0,K\vee K')\times \mathds{S}^{d-1}$.
\subsection*{Condition (\textsc{b})} Recall that $2\alpha_2:=\varliminf_{|x|\to\infty} |\nabla U(x)|$, so that we can choose $K$ large enough so that for all $|x|>K$ we have $|\nabla U(x)|\geq \alpha_2$.
Thus when $|x|>K$ 
\begin{align*}
\int_{\rate{w}\geq 0}  \frac{\psi(\rd w)}{\sqrt{\lref+ \langle \nabla U(x), w\rangle}}
&=
\int_{\theta=0}^{\pi/2}  \frac{p_{\vartheta}(\rd \theta)}{\sqrt{\lref+ |\nabla U(x)| \cos(\theta)}}\\
&\leq
\frac{1}{\sqrt{\lref}}\int_{\theta=0}^{\pi/2}  \frac{p_{\vartheta}(\rd \theta)}{\sqrt{1+ \frac{\alpha_2}{\lref} \cos(\theta)}}
=:\frac{1}{\sqrt{\lref}} F\left( \frac{\alpha_2}{\lref}, d\right).
\end{align*}
Clearly $F(u, d)\leq F(0,d)= 1/2$ for all $u$ and for all $d$, we have that $F(u,d)\to 0$ as $u\to \infty$.

\subsubsection*{Case $\rate{v}=0$}
For $|x|>K$ and \eqref{eq:expression} we have
\begin{align*}
2\frac{\widetilde{\mathcal{L}}V(x,v)}{V(x,v)}
&\leq -
\left[ \lref - 2\lref^{} \int_{\theta=0}^{\pi/2}\frac{\sqrt{\lref}p_{\vartheta}\left(\theta\right) \rd\theta}{\sqrt{\lref+ |\nabla U(x)|\cos(\theta)}} 
\right]\\
&\leq - \lref\left(1-2F\left(\alpha_2/\lref,d\right)\right)\\
&\leq - \lref\left(1- 2\frac{1}{4}\right)=-\frac{\lref}{2},
\end{align*}
as long as $\alpha_2/\lref > c_d$, with $c_d$ defined as in the statement of Theorem~\ref{thm:subgaussian_B}.

\subsubsection*{Case $\rate{v}>0$}
From \eqref{eq:expression} we have
\begin{equation*}
2\frac{\widetilde{\mathcal{L}}V(x,v)}{V(x,v)}
\leq
-
\left[ \langle \nabla U(x), v\rangle-
2\frac{\langle \nabla U(x), v\rangle\sqrt{\lref}}{\sqrt{\lref + \langle \nabla U(x), v\rangle}} +\lref -2\lref^{} F\left( \frac{\alpha_2}{\lref}, d\right) \right].
\end{equation*}
For $w\geq0$, using again that $\lref+w\geq 2\sqrt{w}\sqrt{\lref}$, we have
$$w-2 \frac{w \sqrt{\lref}}{\sqrt{\lref+w}}+\lref (1-2\epsilon) \geq  w-\sqrt{2}\lref^{1/4}w^{3/4}+\lref(1-2\epsilon)= f_{\epsilon}(w).$$
{Recall from \eqref{eq:fminim} that} for all $w\geq0$
$$f_{\epsilon}(w){\geq}\lref \left(\frac{37}{64}-2\epsilon\right)>0,$$
as long as $\epsilon < 37/128$. For each $d$ and $\alpha_2>0$ we can choose $\lref$ small enough so that $F(\alpha_2/\lref,d)<37/128$.
Then following a similar reasoning as before it can be easily seen that, as long as $\lref$ is small enough, then  there exists a $\delta>0$, such that for all $|x|>K$ and $\rate{v}>0$ we have
\begin{equation*}
2\frac{\widetilde{\mathcal{L}}V(x,v)}{V(x,v)}
\leq -\delta.
\end{equation*}

\subsubsection*{Case $\rate{v}<0$}
Let $c_d$ such that $F(c_d, d)\leq 1/4$. Notice that $c_d\to \infty$ as $d\to\infty$. Suppose that $\lref\leq \alpha_2/c_d$ or equivalently that $\alpha_2/\lref \geq c_d$. Then since $\|\Delta U(x)\|\to 0$, for all
$\epsilon_1>0$, there is a $K>0$ such that for all $|x|>K$ and $\lref$ small enough
\begin{align*}
2\frac{\widetilde{\mathcal{L}}V(x,v)}{V(x,v)}
&\leq -\bigg[\rate{-v} +2\lref-\frac{1}{\lref+ \rate{-v}} \langle v, \Delta U(x) v\rangle
-\lref \frac{\sqrt{\lref+\rate{-v}}}{\sqrt{\lref}}\\
&\qquad \qquad - 2\lref\frac{\sqrt{\lref+\rate{-v}}}{\sqrt{\lref}} F(\alpha_2/\lref,d)\bigg]\\
&\leq -\bigg[\rate{-v} +2\lref-\frac{1}{\lref+ \rate{-v}} \langle v, \Delta U(x) v\rangle
-\lref \frac{\sqrt{\lref+\rate{-v}}}{\sqrt{\lref}}\\
&\qquad \qquad - 2\lref\frac{\sqrt{\lref+\rate{-v}}}{\sqrt{\lref}} F(c_d,d)\bigg]\\
&\leq -\bigg[\rate{-v} +2\lref-\frac{\epsilon_1}{\lref+ \rate{-v}}
-\lref \frac{\sqrt{\lref+\rate{-v}}}{\sqrt{\lref}}\\
&\qquad \qquad - \frac{1}{2}\sqrt{\lref}\sqrt{\lref+\rate{-v}}\bigg].
\end{align*}
Let $w=\rate{-v}>0$ and consider
$$g(w) := w +2\lref - \frac{\epsilon_1}{\lref+w}-\lref\frac{\sqrt{\lref+w}}{\sqrt{\lref}}-\frac{1}{2}\sqrt{\lref}\sqrt{\lref+w}.$$
Then we obtain
$$g'(w)=1+\frac{\epsilon_1 }{(\lref+w)^2}-\frac{3 \sqrt{\lref}}{4 \sqrt{\lref+w}}.
\geq 0.$$
Thus, we have 
$$g(w)\geq g(0)= \frac{\lref}{2}-\frac{\epsilon_1}{\lref},$$
which is strictly positive as long as $\epsilon_1 < \lref^2/2$, and the result follows.
\end{proof}

\subsubsection{Position dependent refreshment}
\begin{lemma}[Lyapunov function-Varying refreshment]\label{lem:drift2}
Let the refreshment rate be equal to
$$\lref(x):=\lref+\frac{|\nabla U(x)|}{\max\{1,|x|^\epsilon\}}.$$
Then the function $V$ defined in (\ref{eq:lyapunovfn}) belongs to $\mathcal{D}(\widetilde{\mathcal{L}})$.
If in addition the assumptions of Theorem~\ref{thm:supergaussian} hold,
$V$ is a Lyapunov function as it satisfies (\ref{eq:driftCondition}).
\end{lemma}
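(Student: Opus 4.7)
The plan is to follow the blueprint of the proof of Lemma~\ref{lem:drift1} but tracking the new contributions coming from the spatial dependence of $\lref(x)$, and then exhibit that the diverging refreshment rate $\lref(x)\to\infty$ dominates all perturbations.

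First I would verify $V\in\mathcal{D}(\widetilde{\mathcal{L}})$ exactly as in Section~\ref{subsec:extended}. Because $|\nabla U|$ is locally Lipschitz by Assumption~\eqref{eq:c2abs}, and $r\mapsto 1/\max\{1,r^{\epsilon}\}$ is locally Lipschitz on $[0,\infty)$, the map $\lref(\cdot)$ is itself locally Lipschitz, hence so is $t\mapsto V(x+tv,v)$. The set where $|x|=1$ (where $\lref$ fails to be smooth) is swept by trajectories of \textsf{BPS} at most countably often and thus has zero potential, exactly as discussed after \eqref{eq:ext_generator}.

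Next I would recompute $\widetilde{\mathcal{L}}V(x,v)/V(x,v)$ in each of the three cases $\rate{v}>0$, $\rate{v}<0$, $\rate{v}=0$. The only structural change with respect to \eqref{eq:expression} is that differentiating $V(x+tv,v)$ now also produces a term
$$-\tfrac{1}{2}\,\frac{\langle\nabla\lref(x),v\rangle}{\lref(x)+\rate{-v}_+}$$
and that every occurrence of $\lref$ in \eqref{eq:expression} becomes $\lref(x)$. The bounce and refreshment terms keep the same algebraic form.

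The heart of the argument is then to use the two hypotheses of Theorem~\ref{thm:supergaussian} to produce four quantitative estimates, valid for $|x|$ sufficiently large:
\begin{enumerate}
\item $\lref(x)\ge |\nabla U(x)|/|x|^{\epsilon}\to\infty$ (use $|\nabla U(x)|/|x|\to\infty$ with any $\epsilon\in(0,1)$);
\item $|\nabla U(x)|/\lref(x)=|x|^{\epsilon}(1+o(1))\to\infty$, so that
$$\int_{\rate{w}\geq 0}\!\!\frac{\psi(\rd w)}{\sqrt{\lref(x)+\rate{w}}}=\frac{1}{\sqrt{\lref(x)}}\,F\!\left(\tfrac{|\nabla U(x)|}{\lref(x)},d\right)=o\!\left(\tfrac{1}{\sqrt{\lref(x)}}\right);$$
\item $\|\Delta U(x)\|/\lref(x)\le \|\Delta U(x)\||x|^{\epsilon}/|\nabla U(x)|\to 0$, directly from the second assumption;
\item $|\nabla\lref(x)|/\lref(x)\le \|\Delta U(x)\|/|\nabla U(x)|+\epsilon/|x|\to 0$, using $|\nabla|\nabla U||\le \|\Delta U\|$ and the chain rule on $\max\{1,|x|^{\epsilon}\}^{-1}$.
\end{enumerate}

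Finally I would plug (i)--(iv) into the three-case expression. For $\rate{v}>0$, applying the inequality from \eqref{eq:fminim} with $L=\lref(x)$ gives $\rate{v}-2\rate{v}\sqrt{L}/\sqrt{L+\rate{v}}+L\ge (37/64)L$, and combining with (ii) and (iv) yields $2\widetilde{\mathcal{L}}V/V\le -(37/64)\lref(x)(1+o(1))\to-\infty$. For $\rate{v}=0$ the bound $\langle\Delta U(x)v,-v\rangle_+/\lref(x)=o(1)$ from (iii) and the refreshment control from (ii) again leave $-\lref(x)/2$ as the dominant term. For $\rate{v}<0$, the Hessian term $\langle v,\Delta U(x)v\rangle/(\lref(x)+\rate{-v})$ is handled by (iii), the $\nabla\lref$ contribution by (iv), the refreshment integral by (ii), and a monotonicity argument as in the last case of Lemma~\ref{lem:drift1} (writing $w=\rate{-v}$ and bounding $g(w)\ge g(0)$) finishes the case. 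Consequently there exist $K>0$ and $c>0$ with $\widetilde{\mathcal{L}}V\le -cV$ on $\{|x|>K\}\times\mathds{S}^{d-1}$, whence \eqref{eq:driftCondition} holds with $C=B(0,K)\times\mathds{S}^{d-1}$, which is petite by Lemma~\ref{lemma:irreducibility}.

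The most delicate step is the $\rate{v}<0$ case, where the extra $-\tfrac12\langle\nabla\lref(x),v\rangle/(\lref(x)+\rate{-v})$ term interacts with the Hessian term in a non-obvious way and the sign of $\langle\nabla\lref,v\rangle$ is not under our control; the whole argument therefore hinges on the uniform smallness of $\|\Delta U\||x|^{\epsilon}/|\nabla U|$, which is precisely the second hypothesis of Theorem~\ref{thm:supergaussian}.
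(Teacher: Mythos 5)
Your proposal is correct and follows essentially the same route as the paper's proof: the same verification that $V\in\mathcal{D}(\widetilde{\mathcal{L}})$, the same three-case computation of $\widetilde{\mathcal{L}}V/V$ with the extra $\langle\nabla\lref(x),v\rangle$ term, and the same three key estimates ($|\nabla\lref(x)|/\lref(x)\to 0$ via $\bigl|\nabla|\nabla U|\bigr|\le\|\Delta U\|$, $\|\Delta U(x)\|/\lref(x)\to 0$, and the refreshment integral being $o(1/\sqrt{\lref(x)})$ because $|\nabla U(x)|/\lref(x)\to\infty$). The only difference is cosmetic and lies in how each case is closed out: you recycle the scale-invariant inequality \eqref{eq:fminim} with $\lref$ replaced by $\lref(x)$ and run the monotonicity argument in $w=\rate{-v}$ exactly as in Lemma~\ref{lem:drift1}, whereas the paper re-parametrizes by $\cos\theta$ and extracts the factor $|\nabla U(x)|/|x|\to\infty$; both yield $\widetilde{\mathcal{L}}V/V\le-\delta$ (indeed $\to-\infty$) outside a ball, so the drift condition \eqref{eq:driftCondition} follows.
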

\begin{proof}
First notice that $V$ with $\lref(x)$ as defined in the statement of the Lemma also belongs to $\mathcal{D}(\widetilde{\mathcal{L}})$ from the same arguments as in the proof of Lemma~\ref{lem:drift1}.
We now prove that $V$ satisfies \eqref{eq:driftCondition}.
From the form of \eqref{eq:driftCondition} it follows that we can assume without loss of generality that $|x| > 1$, so that
$$\lref(x) = \lref+\frac{|\nabla U(x)|}{|x|^\epsilon}.$$

First we restrict our attention to the case where $\rate{v}\neq 0$, for which we 
 compute 
\begin{align*}
\frac{\partial V}{\partial x_i}
&= \frac{1}{2}V(x,v) U_{x_i}(x) -\frac{1}{2}\frac{\re^{U(x)/2}}{(\lref(x)+\langle \nabla U(x), -v \rangle_+)^{3/2}} \left[ \frac{\partial}{\partial x_i} \lref(x) + \frac{\partial}{\partial x_i}\langle \nabla U(x), -v \rangle_+\right],\\
\langle \nabla V, v\rangle
&= \frac{1}{2}V(x,v) \rate{v} \\
&\quad -\frac{1}{2}\frac{\re^{U(x)/2}}{(\lref(x)+\langle \nabla U(x), -v \rangle_+)^{3/2}} \left[ \langle \nabla\lref(x), v\rangle
- \langle v, \Delta U(x) v\rangle \mathds{1}\{\rate{-v}>0\}\right]\\
&= \frac{1}{2}V(x,v) \left\{ \rate{v} -
\frac{\langle \nabla\lref(x), v\rangle
}{\lref(x)+\langle \nabla U(x), -v \rangle_+} +\frac{ \langle v, \Delta U(x) v\rangle \mathds{1}\{\rate{-v}>0\}}{\lref(x)+\langle \nabla U(x), -v \rangle_+} \right\}.
\end{align*}
After adding the reflection and refreshment terms we get
\begin{align*}
\widetilde{\mathcal{L}}V(x,v)
&= \frac{1}{2}V(x,v) \left\{ \rate{v} -
\frac{\langle \nabla\lref(x), v\rangle
}{\lref(x)+\langle \nabla U(x), -v \rangle_+} +\frac{ \langle v, \Delta U(x) v\rangle \mathds{1}\{\rate{-v}>0\}}{\lref(x)+\langle \nabla U(x), -v \rangle_+} \right\}\\
&+ \lref(x) \int V(x,w) \psi(\rd w) - \lref(x) V(x,v)\\
&+ \rate{v} \mathds{1}\{\rate{v}{\geq }0\} \left[ V(x,R(x) v)- V(x,v)\right],
\end{align*}
and thus
\begin{align*}
\frac{\widetilde{\mathcal{L}}V(x,v)}{V(x,v)}
&=\frac{1}{2}
\Bigg\{ \rate{v}  -\frac{\langle \nabla\lref(x), v\rangle
}{\lref(x)+\langle \nabla U(x), -v \rangle_+} +\frac{ \langle v, \Delta U(x) v\rangle \mathds{1}\{\rate{-v}>0\}}{\lref(x)+\langle \nabla U(x), -v \rangle_+} \Bigg\}\\
&+ \lref(x) \int \left[ \frac{\sqrt{\lref(x)+\rate{-v}_+}}{\sqrt{\lref(x)+\rate{w}_+}} -1\right]  \psi(\rd w) \\
&+ \rate{v} \mathds{1}\{\rate{v}{\geq}0\}  \left[\frac{\sqrt{\lref(x)+\rate{-v}_+}}{\sqrt{\lref(x)+\rate{v}_+}}- 1\right].
\end{align*}

Thus when $\rate{v}> 0$ we have
\begin{align}\label{eq:expresion_plus}
\frac{\widetilde{\mathcal{L}}V(x,v)}{V(x,v)}
&=
\frac{1}{2} \rate{v} -\frac{1}{2}\frac{\langle \nabla\lref(x), v\rangle
}{\lref(x)} + \rate{v} \left[\frac{\sqrt{\lref(x)}}{\sqrt{\lref(x)+\rate{v}}}-1\right] \notag\\
&+ \lref(x) \int_{\rate{w}\geq 0} \left[ \frac{\sqrt{\lref(x)}}{\sqrt{\lref(x)+\rate{w}_+}} -1\right]  \psi(\rd w).
\end{align}
When $\rate{v}<0$ then
\begin{align}\label{eq:expression_minus}
\frac{\widetilde{\mathcal{L}}V(x,v)}{V(x,v)}
&=\frac{1}{2}
\Bigg\{ \rate{v} -\frac{\langle \nabla\lref(x), v\rangle -  \langle v, \Delta U(x) v\rangle }
{\lref(x)+\langle \nabla U(x), -v \rangle} \Bigg\}\notag\\
&+ \lref(x)\int \left[ \frac{\sqrt{\lref(x)+\rate{-v}_+}}{\sqrt{\lref(x)+\rate{w}_+}} -1\right]  \psi(\rd w).
\end{align}
When $\rate{v}=0$, similarly to the proof of Lemma~\ref{lem:drift1}, by considering separately the case where $\langle \Delta U(x),-v\rangle>0$ and $\langle \Delta U(x),-v\rangle\leq 0$ we find that
\begin{align*}
\lim_{t\to 0+}
\frac{\rd }{\rd t} V(x+tv,v) 
&= \lim_{t\to 0+}\frac{1}{t}\left\{ V(x+tv,v)-V(x,v) \right\}\\
&= \lim_{t\to 0+}\frac{1}{t}\left\{ \frac{\exp\left(U(x+tv)/2\right)}{\sqrt{\lref(x+tv)+\langle \nabla U(x+tv),-v\rangle_+ }}-\frac{\exp\left(U(x)/2\right)}{\sqrt{\lref(x)}} \right\}\\
&= -\frac{1}{2}\frac{\exp\left(U(x)/2\right)}{\sqrt{\lref(x)}^3} \left[\langle \nabla \lref(x),v\rangle +  \langle \Delta U(x)v,-v\rangle_+ \right]\\
&=-\frac{V(x,v)}{2}\frac{\left[\langle \nabla \lref(x),v\rangle +  \langle \Delta U(x)v,-v\rangle_+ \right]}{\lref(x)}.
\end{align*}
Thus for $\rate{v}=0$, after adding the refreshment term we have
\begin{align}\label{eq:expression_zero}
\frac{\widetilde{\mathcal{L}}V(x,v)}{V(x,v)}
&=-\frac{1}{2}
\frac{\langle \nabla\lref(x), v\rangle +  \langle \Delta U(x) v,-v\rangle_+ }{ \lref(x)} \notag\\
&\qquad + \lref(x)\int \left[ \frac{\sqrt{\lref(x)}}{\sqrt{\lref(x)+\rate{w}_+}} -1\right]  \psi(\rd w)\\
&\leq -\frac{1}{2}
\frac{\langle \nabla\lref(x), v\rangle  }{ \lref(x)} + \lref(x)\int \left[ \frac{\sqrt{\lref(x)}}{\sqrt{\lref(x)+\rate{w}_+}} -1\right]  \psi(\rd w)\notag.
\end{align}
From the definition of $\lref(x)$ and the chain rule
\begin{align*}
\nabla \lref(x)
&= |x|^{-\epsilon}\nabla |\nabla U(x)| + |\nabla U(x)| \nabla \left( |x|^{-\epsilon}\right).
\end{align*}
We first compute
\begin{align*}
\frac{\partial}{\partial x_i}|\nabla U(x)|
&=
\partiald{i}\sqrt{\sum_j \left(\frac{\partial}{\partial x_j}U\right)^2}=|\nabla U(x)|^{-1} \sum_j \partiald{j}U(x) \frac{\partial^2}{\partial x_i\partial x_j}U(x),
\end{align*}
whence it follows that
\begin{align*}
\left| \nabla \left| \nabla U \right| \right|
&= |\nabla U|^{-1} | \left\{
\sum_{i=1}^d \left[ \sum_{j=1}^d  \partiald{j}U(x) \frac{\partial^2}{\partial x_i\partial x_j}U(x) \right]^2
\right\}^{1/2}\\
&\leq |\nabla U(x)|^{-1} |\nabla U|\|\Delta U\|=\|\Delta U\|
\end{align*}
Thus we have that
\begin{align*}
\frac{|\nabla \lref(x)|}{|\lref(x)|}
&\leq \frac{\|\Delta U(x)\|/|x|^\epsilon}{|\nabla U(x)|/|x|^{\epsilon}}+ \frac{|\nabla U(x)|}{|x|^{1+\epsilon} \times |\nabla U(x)|/|x|^\epsilon}\\
&=\frac{\|\Delta U(x)\|}{|\nabla U(x)|}+\frac{1}{|x|}\to 0,
\end{align*}
where we also used the fact that $|\nabla(|x|^{-\epsilon})|=\epsilon |x|^{-1-\epsilon}$.
It therefore follows that
\begin{equation}\varlimsup_{|x| \to \infty} \frac{|\langle\nabla \lref(x),v\rangle|}{\lref(x)}=0,\label{eq:lref_lim}
\end{equation}
so that this term can be ignored for large $|x|$.
Also notice that
\begin{align*}
\int_{w:\rate{w}\geq 0}\psi(\rd w) \frac{1}{\sqrt{\lref(x)+\rate{w}_+}}
&=
 \frac{1}{|\nabla U(x)|^{1/2}} \int_{w:\rate{w}\geq 0} \frac{\psi(\rd w)}{\sqrt{\frac{\lref(x)}{|\nabla U(x)|}+\left\langle \frac{\nabla U(x)}{|\nabla U(x)|},w \right\rangle
}}\\
&=
  \frac{1}{|\nabla U(x)|^{1/2}}\int_{\theta=0}^{\pi/2}
\frac{p_{\vartheta}(\rd \theta)}{\sqrt{\frac{\lref(x)}{|\nabla U(x)|}+\cos(\theta)}},
\end{align*}
where $d$ is the dimension.
As $|x|\to \infty$, our definition of $\lref(x)$ ensures that
$$\int_{\theta=0}^{\pi/2} \frac{p_{\vartheta}(\rd \theta)}{\sqrt{\frac{\lref(x)}{|\nabla U(x)|}+\cos(\theta)}}
\to \int_{\theta=0}^{\pi/2} \frac{p_{\vartheta}(\rd \theta)}{\sqrt{\cos(\theta)}}= -\frac{3 \Gamma \left(-\frac{3}{4}\right) \Gamma \left(\frac{d}{2}\right)}{8 \sqrt{\pi } \Gamma \left(\frac{d}{2}-\frac{1}{4}\right)}=:\gamma_d>0.$$
\subsubsection*{Case $\rate{v}=0$}
Thus  when $\rate{v}=0$, for $|x|$ large we have
\begin{align*}
\frac{\widetilde{\mathcal{L}}V(x,v)}{V(x,v)}
&\sim
\lref(x)\int \left[ \frac{\sqrt{\lref(x)}}{\sqrt{\lref(x)+\rate{w}_+}} -1\right]  \psi(\rd w)\\
&\sim \lref(x) \int_{\rate{w}<0}\left[1-1\right]\psi(\rd w)+ \lref(x) \left[ \frac{\sqrt{\lref(x)}}{\sqrt{|\nabla U(x)|}}\gamma_d -\frac{1}{2}\right]\\
&\sim\lref(x) \left[ \frac{\sqrt{\lref(x)}}{\sqrt{|\nabla U(x)|}}\gamma_d -\frac{1}{2}\right],
\end{align*}
and from the definition of $\lref(x)$ it easily follows that 
\begin{align*}
\frac{\widetilde{\mathcal{L}}V(x,v)}{V(x,v)}
&\sim -\frac{1}{2} \lref(x)\to -\infty.
\end{align*}
\subsubsection*{Case $\rate{v}>0$}
For $|x|$ large we have
\begin{align*}
\frac{\widetilde{\mathcal{L}}V(x,v)}{V(x,v)}
&\sim \frac{1}{2} \rate{v} + \rate{v} \left[\frac{\sqrt{\lref(x)}}{\sqrt{\lref(x)+\rate{v}}}-1\right] \\
&\qquad +\lref(x) \left[ \frac{\sqrt{\lref(x)}}{\sqrt{|\nabla U(x)|}}\gamma_d -\frac{1}{2}\right].
\end{align*}
Using the definition of $\lref(x)$, and letting
$\rate{v} = |\nabla U(x)| \cos(\theta)$ for $\theta\in[0, \pi/2)$
we have for $\rate{v}>0$ as $|x|\to \infty$
\begin{align*}
\frac{\widetilde{\mathcal{L}}V(x,v)}{V(x,v)}
&\sim
\frac{1}{2}| \nabla U(x)|\cos(\theta) + | \nabla U(x)|\cos(\theta)
\left[\frac{\sqrt{|\nabla U(x)|/|x|^\epsilon}}{\sqrt{|\nabla U(x)|/|x|^\epsilon+| \nabla U(x)|\cos(\theta)}}-1\right] \\
&\qquad + \frac{|\nabla U(x)|}{|x|^\epsilon}\left[ \frac{\sqrt{|\nabla U(x)|/|x|^\epsilon}}{\sqrt{|\nabla U(x)|}}\gamma_d -\frac{1}{2}\right]\\
&=| \nabla U(x)|\left[
\cos(\theta) \left( \frac{1}{2}  + \frac{1}{\sqrt{1+|x|^\epsilon\cos(\theta)}}-1\right)+\frac{1}{|x|^\epsilon} \left[ \frac{\gamma_d}{|x|^{\epsilon/2}} -\frac{1}{2}\right]\right] \\
&=\frac{| \nabla U(x)|}{|x|} |x|
\left[
\cos(\theta) \left(-\frac{1}{2} + \frac{1}{\sqrt{1+|x|^\epsilon\cos(\theta)}}\right)+\frac{1}{|x|^\epsilon}
\left[ \frac{\gamma_d}{|x|^{\epsilon/2}} -\frac{1}{2}\right]\right] \\
& \leq |x|
\left[
\cos(\theta) \left( -\frac{1}{2}  + \frac{1}{\sqrt{1+|x|^\epsilon\cos(\theta)}}\right)+\frac{1}{|x|^\epsilon} \left[ \frac{\gamma_d}{|x|^{\epsilon/2}} -\frac{1}{2}\right]\right],
\end{align*}
since $|\nabla U(x)|/|x|\to \infty$
 and the quantity in  brackets is clearly negative for large enough $|x|$.
Let $u=\cos(\theta)$ and $r=|x|$.
Then observe that
we can rewrite the right hand side as
$$r^{1-\epsilon}r^\epsilon u\left(\frac{1}{\sqrt{1+r^\epsilon u}}-\frac{1}{2}\right) - \frac{r^{1-\epsilon}}{2} + O(r^{1/4})
\leq  \frac{r^{1-\epsilon}}{4}-\frac{r^{1-\epsilon}}{2} +O(r^{1-3\epsilon/2}) = -\frac{\sqrt{r}}{4}+O(r^{1-3\epsilon/2}),
$$
since for $w=r^\epsilon u>0$ it can be shown that
$$w\left( \frac{1}{\sqrt{1+w}}-\frac{1}{2}\right)\leq \frac{1}{4}.$$
Thus it follows that for $\rate{v}> 0$ we have that $\varlimsup_{|x|\to\infty} \widetilde{\mathcal{L}}V/V =-\infty$.

\medskip
\subsubsection*{Case $\rate{v}<0$} From \eqref{eq:expression_minus} and \eqref{eq:lref_lim} we have as $|x|\to \infty$
\begin{align*}
&2\varlimsup_{|x|\to\infty}\frac{\widetilde{\mathcal{L}}V(x,v)}{V(x,v)}\\
&=
\varlimsup_{|x|\to\infty} \Bigg\{\rate{v} -\frac{\langle \nabla\lref(x), v\rangle -  \langle v, \Delta U(x) v\rangle }
{\lref(x)+\langle \nabla U(x), -v \rangle} \\
&\qquad+ 2\lref(x)\int \left[ \frac{\sqrt{\lref(x)+\rate{-v}_+}}{\sqrt{\lref(x)+\rate{w}_+}} -1\right]  \psi(\rd w) \Bigg\}\\
&=
\varlimsup_{|x|\to\infty} \Bigg\{\rate{v} +\frac{ \langle v, \Delta U(x) v\rangle }
{\lref(x)+\langle \nabla U(x), -v \rangle} \\
&\qquad+ \lref(x)\left[\frac{\sqrt{\lref(x)+\rate{-v}}}{\sqrt{\lref(x)}} -1\right]
+2\lref(x)\frac{\sqrt{\lref(x)+\rate{-v}_+}}{\sqrt{|\nabla U(x)|}} \gamma_d -\lref(x) \Bigg\}\\
&=
\varlimsup_{|x|\to\infty} \Bigg\{\rate{v} + \lref(x)\left[\frac{\sqrt{\lref(x)+\rate{-v}}}{\sqrt{\lref(x)}} -1\right]\\
&\qquad
+2\lref(x)\frac{\sqrt{\lref(x)+\rate{-v}_+}}{\sqrt{|\nabla U(x)|}} \gamma_d -\lref(x) \Bigg\},
\end{align*}
since $\varlimsup_{|x|\to\infty}\|\Delta U(x)\|/\lref(x)\to 0$.
Thus letting $\theta$ be the angle between $U(x)$ and $-v$, we have
\begin{align*}
\lefteqn{2\varlimsup_{|x|\to\infty}\frac{\widetilde{\mathcal{L}}V(x,v)}{V(x,v)}}\\
&=
\varlimsup_{|x|\to\infty} \Bigg\{-|\nabla U(x)|\cos(\theta) + \frac{|\nabla U(x)|}{|x|^\epsilon}
\left[\frac{\sqrt{{|\nabla U(x)|}/{|x|^\epsilon}+|\nabla U(x)|\cos(\theta)}}{\sqrt{\frac{|\nabla U(x)|}{|x|^\epsilon}}} -1\right] \\
&\qquad +2\frac{|\nabla U(x)|}{|x|^\epsilon}\left(\frac{\sqrt{\frac{|\nabla U(x)|}{|x|^\epsilon}+|\nabla U(x)|\cos(\theta)}}{\sqrt{|\nabla U(x)|}} \gamma_d -\frac{1}{2}\right) \Bigg\}\\
&=
\varlimsup_{|x|\to\infty} \Bigg\{-|\nabla U(x)|\cos(\theta) + \frac{|\nabla U(x)|}{|x|^\epsilon}
\left[\sqrt{1+|x|^\epsilon cos(\theta)} -1\right]  +2\frac{|\nabla U(x)|}{|x|^\epsilon}
\left(\sqrt{\frac{1}{|x|^\epsilon}+\cos(\theta)} \gamma_d -\frac{1}{2}\right) \Bigg\}\\
&=
\varlimsup_{|x|\to\infty} |\nabla U(x)|\Bigg\{-\cos(\theta) + \frac{1}{|x|^\epsilon}
\left[\sqrt{1+|x|^\epsilon\cos(\theta)} -1\right]  +\frac{2}{|x|^\epsilon}
\left(\sqrt{\frac{1}{|x|^\epsilon}+\cos(\theta)} \gamma_d -\frac{1}{2}\right) \Bigg\}\\
&=
\varlimsup_{|x|\to\infty} \frac{|\nabla U(x)|}{|x|} |x|\Bigg\{-\cos(\theta) + \frac{1}{|x|^\epsilon}
\left[\sqrt{1+|x|^\epsilon\cos(\theta)} -1\right]  +\frac{2}{|x|^\epsilon}
\left(\sqrt{\frac{1}{|x|^\epsilon}+\cos(\theta)} \gamma_d -\frac{1}{2}\right) \Bigg\}\\
&\leq
\varlimsup_{|x|\to\infty}  |x|\Bigg\{-\cos(\theta) + \frac{1}{|x|^\epsilon}
\left[\sqrt{1+|x|^\epsilon\cos(\theta)} -1\right]  +\frac{2}{|x|^\epsilon}
\left(\sqrt{\frac{1}{|x|^\epsilon}+\cos(\theta)} \gamma_d -\frac{1}{2}\right) \Bigg\},
\end{align*}
since the right hand side is clearly negative for $|x|$ large enough.

For $u=\cos(\theta)\in[0,1]$ define the function
$$f(u):= -u + \frac{1}{|x|^\epsilon}\left[\sqrt{1+|x|^\epsilon u} -1\right]  +\frac{2}{|x|^\epsilon}
\left(\sqrt{\frac{1}{|x|^\epsilon}+u} \gamma_d -\frac{1}{2}\right).$$
Then
$$f'(u)=-1 + \frac{|x|^\epsilon}{2|x|^\epsilon\sqrt{1+|x|^\epsilon u}}+ \frac{1}{|x|^\epsilon\sqrt{\frac{1}{|x|^\epsilon}+u}} \gamma_d.$$
This is negative for all $u\geq 0$ for $|x|$ large enough. Therefore
$$f(u)\leq f(0)= \frac{2}{|x|^\epsilon }
\left(\sqrt{\frac{1}{|x|^\epsilon }} \gamma_d -\frac{1}{2}\right)\sim -\frac{1}{|x|^\epsilon },$$
as $|x|\to \infty$. Hence
$$\varlimsup_{|x|\to \infty} \frac{\widetilde{\mathcal{L}}V(x,v)}{V(x,v)} = -\infty,$$
and the result follows.
\end{proof}

\subsection{Proof of Theorem~\ref{thm:subgaussian_tdist}}
We will frequently use \cite[Equations~(11),(13)]{J_G_12} which we  state for the reader's convenience,
\begin{equation}
\label{eq:geyer11} \nabla h(x) = 
\begin{cases}
\displaystyle\frac{f(\vert x
\vert)
\mathds{1}_d}{\vert x \vert} + \biggl[
f^\prime\bigl(\vert x \vert\bigr) - \frac{f(\vert x \vert)}{\vert
x \vert} \biggr]
\frac{xx^T}{\vert x \vert^2}, &  x\neq 0,\\
f'(0) \mathds{1}_d,& x=0,
\end{cases}
\end{equation}
%
and
\begin{equation}
\label{eq:geyer13} \det \big(\nabla h(x) \big) = 
\begin{cases}
\displaystyle f'\big(\vert x \vert\big) \bigg( \frac{f(\vert x \vert)}{\vert
x \vert}
\bigg)^{d-1}, &\quad x\neq 0, \vspace*{2pt}
\cr
f'(0)^{d},
& \quad x= 0
\end{cases}
\end{equation}
Let
$\{Z_{h,t}=(Y_t,V_t);t\geq 0\}$ be a Markov process whose generator is given by \eqref{eq:generator} with $U$ replaced by $U_h$, and write $\{P^t_h:t\geq 0\}$ for its transition kernels.
Then letting $X_t:= h(Y_t)$ for $t\geq 0$, from \cite[Corollary~3]{B_R_58}, it follows that $\{Z_t=(X_t,V_t):t\geq 0\}$ is also a Markov process with transition kernel given by
$P^t(z,A) = P^t_h(H^{-1}(z),H^{-1}(A))$ for all $A\in \mathcal{B}(\mathcal{Z})$ where $H(x,v)=(h(x),v)$.
It is also easy to see that if $Z_{h,t}$ is $\pi_h$-invariant, then ${Z_t}$ will be $\pi$-invariant--see also the discussion in \cite[Theorem~6]{J_G_12}.

Suppose now that $\{Z_{h,t}:t\geq 0\}$ is $V_h$-uniformly ergodic for some function $V_h$, that is
$$\|P_h^t(z,\cdot) - \pi_h\|_{V_h} \leq C_h V_h(z) \rho_h^t,$$
for some $C_h>0$ and $\rho_h\in (0,1)$ with $\pi_h$ admitting the density $\bar{\pi}_h(y)\psi(v)$. 
Then
we can see that
\begin{align*}
\int f \rd P^t(z,\cdot) - \int f \rd \pi
&=\int f\circ H^{} \rd P_h^t\left(H^{-1}(z),\cdot\right) - \int f\circ H \rd \pi_h.
\end{align*}
Therefore it follows that
\begin{align*}
\sup_{|f|\leq V_h\circ H^{-1}}
\left|
\int f \rd P^t(z,\cdot) - \int f \rd \pi
\right|
&=
\sup_{|f|\leq V_h\circ H^{-1}}
\left|
\int f\circ H \rd P_h^t\left(H^{-1}(z),\cdot\right) - \int f\circ H \rd \pi_h
\right|\\
&\leq \sup_{|g|\leq V_h}
\left|
\int g \rd P_h^t\left(H^{-1}(z),\cdot\right) - \int g \rd \pi_h
\right|\\
&= \| P_h^t (H^{-1}(z), \cdot) - \pi_h\|_{V_h} \leq C_h V_h\circ H^{-1} (z) \rho_h^t,
\end{align*}
whence $Z_t=H(Z_{h,t})$ is $V_h\circ H^{-1}$-uniformly ergodic.

\begin{lemma}
Under the assumptions of Theorem~\ref{thm:subgaussian_tdist},
the potentials $U_h:\mathds{R}^d \to [0,\infty)$ defined in \eqref{eq:uhdef} satisfy Assumptions~\eqref{eq:c2abs}-\eqref{eq:growth_condition}, when $h=h^{(1)}$ or $h=h^{(2)}$.
\end{lemma}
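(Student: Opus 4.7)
The plan is to verify the three assumptions \eqref{eq:c2abs}, \eqref{eq:integrability}, \eqref{eq:growth_condition} for $U_h$ by direct computation, using the explicit formulas \eqref{eq:geyer11} and \eqref{eq:geyer13} together with the tail hypotheses on $U$. For the smoothness condition \eqref{eq:c2abs}, the functions $f^{(i)}$ are $C^{3}$ on $[0,\infty)$ by construction (for $f^{(1)}$ one checks that the exponential and cubic pieces match up to third order at $r=1/b$; for $f^{(2)}$ with $p>2$ the claim is immediate), and $h^{(i)}$ extends smoothly at the origin because $h^{(1)}$ reduces to a polynomial in $|y|^{2}$ times $y$ on $|y|\le 1/b$ and $h^{(2)}$ is the identity on $|y|\le R$. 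The only potential failure of locally Lipschitz second derivatives is then confined to the compact transition sphere $\{|y|=1/b\}$ or $\{|y|=R\}$, so Remark~\ref{rem:c2abs2} applies and it suffices to verify \eqref{eq:c2abs} outside a large ball, where $f^{(i)}$ is $C^{\infty}$.

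For the tail conditions I would set $r=|y|$, $\hat y = y/r$, and use \eqref{eq:geyer11} to write
\[
\nabla h(y) = \frac{f(r)}{r}\bigl(\II - \hat y\hat y^{T}\bigr) + f'(r)\,\hat y\hat y^{T}.
\]
Decomposing $\nabla U(h(y))$ into radial and tangential components with respect to $\hat y$ gives
\[
\nabla h(y)\nabla U(h(y)) = \frac{f'(r)}{f(r)}\,\langle h(y),\nabla U(h(y))\rangle\,\hat y + \frac{f(r)}{r}\,(\nabla U(h(y)))_{\perp},
\]
while \eqref{eq:geyer13} gives $\nabla\log\det(\nabla h(y)) = \bigl[\tfrac{f''(r)}{f'(r)} + (d-1)\bigl(\tfrac{f'(r)}{f(r)} - \tfrac{1}{r}\bigr)\bigr]\hat y$. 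Subtracting yields a formula for $\nabla U_h(y)$ whose radial and tangential parts can be analysed separately.

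For $h=h^{(1)}$, for large $r$ one has $f(r)\sim e^{br}$, $f'(r)=be^{br}$, so $f''/f'=b$ and $f'/f\to b$. Hypothesis (c) of Theorem~\ref{thm:subgaussian_tdist_i} forces $\varliminf \langle h(y),\nabla U(h(y))\rangle = c>d$, while (a) gives $|\nabla U(h(y))|=O(e^{-br})$, making the tangential contribution $O(1/r)$ and negligible. Therefore $|\nabla U_h(y)|\to b(c-d)>0$ and, integrating the radial derivative, $U_h(y)\sim b(c-d)\,r$. Assumption \eqref{eq:growth_condition} follows since $e^{U_h/2}$ grows exponentially against a bounded $|\nabla U_h|$, while \eqref{eq:integrability} follows from the exponential tail of $e^{-U_h}$ against a bounded gradient. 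A parallel analysis, using hypothesis (b) to control $\|\Delta U_h\|$ through the chain rule, will confirm the remaining hypotheses of Theorem~\ref{thm:subgaussian_B}.

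The main obstacle is the case $h=h^{(2)}$, where both the radial and tangential contributions to $\nabla U_h$ are of the same order $r^{p\beta-1}$ and must be tracked carefully. With $f(r)\sim r^{p}$, $f'(r)\sim p r^{p-1}$, hypothesis (b) of Theorem~\ref{thm:subgaussian_tdist_ii} gives $\langle h(y),\nabla U(h(y))\rangle \asymp r^{p\beta}$, so the radial component of $\nabla U_h$ is of order $r^{p\beta-1}$ with a strictly positive coefficient. Hypothesis (a) bounds the tangential part by a comparable $O(r^{p\beta-1})$, and the correction from $\nabla\log\det(\nabla h)$ is $O(1/r)$, hence negligible. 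Integrating shows $U_h(y)\asymp r^{p\beta}$, from which \eqref{eq:growth_condition} and \eqref{eq:integrability} follow once $p\beta>2$. The delicate point is ensuring that the dominant radial term of $\nabla U_h$ has a positive lower bound: this relies crucially on the strict $\varliminf>0$ in hypothesis (b), which prevents accidental cancellation between the radial derivative of $U\circ h$ and the logarithmic Jacobian correction.
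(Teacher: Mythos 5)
Your plan is correct and reaches all three conclusions, but for the tail conditions it travels a genuinely different road from the paper. For \eqref{eq:integrability} the paper changes variables back to the $x$-coordinates and bounds $\int \pi(x)\,|\nabla U_h(h^{-1}(x))|\rd x$ term by term, and for \eqref{eq:growth_condition} it invokes Lemmas~1, 2 and 4 of \cite{J_G_12} to bound $\|\nabla h\|$ and $|\nabla\log\det(\nabla h)|$ and then lower-bounds $U\circ h$ separately. You instead work entirely in the $y$-variable, split $\nabla h(y)\nabla U(h(y))$ into its radial and tangential parts via \eqref{eq:geyer11}, and extract two-sided asymptotics $|\nabla U_h(y)|\asymp 1$ (resp.\ $\asymp |y|^{p\beta-1}$) together with $U_h(y)\gtrsim |y|$ (resp.\ $\gtrsim |y|^{p\beta}$) by integrating the radial derivative along rays; both \eqref{eq:integrability} and \eqref{eq:growth_condition} then drop out of the same estimates. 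This is more self-contained, and your explicit lower bound on $U_h$ tightens the paper's slightly loose step ``$\re^{U_h(y)/2}\to\infty$ since $\re^{-U_h}$ is integrable''. Likewise, observing that $h^{(1)}$ is a polynomial in $y$ on $\{|y|\le 1/b\}$ and that $h^{(2)}$ is the identity near the origin disposes of the point $y=0$ more cleanly than the paper's Taylor-expansion limits there. One stylistic caution: hypothesis \ref{assumption:directionality} only gives $\varliminf\langle h(y),\nabla U(h(y))\rangle>d$, not a limit, so your conclusions should be phrased as $0<\varliminf|\nabla U_h|\le\varlimsup|\nabla U_h|<\infty$ rather than $|\nabla U_h(y)|\to b(c-d)$; this is all that is needed downstream.

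One step does need a patch: the claim that $f^{(2)}$ is $C^3$ on $[0,\infty)$ for every $p>2$ is false, since $(f^{(2)})'''(r)=p(p-1)(p-2)(r-R)^{p-3}$ blows up as $r\downarrow R$ when $2<p<3$, and then $(f^{(2)})''$ is only $(p-2)$-H\"older there. Because $\nabla U_h$ contains the term $f''(|y|)/f'(|y|)$ through $\nabla\log\det(\nabla h)$, local Lipschitz continuity of $t\mapsto\langle\nabla U_h(x+tv),v\rangle$ across the sphere $\{|y|=R\}$, as required by \eqref{eq:c2abs2}, genuinely needs $f''$ locally Lipschitz, i.e.\ $p\ge 3$. (The paper's own ``continuous and piecewise smooth, hence locally Lipschitz'' glosses over the same point.) The fix is costless: $p$ is a free parameter subject only to $\beta p>2$, so simply take $p\ge 3$. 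Your verification that $f^{(1)}$ matches to third order at $r=1/b$ is correct as stated.
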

\begin{proof}
\textit{Checking  Assumption~\eqref{eq:c2abs2}.}
Notice that from equations~\eqref{eq:fpoly}, \eqref{eq:fexp} and \eqref{eq:defn_h}, the functions $h^{(i)}$, are infinitely differentiable except perhaps for $x=0$ and $|x|=1/b$ for $i=1$, or $|x|=R$ for $i=2$.
Thus $U_h$ will satisfy Assumption~(\ref{eq:c2abs}) for $|x|$ large enough and in fact everywhere except for $|x|=0, 1/b$ for $i=1$, and $|x|=0,R$ for $i=2$.  
It remains to show that the mapping $t\mapsto \langle\nabla U_h(x+tv),v\rangle$ is locally Lipschitz  at these points. First, from the definition of $f=f^{(i)}$, it follows easily that the mapping
$t\mapsto \langle\nabla U_h(x+tv),v\rangle$ will be continuous and piecewise smooth, and thus locally Lipschitz, at $|x|=1/b$ and $|x|=R$ for $i=1,2$ respectively. To deal with the remaining case $x=0$, we next show that $t\mapsto \langle\nabla U_h(tv),v\rangle$ is in fact differentiable at $t=0$. 

Recall the decomposition of $\nabla U_h$ given in \eqref{eq:gradUh}. The first term of $\eqref{eq:gradUh}$ is given by
\begin{equation}
\nabla \log \det (\nabla h(x))
=
\begin{cases}
\left[\frac{ f''(|x|)}{f'(|x|)} +(d-1) \left(\frac{f'(|x|)}{f(|x|)}-\frac{1}{|x|} \right) \right]\frac{x}{|x|}, & x\neq 0,\\
0, &x=0,
\end{cases}
\end{equation}
whence we can compute
\begin{align*}
\frac{1}{t}\left[ \langle\nabla \log \det (\nabla h(tv)),v \rangle-\langle\nabla \log \det (\nabla h(0)),v \rangle\right]
&=\frac{1}{t}\left[ \frac{f''(t)}{f'(t)} +(d-1) \left(\frac{f'(t)}{f(t)}-\frac{1}{t} \right) \right]
\Big\langle \frac{tv}{|tv|}, v \Big\rangle\\
&=\frac{1}{t}\left[ \frac{f''(t)}{f'(t)} +(d-1) \left(\frac{f'(t)}{f(t)}-\frac{1}{t} \right) \right].
\end{align*}
In the case $f=f^{(1)}$ we have
$$f(0)=0,\quad f'(0)=\frac{b\re }{2}, \quad f''(0)=0, \quad f'''(0)=b^3 \re,$$
and thus using Taylor expansions
\begin{align*}
&\lim_{t\to 0}\frac{1}{t}\left[ \frac{f''(t)}{f'(t)} +(d-1) \left(\frac{f'(t)}{f(t)}-\frac{1}{t} \right) \right]\\
&= 0+ (d-1)\lim_{t\to 0}\frac{1}{t^2 f(t)}  [tf'(t) - f(t) ]\\
&=(d-1)\lim_{t\to 0}\frac{1}{t^3 f'(0)}  
\left( tf'(0) + t^2 f''(0)+ \frac{t^3}{2} f'''(0) - f(0)-tf'(0)-\frac{t^2}{2} f''(0)-\frac{t^3}{6} f'''(0) + o(t^3)\right)\\
&=(d-1)\lim_{t\to 0}\frac{1}{t^3 f'(0)}  
\left( \frac{t^3}{2} f'''(0)-\frac{t^3}{6} f'''(0) + o(t^3)\right)
=\frac{(d-1)}{3} \frac{f'''(0)}{f'(0)}. 
\end{align*}
In the case $f=f^{(2)}$, we have for $t>0$ small enough
\begin{align*}
&\frac{1}{t}\left[ \frac{f''(t)}{f'(t)} +(d-1) \left(\frac{f'(t)}{f(t)}-\frac{1}{t} \right) \right]= 0.
\end{align*}
Thus overall $t\mapsto \langle\nabla \log \det (\nabla h(tv)),v\rangle$ is differentiable at $t=0$ and thus locally Lipshitz. 

We now deal with the second term of \eqref{eq:gradUh}. 
From \eqref{eq:geyer11} we have 
\begin{align*}
\lefteqn{\frac{1}{t}\left[\left\langle 
 \nabla h(tv)\nabla U(h(tv)),v\right\rangle -  \left\langle\nabla h(0)\nabla U(0),v\right\rangle \right]}\\
&= \frac{1}{t}\left[\left\langle\nabla h(tv)\nabla U(h(tv)),v\right\rangle -   f'(0)\langle \nabla U(0),v\rangle\right]\\
&= \frac{1}{t}\left[
\frac{f(t )}{t}\langle \nabla U(h(tv)),v\rangle-   f'(0)\langle \nabla U(0),v\rangle\right]
+ \frac{1}{t}\biggl[
f^\prime\bigl(t\bigr) - \frac{f(t)}{t} \biggr]
\frac{\langle tv,v\rangle \langle \nabla U(h(tv)),tv\rangle}{ t ^2}\\
&= I_1+I_2. 
\end{align*}
For the first term we have
\begin{align*}
I_1
&= \frac{1}{t} \frac{f(t)-tf'(0)}{t} \langle \nabla U(h(tv)),v\rangle
+ \frac{1}{t}f'(0) \left[ \langle\nabla U(h(tv)),v\rangle-\nabla U(0),v\rangle\right].
\end{align*}
Since $U$ satisfies \eqref{eq:c2abs} and $h$ is differentiable, the second term of $I_1$ clearly converges. The first term also converges since $\nabla U$ is continuous and 
$$\frac{f(t)-tf'(0)}{t^2} = \frac{f(0)+tf'(0)+t^2f''(0)+o(t^2)-tf'(0)}{t^2}.$$
For the second term we have
\begin{align*}
I_2
&=\frac{1}{t}\biggl[
f^\prime\bigl(t\bigr) - \frac{f(t)}{t} \biggr]
\langle \nabla U(h(tv)),v\rangle\\
&=\frac{f'(t)t-f(t)}{t^2}
 \langle \nabla U(h(tv)),v\rangle\\
&=\frac{f'(0)t+f''(0)t^2+o(t^2)-f(0)-tf'(0)-\frac{t^2}{2}f''(0)}{t^2}
\langle v,v\rangle \langle \nabla U(h(tv)),v\rangle\\
&\to 0.
\end{align*}
It follows that $t\mapsto \langle 
 \nabla h(x+tv)\nabla U(h(x+tv)),v\rangle$ is differentiable at $t=0$.
 
%
%
\textit{Checking Assumption~\eqref{eq:integrability}.}
For both $h = h^{(1)}$ and $h = h^{(2)}$, a change of variable leads to
\begin{align}
\int \pi_h(y) |\nabla U_h(y)| \rd y &= \int \pi_h(h^{-1}(x)) | \nabla h^{-1}(x) | |\nabla U_h(h^{-1}(x))| \rd x \nonumber\\
&= \int \pi(h(h^{-1}(x))) |\nabla U_h(h^{-1}(x))| |\nabla h(h^{-1}(x)) | |\nabla h^{-1}(x)| \rd x \nonumber \\
&= \int \pi(x) |\nabla U_h(h^{-1}(x))|  \rd x \\
&\le \int \pi(x) [ | \nabla\{h\}(h^{-1}(x)) \nabla U(x) | + |\nabla \log \det (\nabla\{h\}(h^{-1}(x)))| ] \rd x \nonumber \\
&\le \int \pi(x) [ \| \nabla\{h\}(h^{-1}(x)) \| | \nabla U(x) |  + |\nabla \log \det (\nabla\{h\}(h^{-1}(x)))| ] \rd x. \label{eq:two-terms-for-integrability-bound}
\end{align}
Here for clarity we  use the notation  $\nabla\{\cdot\}(x)$ for the gradient of the function in the bracket evaluated at $x$ and we will similarly use $\Delta\{\cdot\}(x)$ for its Hessian. 
We begin with the first term in \eqref{eq:two-terms-for-integrability-bound}. Under the assumptions of Theorem~\ref{thm:subgaussian_tdist_i} we have, for $|x| > R$ and some constant $C>0$, that $|\nabla U(x)|\leq C|x|^{-1}$ and thus
\begin{align*}
\int \pi(x)  | \nabla U(x) | \| \nabla\{h\}(h^{-1}(x)) \| \rd x
&\le C + C \int_{|x|>R} \pi(x) \frac{1}{|x|} f(|h^{-1}(x)|) \rd x \\
&\leq C + C \int_{|x|>R} \pi(x) \frac{1}{|x|} |x| \rd x \leq 2C,
\end{align*}
since clearly $f(|h^{-1}(x)|)=|x|$.

Under the assumptions of Theorem~\ref{thm:subgaussian_tdist_ii}, 
 by Assumption~\ref{assumption:directionality2}, we can assume that there exists $K>0$ such that if $|x|>K$ then $\langle x, \nabla U(x)\rangle \geq C|x|^\beta$ for some $C>0$. Thus for $|x|$ large enough, say $K/|x|<1/2$,  we have
\begin{align}
U(x)
&= U\left(K \frac{x}{|x|}\right) + \int_{t=K/|x|}^1 \frac{\rd U(tx)}{\rd t} \rd t\notag\\
&\geq  U\left(K \frac{x}{|x|}\right) + \int_{t=1/2}^1 \frac{1}{t}\langle \nabla U(tx), tx\rangle \rd t\notag\\
&\geq  U\left(K \frac{x}{|x|}\right) + C\int_{t=1/2}^1 \frac{1}{t} t^\beta |x|^\beta  \rd t \geq C|x|^\beta,\label{eq:lowerboundonU}
\end{align}
since $U\geq 0$. Therefore
\begin{align*}
\int \pi(x)  | \nabla U(x) | \| \nabla\{h\}(h^{-1}(x)) \| \rd x &\le C \Big[1+ \int_{|x|>R} \pi(x) |x|^{\beta - 1} f(|h^{-1}(x)|) \rd x \Big]\\
&\leq C\Big[ 1+ \int_{|x|>R} \re^{-C|x|^\beta} |x|^{\beta} \rd x\Big]<\infty.
\end{align*}

For the second term of \eqref{eq:two-terms-for-integrability-bound}, 
let
$$L'(x) := \left| \nabla \log \det (\nabla h(x))\right|.$$ 
From \eqref{eq:geyer13} it follows easily that 
 $L'$ is bounded for both $h = h^{(1)}$ and $h = h^{(2)}$, and thus
\begin{align*}
\int \pi(x)  |\nabla \log \det (\nabla\{h\}(h^{-1}(x)))|  \rd x &<\infty.
\end{align*}
\noindent\textit{Checking Assumption~\eqref{eq:growth_condition}.} For $h=h^{(1)}$, notice that by \cite[Lemma~4]{J_G_12}, and the fact that $h(\cdot)$ is isotropic in the sense of \cite{J_G_12}, it follows that
$$\varlimsup_{|y|\to \infty} \left| \nabla \log \det (\nabla h(y))\right| < C,$$
for some $C>0$. 
Therefore
\begin{align*}
|\nabla U_h(y)|
&\leq  \left| \nabla h(y) \nabla U(h(y))\right| + \left|\nabla \log \det (\nabla h(y))\right|\\
&\leq  \left\| \nabla h(y)\right\| \left|\nabla U(h(y))\right| + C
\intertext{and using Assumption~\ref{assumption:tail} and \eqref{eq:geyer11}}
&\leq  C\left\| \nabla h(y)\right\|/|h(y)| + C\leq C,
\end{align*}
since $\|\nabla h(y)\|\leq C |h(y)|$. Thus it follows that
\begin{equation*}
\frac{\re^{U_h(y)/2}}{\sqrt{ |\nabla U_h(y)|}} \geq C\re^{U_h(y)/2} \to \infty,
\end{equation*}
as $|y|\to \infty$ since $\re^{-U_h(y)}$ is integrable.

On the other for $h=h^{(2)}$ notice that by \cite[Lemma~2]{J_G_12}, and the fact that $h(\cdot)$ is isotropic in the sense of \cite{J_G_12}, we obtain
\begin{equation}
\varlimsup_{|y|\to \infty} \left| \nabla \log \det (\nabla h(y))\right| =0.
\label{eq:grad_log_det_h2_vanishes}
\end{equation}
Therefore
\begin{align*}
|\nabla U_h(y)|
&\leq  \left| \nabla h(y) \nabla U(h(y))\right| + \left|\nabla \log \det (\nabla h(y))\right|\\
&\leq  \left\| \nabla h(y)\right\| \left|\nabla U(h(y))\right| + C.
\end{align*}
From \eqref{eq:geyer11} it follows that $\|\nabla h(y)\|\leq C|y|^{p-1}$. Therefore, using Assumption~\ref{assumption:tail2}  
\begin{align*}
|\nabla U_h(y)|
&\leq  C\left\| \nabla h(y)\right\| |h(y)|^{\beta-1} + C\\
&\leq  C|y|^{p-1+p\beta - p}= C|y|^{p\beta - 1}.
\end{align*}
Thus
\begin{equation*}
{\frac{\re^{U_h(y)/2}}{\sqrt{ |\nabla U_h(y)|}}}
\geq C \frac{\re^{U(h(y))/2}}{\det\left( \nabla h(y)\right)\sqrt{|y|^{p\beta-1}}}.
\end{equation*}
Finally, recalling \eqref{eq:lowerboundonU}, for $|x|$ large enough, say $K/|x|<1/2$,  we have
$U(x)\geq C|x|^\beta$.
Since by definition $h(y) \sim |y|^p$, and from \eqref{eq:geyer13} $\det(\nabla h (y))$ grows at most polynomially, we obtain 
\begin{equation*}
\frac{\re^{U(h(y))/2}}{\det\left( \nabla h(y)\right)\sqrt{|y|^{p\beta-1}}} \geq \frac{\re^{C|h(y)|^\beta /2}}{\det\left( \nabla h(y)\right)\sqrt{|y|^{p\beta-1}}} 
\geq \frac{\re^{C|y|^{p\beta} /2}}{\det\left( \nabla h(y)\right)\sqrt{|y|^{p\beta-1}}} 
\to \infty.\qedhere
\end{equation*}
\end{proof}

\begin{proof}[Proof of Theorem~{\ref{thm:subgaussian_tdist_i}}]	
For notational simplicity, we assume $b = 1$ but the argument can be generalized to other values. We start by establishing the first condition of Theorem~\ref{thm:subgaussian_B}, i.e. that $U_h$ satisfies our definition of exponential tail behaviour. In the remaining, assume $|x| > b^{-1} = 1$.

By Assumption~\ref{assumption:tail} and Cauchy-Schwartz, we have for $|x|$ large enough
\[
\left| \frac{\langle x, \nabla U(x) \rangle}{|x|} \right| \le |\nabla U(x)| \le \frac{c_1}{|x|},
\]
hence $\pi$ is a \emph{sub-exponentially light density} as defined in \cite[p. 3052]{J_G_12}. This combined with Assumption~\ref{assumption:directionality}, which is equivalent to \cite[Eq. (17)]{J_G_12}, means that we can apply \cite[Theorem 3]{J_G_12} to obtain that $\pi_h$ is an \emph{exponentially light density} as defined in \cite[p. 3052]{J_G_12}. Namely there is a negative constant $c_0 < 0$ such that
\[
\varlimsup_{|x|\to \infty} \frac{- \langle x, \nabla U_h(x)\rangle}{|x|} = c_0 < 0.
\]
Applying Cauchy-Schwartz again, we obtain
\[
0 < -c_0 = \varliminf_{|x|\to \infty} \frac{\langle x, \nabla U_h(x)\rangle}{|x|} \le \varliminf_{|x|\to \infty} |\nabla U_h(x) |,
\]
which establishes the first condition of Theorem~\ref{thm:subgaussian_B}. 

We now turn our attention to the Hessian condition of Theorem~\ref{thm:subgaussian_B}. We first decompose the norm of the Hessian as follows:
\begin{equation}\label{eq:hessiandecomp}
\| \Delta U_h(x) \| \le \| \Delta \{U\circ h\}(x) \| + \| \Delta\{ \log\det(\nabla h(x))\}(x) \|.
\end{equation}
From \cite[Lemma 1]{J_G_12}, we have for $|x|\geq 1$
\[
\log\det(\nabla h(x)) = |x| + (d-1)[ \log(e^{|x|} - e/3) - \log(|x|)] =: L(|x|),
\]
hence 

\[
\Delta\{ \log\det(\nabla h(x))\}(x) = \frac{L''(|x|)}{|x|^2} x x^T + \frac{L'(|x|)}{|x|} I_d - \frac{L'(|x|)}{|x|^3} x x^T.
\]
We have
\begin{align*}
L'(r) &=  1 + (d-1) \left[\frac{e^r}{e^r - e/3} - \frac{1}{r}\right],\\
L''(r) &= (1-d) \left[\frac{e^{r+1}/3}{(e^r - e/3)^2} + \frac{1}{r^2}\right],
\end{align*}
so $|L'(r)| \to d$, $|L''(r)| \to 0$ and therefore
\begin{align*}
\varlimsup_{|x|\to \infty} \| \Delta U_h(x) \|  &\le \varlimsup_{|x|\to \infty} \| \Delta \{U\circ h\}(x) \| + \varlimsup_{|x|\to \infty} \| \Delta\{ \log\det(\nabla h(x))\}(x) \| = \varlimsup_{|x|\to \infty} \|\Delta \{U\circ h\}(x) \|.
\end{align*}

To control this remaining term, we bound the operator norm with the Frobenius norm and write
\begin{align*}
\varlimsup_{|x|\to \infty} \| \Delta \{U\circ h\}(x) \|^2 &\le \varlimsup_{|x|\to \infty} \sum_{i=1}^d \sum_{j=1}^d |\partial_i \partial_j \{U\circ h\}(x) |^2 \\
&= \varlimsup_{|x|\to \infty} \sum_{i=1}^d \sum_{j=1}^d \left| \partial_i \left\{ \sum_{k=1}^d \partial_k\{U\}(h(x)) \partial_j\{h_k\}(x) \right\} \right|^2,
\end{align*}
where we write $\partial_i\{\cdot\}(x)$ as a shorthand for the $i$-th partial derivative,  $(\nabla\{\cdot\}(x))_i$.

It is enough to bound the $d^2$ expressions of the form
\begin{align}\label{eq:two-terms}
\left| \partial_i \left\{ \sum_{k=1}^d \partial_k\{U\}(h(x))  \partial_j\{h_k\}(x)  \right\} \right| 
&\leq \sum_{k=1}^d \Big[|  \partial_i \{ \partial_k\{U\}\circ h \}(x) \}  \partial_j\{h_k\}(x)  | \\
& \qquad + |\partial_k\{U\}(h(x)) \partial_i\{  \partial_j\{h_k\}(x) \} | \} \Big].\nonumber
\end{align}
The first term in Equation~(\ref{eq:two-terms}) is controlled as follows:
\begin{align}\label{eq:ddterms}
| \partial_i \{ \partial_k\{U\}\circ h \}(x) \partial_j\{h_k\}(x) | \le | \partial_j\{h_k\}(x) | \sum_{m=1}^d | \partial_m \partial_k\{U\}(h(x))| |\partial_i\{h_m\}(x) |.
\end{align}
Using again \cite[Lemma 1]{J_G_12}, and the fact that  $|h(x)| = f(|x|) \le f'(|x|)$, for $|x|$ large enough,  
\begin{align}
|\partial_i\{h_j\}(x)| &= \left| \frac{f(|x|)}{|x|} {\mathbf 1}[i=j] + \left[ f'(|x|) - \frac{f(|x|)}{|x|} \right] \frac{x_i x_j}{|x|^2} \right| \le 3 f(|x|)\label{eq:boundgradh},
\end{align}
hence using Assumption~\ref{assumption:curvature}, for $|x|$ large enough,  
\begin{align*}
| \partial_i \{ \partial_k\{U\}\circ h \}(x) \partial_j\{h_k\}(x)  | \le d \frac{c_2}{(h(|x|))^2} (3 f(|x|))^2.
\end{align*}

The second term in Equation~(\ref{eq:two-terms}) is controlled similarly, this time using Assumption~\ref{assumption:tail}, for $|x|$ large enough,  
\begin{align}\label{eq:ddterms2}
|\partial_k\{U\}(h(x)) \partial_i\{  \partial_j\{h_k\}(x) \} | \le \frac{c_1}{h(|x|)} (8f(|x|)),
\end{align}
since it follows from \cite[Lemma 1]{J_G_12} that
\begin{equation*}
|\partial_i \partial_j \{h_k\}(x)| \le 8 f(|x|).\qedhere
\end{equation*}
\end{proof}

\begin{proof}[Proof of Theorem~\ref{thm:subgaussian_tdist_ii}]
Let $f:=f^{(2)}$, $h:=h^{(2)}$ given in \eqref{eq:fpoly} and \eqref{eq:defn_h} respectively. 
We need to check that the assumptions of Theorem~\ref{thm:supergaussian} are satisfied.
First we check that
$$\varliminf_{|x|\to\infty} \frac{ |\nabla U_h(x)|}{|x|} = \infty.$$
From \eqref{eq:gradUh}
and \eqref{eq:grad_log_det_h2_vanishes} it follows that 
\begin{align*}
\varliminf_{|x|\to\infty} \frac{|\nabla U_h (x)|}{|x|}
&= \varliminf_{|x|\to\infty} \frac{\left|\nabla h(x) \nabla U \left( h(x)\right) \right|}{|x|}.
\end{align*}
Recall from \cite[Lemma~1]{J_G_12} that for $x\neq 0$
\begin{equation}\label{eq:gradh}
\nabla h(x)= \frac{f(|x|)}{|x|} \mathds{1}_d + \bigg[ f'(|x|) - \frac{f(|x|)}{|x|} \bigg] \frac{x x^T}{|x|^2},
\end{equation}
where $\mathds{1}_d$ is the $d\times d$-identity matrix.
Therefore we have
\begin{align*}
\nabla h(x)\nabla U \left( h(x)\right)
&=\frac{f(|x|)}{|x|}\nabla U \left( h(x)\right)
+ \bigg[ f'(|x|) - \frac{f(|x|)}{|x|} \bigg] \left\langle \nabla U \left( h(x)\right), \frac{x}{|x|}\right\rangle
	\frac{x}{|x|}\\
&=f'(|x|)  \left\langle \nabla U \left( h(x)\right), \frac{x}{|x|}\right\rangle
	\frac{x}{|x|}
	+ \frac{f(|x|)}{|x|} P_x^{\perp} \nabla U \left( h(x)\right),
\end{align*}
where $P_x^{\perp}$ denotes the orthogonal projection on the plane normal to $x$.
Therefore, since by definition $h(x):=f(|x|)x/|x|$,  we have that
\begin{align*}
| \nabla h(x)\nabla U \left( h(x)\right)|
&\geq f'(|x|)  \left|\left\langle \nabla U \left( h(x)\right), \frac{x}{|x|}\right\rangle \right|\\
&= \frac{f'(|x|)}{f(|x|)}  \left|\left\langle \nabla U \left( h(x)\right), h(x)\right\rangle \right|\\
&= \frac{f'(|x|)}{f(|x|)} |h(x)|^\beta \left[ |h(x)|^{-\beta} \left|\left\langle \nabla U \left( h(x)\right), h(x)\right\rangle \right|\right].
\end{align*}
Since $|h(x)|\to \infty$ as $|x|\to \infty$, Assumption~\ref{assumption:directionality2} and the definitions of $f$ and $h$ yield
\begin{align}
\varliminf_{|x|\to\infty} \frac{|\nabla U_h (x)|}{|x|}
&\geq \varliminf_{|x|\to\infty} |x|^{-1} \left\{\frac{f'(|x|)}{f(|x|)} |h(x)|^\beta \left[ |h(x)|^{-\beta} \left|\left\langle \nabla U \left( h(x)\right), h(x)\right\rangle \right|\right] \right\}\notag\\
&\geq C \varliminf_{|x|\to\infty} |x|^{-1-1+\beta p}=C|x|^{\beta p -2} = \infty,\label{eq:lowerboundgradUh}
\end{align}
since $\beta p >2$.

Finally we need to check, that for some $\epsilon>0$ we have
$$\lim_{|x|\to\infty} \frac{\| \Delta U_h (x)\|}{|\nabla U_h (x)|}|x|^{\epsilon} = 0.$$
Recall the expression \eqref{eq:hessiandecomp}. It follows easily from the definitions of $h$, $f$ and \cite[Lemma~1, Eq.(13)]{J_G_12} that
$$\lim_{|x|\to\infty} \|\Delta \log \det (\nabla h) (x)\| =0.$$
Therefore we focus on the first term of \eqref{eq:hessiandecomp}. As in the proof of the first part of the Theorem, we need essentially to control terms of the form \eqref{eq:ddterms} and terms of the form \eqref{eq:ddterms2}. To this end, using Assumption~\ref{assumption:curvature2}, we estimate
\begin{align*}
| \partial_i \{ \partial_k\{U\} \circ h \}(x) \partial_j\{h_k\}(x) |
&\leq |\partial_j\{h_k\}(x)| \sum_{m=1}^d | \partial_m \partial_k\{U\}(h(x))| |\partial_i\{h_m\}(x)|\\
&\leq C |x|^{2p-2} |h(x)|^{\beta-2}  \leq C|x|^{2p-2+p\beta-2p} = C|x|^{p\beta-2},
\end{align*}
since from \eqref{eq:geyer11} and the definitions of $f$ and $h$ one can easily show that $|\partial_i h_k(x)|\leq |x|^{p-1}$. 
On the other hand, from Assumption~\ref{assumption:tail2} and the fact that $|\partial_i \partial_j\{h_k\}(x)|\leq C|x|^{p-2}$, which follows again from \eqref{eq:geyer11}, the remaining terms can be estimated through
\begin{align*}
|\partial_k\{U\}(h(x)) \partial_i\{ \partial_j\{h_k\}\}(x)|
&\leq |h(x)|^{\beta-1} |x|^{p-2} \leq C |x|^{p\beta -2}.
\end{align*}
Therefore combining the above with the arguments leading to \eqref{eq:lowerboundgradUh} we have that as $|x|\to \infty$
\begin{equation*}
\frac{\|\Delta U_h(x)\|}{|\nabla U_h(x)|}{|x|^\epsilon}
\leq C\frac{|x|^{\beta p -2}}{|x|^{\beta p -1} }|x|^{\epsilon}\to 0,\qedhere
\end{equation*}
\end{proof}

\begin{proof}[Proof of Theorem~\ref{thm:clt}]
Notice that if $V$ satisfies \eqref{eq:driftCondition} then for any $\varepsilon\in(0,1)$, by Jensen's inequality it follows that $\rE^z\left[ V^{1-\varepsilon}(Z_t)\right]\leq \rE^z\left[ V(Z_t)\right]^{1-\varepsilon}$. Since $\rE^z\left[ V^{\varepsilon}(Z_0)\right]=V(z)^{\varepsilon}$, it follows that
\begin{align*}
\mathcal{L} V^{1-\varepsilon}(z)
&= \frac{\rd}{\rd t} \rE^z\left[ V^{1-\varepsilon}(Z_t)\right] \bigg|_{t=0}\\
&\leq  \frac{\rd}{\rd t} \rE^z\left[ V(Z_t)\right]^{1-\varepsilon} \bigg|_{t=0}\\
&=  (1-\varepsilon)  \frac{1}{\rE^z[V(Z_t)]^\varepsilon} \frac{\rd}{\rd t}\rE^z\left[ V(Z_t)\right]\bigg|_{t=0}\\
&=(1-\varepsilon) \frac{\mathcal{L}V(z)}{V(z)^{\varepsilon}}\\
&\leq -(1-\varepsilon)\delta \frac{V(z)}{V(z)^\varepsilon} + \frac{b \mathds{1}_C(z)}{V(z)^\varepsilon},
\end{align*}
and thus $W(z):=V(z)^{1-\varepsilon}$ also satisfies \eqref{eq:driftCondition}. The result now follows from \cite[Theorem~4.3]{G_M_96}.
\end{proof}

\section*{Acknowledgements}
The authors are grateful to Fran\c{c}ois Dufour for useful discussions and pointing them towards \cite{C_90} and to Pierre Del Moral for having brought their attention to reference \cite{M_H_12}.


\end{document}